\renewcommand\footnotetextcopyrightpermission[1]{} 
\newenvironment{ourmatrix}{\begin{pmatrix}}{\end{pmatrix}}
\newenvironment{ourSmallMatrix}{\left(\begin{smallmatrix}}{\end{smallmatrix}\right)}
\newcommand{\ignoretext}[1]{}
\newcommand{\inca}{\textsc{IncA}}
\newcommand{\gopa}{GOPA}
\newcommand{\cordp}{CorDP-DME}
\newcommand{\muffliato}{Muffliato}
\newcommand{\partySet}{P}
\newcommand{\partyCnt}{n}
\newcommand{\aParty}{i}
\newcommand{\partyA}{\aParty}
\newcommand{\partyB}{j}
\newcommand{\R}{\mathbb{R}}
\newcommand{\alg}{\mathcal{A}}
\newcommand{\algout}{Y}
\renewcommand{\epsilon}{\varepsilon}  
\newcommand{\advView}{\mathcal{V}}
\newcommand{\diag}[1]{\mathrm{diag}(#1)}
\newcommand{\corrnodes}{\mathcal{C}}  
\newcommand{\onlSymb}{\mathcal{O}} 
\newcommand{\onlNodes}[1]{\onlSymb^{(#1)}} 
\newcommand{\wOnl}{w_{\onlSymb}}
\newcommand{\nOnlHonest}{n_\onlSymb}
\newcommand{\cancelNoiseSymb}{\eta}
\newcommand{\cancelNoiseMat}{\cancelNoiseSymb}
\newcommand{\cancelNoiseMatHonest}{\cancelNoiseMat^\honestSymb}
\newcommand{\cancelNoise}[2]{\cancelNoiseSymb_{#1,#2}} 
\newcommand{\cancelNoiseVecParty}[1]{\noiseMatrixEl{\aParty}{:}}
\newcommand{\cancelNoiseVec}{\noiseMatrix_{(:)}}
\newcommand{\noiseVec}{\begin{ourSmallMatrix}
		\indNoiseVecH \\
		\cancelNoiseVecHonest
\end{ourSmallMatrix}}
\newcommand{\indnoise}[1]{\indNoiseVec_{#1}}
\newcommand{\N}{\mathcal{N}}
\newcommand{\sdCancel}{\sigma_\Delta}
\newcommand{\indNoiseDecor}{\star}
\newcommand{\sdInd}{\sigma_\indNoiseDecor}
\newcommand{\oneVec}{\mathbbm{1}}
\newcommand{\project}[1]{{{\mathsf{b}}_{#1}}} 
\newcommand{\xVectSpace}{\mathbb{X}}
\newcommand{\xSpace}{\mathcal{X}}
\newcommand{\ySpace}{\mathcal{Y}}
\newcommand{\adjASup}{(A)}
\newcommand{\adjBSup}{(B)}
\newcommand{\xValue}{u} 
\newcommand{\xValueA}{\xValue^{\adjASup}}
\newcommand{\xValueB}{\xValue^{\adjBSup}}
\newcommand{\valPriv}[1]{\vectPriv_{#1}}
\newcommand{\vectPrivSymb}{\mathsf{x}} 
\newcommand{\vectPriv}{\vectPrivSymb}
\newcommand{\vectPrivNoisySymb}{\tilde{\vectPrivSymb}}
\newcommand{\vectPrivNoisy}{\vectPrivNoisySymb}
\newcommand{\valPrivNoisy}[1]{\vectPrivNoisy_{#1}}
\newcommand{\avg}{\bar{\vectPriv}}
\newcommand{\transitionset}{\mathcal{W}}
\newcommand{\iterCnt}{T}
\newcommand{\iterSet}{[\iterCnt]}
\newcommand{\initIterSet}{[0,\iterCnt]}
\newcommand{\anIter}{t}
\newcommand{\edgeSymb}{E}
\newcommand{\iterSimpEdges}[1]{\edgeSymb_{#1}}
\newcommand{\iterSimpHiddenEdges}[1]{\iterSimpEdges{#1}^\honestSymb}
\newcommand{\flatGraph}{G}
\newcommand{\transMatHidden}{\aTransMat^H}
\newcommand{\transMatHiddenEl}[2]{\aTransMat^H_{#1,#2}}
\newcommand{\flatGraphHidden}{\flatGraph^\honestSymb}
\newcommand{\outNeigh}[2]{N^{(#2)}_{#1\to}}
\newcommand{\incNeigh}[2]{N^{(#2)}_{#1\gets}}
\newcommand{\exec}{\mathcal{E}}
\newcommand{\onlDec}[1]{#1^\onlSymb}
\newcommand{\aTransMat}{W}
\newcommand{\transMat}[1]{\aTransMat_{#1}}
\newcommand{\transMatOnl}[1]{\onlDec{\transMat{#1}}}
\newcommand{\transMatOnlEl}[3]{\onlDec{\transMatEl{#1}{#2}{#3}}}
\newcommand{\transMatEl}[3]{\aTransMat_{#1;#2,#3}}
\newcommand{\transMatTab}{\transMatEl{\anIter}{\partyA}{\partyB}}
\newcommand{\cancelPhase}{\mbox{Mixing}}
\newcommand{\initPhase}{Initialization}
\newcommand{\dissemPhase}{Dissemination} 
\newcommand{\valPart}[2]{z_{#1,#2}}
\newcommand{\partDistrSymb}{\mathcal{D}}
\newcommand{\partDistr}[1]{\partDistrSymb(#1)}
\newcommand{\earlyInjDecor}{\mathrm{EI}}
\newcommand{\incremInjDecor}{\mathrm{Inc}}
\newcommand{\partDistrEarlySymb}{\partDistrSymb_{\earlyInjDecor}}
\newcommand{\partDistrEarly}[1]{\partDistrEarlySymb(#1)}
\newcommand{\partDistrIncremSymb}{\partDistrSymb_{\incremInjDecor}}
\newcommand{\partDistrIncrem}[1]{\partDistrIncremSymb(#1)}
\newcommand{\partDistrOnlSymb}{\partDistrSymb^\onlSymb}
\newcommand{\partDistrOnlParty}[1]{\partDistrOnlSymb(#1, :)}
\newcommand{\partDistrOnl}[2]{\partDistrOnlSymb(#1, #2)}
\newcommand{\partDistrGaussSymb}{\partDistrSymb_{Gauss}}
\newcommand{\partDistrGauss}[3]{\partDistrGaussSymb^{(#2,#3)}(#1)}
\newcommand{\probDistrSpace}[1]{\mathcal{P}\left(#1\right)}
\newcommand{\yVecSymb}{y} 
\newcommand{\yVec}{\yVecSymb}
\newcommand{\yVecIter}[1]{\yVecSymb^{(#1)}}
\newcommand{\yVecIterPrime}[1]{\yVecSymb^{(#1)\prime}}
\newcommand{\yValPrime}[2]{\yVecSymb_{#1}^{(#2)\prime}}
\newcommand{\yVal}[2]{\yVecSymb_{#1}^{(#2)}}
\newcommand{\indNoiseVec}{\noiseMatrix^\indNoiseDecor}
\newcommand{\bigVecPart}[1]{\zeta_{#1}}
\newcommand{\matrixPart}{Z}
\newcommand{\matrixPartOnl}[1]{\matrixPart^{(#1)}}
\newcommand{\matrixPartOnlEl}[3]{\matrixPart^{(#1)}_{#2,#3}}
\newcommand{\matrixPartSpecialOnlEl}[3]{\bar{\matrixPart}^{(#1)}_{#2,#3}}
\newcommand{\matrixPartSpecialOnl}[1]{\bar{\matrixPart}^{(#1)}}
\newcommand{\matrixPartEl}[2]{\matrixPart_{#1,#2}}
\newcommand{\xCoeffVec}{c}
\newcommand{\xCoeff}[1]{\xCoeffVec_{#1}}
\newcommand{\xCoeffOnl}[2]{\xCoeffVec^{(#1)}_{#2}}
\newcommand{\xCoeffOnlVec}[1]{\xCoeffVec^{(#1)}}
\newcommand{\totalWeightVec}{w} 
\newcommand{\totalWeightParty}[1]{\totalWeightVec_{#1}}
\newcommand{\noiseMatrix}{\eta}
\newcommand{\noiseMatrixEl}[2]{\noiseMatrix_{#1,#2}}
\newcommand{\xMatSymb}{B}
\newcommand{\nMatSymb}{A}
\newcommand{\xMat}{\xMatSymb^{*}}
\newcommand{\nMat}{\nMatSymb^{*}}
\newcommand{\xMatIter}[1]{\xMatSymb_{#1}}
\newcommand{\nMatIter}[1]{\nMatSymb_{#1}}
\newcommand{\aggPartySymb}{Y}
\newcommand{\aggParty}[1]{\aggPartySymb^{(#1)}}
\newcommand{\aggPartyEl}[3]{\aggPartySymb_{#1,#2}^{(#3)}}
\newcommand{\honestSymb}{H}
\newcommand{\corrSymb}{\corrnodes}
\newcommand{\propDrop}{\gamma}
\newcommand{\propCor}{\rho}
\newcommand{\honestScalar}[1]{#1_\honestSymb}
\newcommand{\honestVector}[1]{#1^\honestSymb}
\newcommand{\corrVector}[1]{#1^\corrSymb}
\newcommand{\vectPrivHonest}{\honestVector{\vectPriv}}
\newcommand{\vectPrivCorr}{\corrVector{\vectPriv}}
\newcommand{\cancelNoiseVecHonest}{\honestVector{\cancelNoiseVec}}
\newcommand{\cancelNoiseVecCorr}{\corrVector{\cancelNoiseVec}}
\newcommand{\yVecCorr}{\yVecSymb_{\advView}}
\newcommand{\xMatCorr}{\xMatSymb}
\newcommand{\nMatCorr}{\nMatSymb}
\newcommand{\partySetHonest}{\partySet^\honestSymb}
\newcommand{\coalSet}{J}
\newcommand{\nHonest}{\honestScalar{n}}
\newcommand{\hidden}{\mathcal{H}}
\newcommand{\nObs}{m}
\newcommand{\advValView}{\advView_{val}}
\newcommand{\varNoise}{\Sigma_{\noiseMatrix}}
\newcommand{\corrParty}{c}
\newcommand{\nNoise}{k}
\newcommand{\nNoiseHonest}{\honestScalar{\nNoise}}
\newcommand{\indNoiseVecH}{\noiseMatrix^{\indNoiseDecor \honestSymb}}
\newcommand{\indNoiseVecCorr}{\noiseMatrix^{\indNoiseDecor \corrSymb}}
\newcommand{\neighDatasetA}{\vectPriv^{\adjASup}}
\newcommand{\neighDatasetB}{\vectPriv^{\adjBSup}}
\newcommand{\indicator}[1]{\mathbb{I}\left[ #1 \right]}
\newcommand{\advViewP}[1]{\advView\left(#1\right)} 
\newcommand{\advViewExecFunc}[1]{\advView_{#1}}
\newcommand{\advViewExec}[2]{\advViewExecFunc{#1}\left(#2\right)}
\newcommand{\xMatCorrCol}{h}
\newcommand{\dummyVec}{v}
\newcommand{\dummyVecA}{\dummyVec_A}
\newcommand{\dummyVecB}{\dummyVec_B}
\newcommand{\dummyTailBoundA}{w}
\newcommand{\dummyTailBoundB}{\gamma}
\newcommand{\sdTailBoundA}{\sigma_\dummyTailBoundA}
\newcommand{\varMatProd}{\Sigma}
\newcommand{\etaTransEl}[2]{G^{(#1,#2)}}
\newcommand{\edgeVec}[2]{a^{(#1,#2)}}
\newcommand{\edgeVecHat}[2]{\hat{a}^{(#1,#2)}}
\newcommand{\matLemmaA}[2]{\matLemmaASymb^{(#1,#2)}}
\newcommand{\matLemmaASymb}{M}
\newcommand{\noisyVec}{\tilde{\vectPriv}}
\newcommand{\noiseDiff}{\xi}
\newcommand{\indNoiseDiff}{\noiseDiff^\indNoiseDecor}
\newcommand{\cancelNoiseDiff}{\Delta}
\newcommand{\partDistTerm}[1]{\eta_{#1}}
\newcommand{\stripDecor}{{(:)}}
\newcommand{\cancelNoiseDiffVec}{\cancelNoiseDiff_\stripDecor}
\definecolor{darkblue}{rgb}{0.0, 0.0, 0.55}
\begin{document}
	
%

%

\title[Dropout-Robust Robust Mechanisms for DP-DME]{Dropout-Robust Mechanisms for Differentially Private and Fully Decentralized Mean Estimation}


\author{C\'{e}sar Sabater}
\affiliation{%
	\institution{Insa-Lyon}
	\city{Lyon}
	\country{France}}
\email{cesar.sabater@insa-lyon.fr}

\author{Sonia Ben Mokhtar}
\affiliation{%
	\institution{Insa-Lyon, CNRS}
	\city{Lyon}
	\country{France}}
\email{sonia.ben-mokhtar@cnrs.fr}

\author{Jan Ramon}
\affiliation{%
	\institution{Inria Lille}
	\city{Lille}
	\country{France}
}
\email{jan.ramon@inria.fr}


\renewcommand{\shortauthors}{Sabater et al.}

\begin{abstract}
  
Achieving differentially private computations in decentralized settings poses significant challenges, particularly regarding accuracy, communication cost, and robustness against information leakage. While cryptographic solutions offer promise, they often suffer from high communication overhead or require centralization in the presence of network failures. Conversely, existing fully decentralized approaches typically rely on relaxed adversarial models or pairwise noise cancellation, the latter suffering from substantial accuracy degradation if parties unexpectedly disconnect.  
  In this work, we propose \inca{}, a new protocol for fully decentralized mean estimation, a widely used primitive in data-intensive processing. Our protocol, which enforces differential privacy, requires no central orchestration and employs low-variance correlated noise, achieved by incrementally injecting sensitive information into the computation. First, we theoretically demonstrate that, when no parties permanently disconnect, our protocol achieves accuracy comparable to that of a centralized setting—already an improvement over most existing decentralized differentially private techniques. Second, we empirically show that our use of low-variance correlated noise significantly mitigates the accuracy loss experienced by existing techniques in the presence of dropouts.
\end{abstract}

\keywords{differential privacy, decentralized mean estimation, decentralized optimization}

\maketitle

\section{Introduction}


The training of machine learning models commonly relies on the centralized processing of large datasets. While this approach offers simplicity, it presents substantial privacy implications when sensitive data is involved, or when data aggregation is limited by sharing constraints. For instance, private companies or hospitals might be reluctent to share their data for jointly training prediction models due to competitive secrecy or juridic regulations.
For these reasons, algorithms where the data remains local to their owners have gained significant popularity in the last decade. One prominent example is Federated Learning~\cite{mcmahan2017communication, kairouz2021advances}, a framework in which participating entities iteratively perform local model training and subsequently transmit intermediate model updates to a central server for aggregation. While this architecture enhances data governance, it nonetheless presents residual challenges in terms of privacy and robustness. Indeed, computational processes relying on a large participant base introduce vulnerabilities to participant behavior, including failures or corruption by adversaries. This may compromise both the integrity of the outcome and participant privacy through inference attacks~\cite{melis2019exploiting,geiping2020inverting,mrini2024privacy,touat2024scrutinizing,kariyappa2023cocktail,pasquini2023security}. Furthermore, the centralized visibility of all model updates to a single party in federated learning establishes a high-risk single point of failure. 

To address the aforementioned limitations, entirely decentralized machine learning algorithms have been developed~\cite{hegedHus2021decentralized,beltran2023decentralized}. These solutions present a promising alternative to their federated counterparts, as they remove the need for any centralized entity. In this context, decentralized averaging algorithms, which are the primary focus of this paper, constitute a fundamental component enabling parties to average their model parameters or gradients towards model convergence.
Specifically, we concentrate on the canonical task of averaging a set of private values held by participants. Despite its apparent simplicity, averaging serves as a critical primitive in decentralized learning and various other machine learning and data mining tasks, including recommendation systems\cite{nicolas2025secure}, clustering, matrix factorization\cite{nicolas2024differentially}, decision trees, empirical cumulative distribution functions\cite{barczewski2025differentially} and linear regression. Fundamentally, its applicability extends to any task amenable to decomposition into local computations followed by private averaging.

Our objective is then to analyze the privacy of decentralized averaging algorithms through the lens of \emph{differential privacy (DP)}~\cite{dwork2006calibrating}. This framework has emerged as a gold standard in privacy research due to its solid theoretical guarantees. Nevertheless, demonstrating that an algorithm satisfies DP is often challenging, typically necessitating the injection of noise into computations, which degrades accuracy. This challenge is particularly acute within the local differential privacy (LDP) framework~\cite{duchi2013local,kasiviswanathan2011can,kairouz2014extremal,kairouz2016discrete,chen2020breaking}. In LDP, where data is fully privatized before it is used in collaborative computations,
the required noise levels are frequently prohibitively high, rendering computations practically unusable in many real-world scenarios.
The integration of cryptographic primitives, such as secure multiparty computation (specifically, Secure Aggregation \cite{bonawitz2017practical,bell2020secure}) or fully homomorphic encryption, into decentralized computations offers a promising solution to substantially reduce the noise levels required for achieving differential privacy (DP) guarantees. This approach can yield privacy-accuracy trade-offs comparable to those observed in central DP settings \cite{agarwal2021skellam,kairouz2021distributed,heikkila2025using,chen2022poisson,jayaraman2018distributed,dwork2006our}, where a single trusted party or curator is responsible for noise injection. However, a significant drawback of these cryptographic methods is their high computational cost, particularly as the number of involved parties increases, or their reliance on a trusted central entity for orchestration and fault tolerance (e.g., handling participant drop-outs).

In the decentralized setting, several techniques have been proposed to reduce the noise of local DP. These solutions rely on gossip algorithms in which exchanges are similar to the protocols proposed in \cite{boyd2006randomized}. However, they often rely on DP relaxations \cite{cyffers2022privacy,cyffers2022muffliato}, which can limit practical applicability, or necessitate correlated bounded noise (e.g., pairwise canceling Gaussian noise \cite{sabater2022accurate,allouah2024privacy,vithana2025correlated}). The latter approaches either incur high communication costs or demand the cancellation of high-variance noise, which substantially impacts computational efficiency, particularly in the presence of dropouts.

In our work, we propose protocols that (i) have the same privacy accuracy trade-offs of Central DP when no drop-outs are present, which is similar to the trusted curator model of DP, (ii) satisfy the classical notion of DP, (iii) are robust to a percentage of parties colluding to infer information of the honest participants and (iv) have a bounded impact in drop-outs thanks to the use of low variance correlated noise.

\subsection{Contributions}

Our contributions are the following: 
\begin{enumerate}
	\item We propose novel protocol for private averaging.
	It involves iterative exchanges of  messages in a way similar to the  gossip protocols in~\cite{boyd2006randomized} with the addition of correlated noise. The latter  ensures privacy with a minimal harm in accuracy. We provide a generic construction in which the private values as well as the noise injected in the computation are customizable, allowing  to choose the most adequate strategy for different scenarios. Our approach differs from the common technique of previous correlated noise approaches \cite{sabater2022accurate,allouah2024privacy,vithana2025correlated} and Secure Aggregation  \cite{dwork2006our,bonawitz2017practical,bell2020secure,taiello2024let,bell2023acorn}, that are based in pairwise noise masks shared between users. In our approach, each user knows its own noise terms and cancels them in progressive updates.  
	\item When parties do not drop out of the protocol, we prove differential privacy guarantees even if the adversary (i) corrupts a proportion of the participants to passively share their gathered information, (ii) observes a proportion the exchanges, and (iii) knows all the network interactions (i.e., who communicated with whom). 
	While proven to be differentially private, our protocol matches the utility of Secure Aggregation combined with local noise  \cite{agarwal2021skellam,kairouz2021distributed,heikkila2025using,chen2022poisson}  without the use cryptographic primitives or a requiring a central orchestrator. This privacy-utility trade-off is comparable to Central DP. 
	We provide two characterizations of our privacy guarantees: the first is more accurate while the second is more interpretable. For the latter, we show that our guarantees hold when a sufficiently large and diverse set of exchanges have been performed between honest parties.
	\item We prove that, given the graph that models the hidden exchanges between honest parties, certain topologies provide sufficient conditions for privacy or the lack of it. We provide positive  and negative results. The former are for strongly connected topologies.  The latter are for graphs where no party changes their neighbors across iterations and the adversary continually observes at least two honest parties. 

	\item We extend our privacy analysis to the case of dropouts, where correlated noise does not cancel as when all participants finish their contribution correctly. We show that our approach is especially resilient to dropouts when the injection of private values is done incrementally among the noise, therefore keeping correlated DP noise  smaller in size.
	The analysis is done both in theory and in practice. First, we show theoretically  that DP noise that is not supposed to cancel is bounded. Next we show empirically that correlated noise is smaller than with existing techniques in the same setting \cite{sabater2022accurate, cyffers2022muffliato, vithana2025correlated}. 
%
	
\end{enumerate} 

\subsection{Structure of the paper}
The rest of the paper is organized as follows. We first present preliminaries (Section~\ref{sec:prelim}). Then we present our protocol (Section~\ref{sec:protocol}) and its privacy analysis (Section~\ref{sec:privacy}). We further present our empirical evaluation (Section~\ref{sec:exp}) before presenting related work (Section~\ref{sec:relatedwork}) and concluding the paper (Section~\ref{sec:conclusion}).


\section{Preliminaries}
\label{sec:prelim}
We denote the set of integers between $a$ and $b$ with $[a,b]\coloneq \{a, a+1 \ldots b\}$ and the set of first $k$ positive integers by $[k] \coloneq [1, k]$. We define $\oneVec \coloneq (1,\dots, 1)^\top$ and $\project{i}$ is a vector of all $0$s except for the $i$th coordinate which is equal to $1$. The dimensions of $\oneVec$ and $\project{i}$ can be inferred from the context. Unless explicitly stated differently, all vectors are column vectors. $\indicator{\cdot}$ is the indicator function, i.e., $\indicator{True}=1$  and $\indicator{False}=0$.  For an $n \times m$ 
matrix $M$ and subsets of indices $J\subseteq [n]$ and $K \subseteq [m]$, $M_{J,K}$ is the matrix obtained from $M$ by taking the subsets of rows and columns indicated by $J$ and $K$ respectively. This extends analogously to vectors and other kinds of tensors. When used as sub-indexes `$:$' means the set of all indices and $-i$ means the set of all indices minus $i$, e.g., 
$M_{:,K} = M_{[n], K}$ 
and $M_{-i, K} = M_{[n]\setminus\{i\}, K}$. When we use ``$:$'' instead of a column index of a matrix, the result is a row vector (e.g., $M_{i,:}$ is a row vector). 

\subsection{Problem Statement}
\label{sec:prelim.problem}

We consider a set of parties $\partySet=[\partyCnt]$. Each party $\aParty \in \partySet$ has a private value $\valPriv{\aParty}\in\xSpace$ where $\xSpace$ is a convex set in a vector space $\xVectSpace$.
These parties want to collaboratively estimate ${\avg}=\frac{1}{n}\sum_{\aParty \in \partySet} \valPriv{\aParty}$ while keeping each value $\valPriv{\aParty}$ (differentially) private.
	While for simplicity of our explanation we average scalars, our approach can be easily extended 
	to vectors (such as machine learning models or gradients).

\subsection{Differential Privacy}
\label{sec:prelim.dp}

We evaluate the privacy of our protocols using the differential privacy framework (DP) \cite{Dwork2014}. This framework allows one to quantify how distinguishable is the output of a randomized algorithm on two \emph{neighboring} datasets. 
More formally, a dataset is a vector of elements of $\xSpace$, we denote the space of all datasets by $\xSpace^*$.  Two datasets $\neighDatasetA,\neighDatasetB\in\xSpace^*$ are neighboring if there are $\vectPriv^{(0)}\in\xSpace^*$, permutation matrices $Q^{\adjASup}$ and $Q^{\adjBSup}$, and $\xValueA,\xValueB\in\xSpace$ such that $\neighDatasetA=Q^{\adjASup}\left[\begin{array}{c}\xValueA \\ \vectPriv^{(0)} \end{array}\right]$ and $\neighDatasetB=Q^{\adjBSup} \left[\begin{array}{c}\xValueB \\ \vectPriv^{(0)}\end{array}\right]$.

\begin{definition}
	Let $\epsilon > 0$ and $\delta \in [0,1]$. A randomized algorithm $\alg:\xSpace^*\to\ySpace$ is $(\epsilon, \delta)$-\emph{DP} if for any pair of neighboring datasets $D, D'$ and any subset $\algout\subseteq \ySpace$ of possible outputs
	we have that 
	\begin{equation}
		\Pr ( \alg(D) \in \algout ) \le \exp(\epsilon) \Pr ( \alg(D') \in \algout) + \delta 
		\label{eq:dp.def} 
	\end{equation}
	where the probability is taken over the randomness of  $\alg$. 
	\label{def:dp}
\end{definition} 

In our problem, we set $\xSpace=[0,1]$ for the simplicity of our explanation.  Then, a dataset is a vector $\vectPriv = (x_\aParty)_{i=1}^n \in [0,1]^n$.
The output of $\alg$ is the set of all observations an adversary makes, which may exceed the intended input and could include intercepted messages or information obtained from corrupted parties.  This output therefore depends on the threat model considered.  
Consider a particular party $\aParty\in\partySet$.
One can consider two extreme cases.  First, if the threat model allows for all parties $\partySet\setminus \{\aParty\}$ to be corrupted, then the best $\aParty$ can do is to privatize $x_\aParty$ before starting anything, and the best utility which can be obtained for a given privacy level $(\epsilon,\delta)$ is the utility of Local DP  \cite{ldp}.  Second, if the threat model makes more favorable assumptions, the best possible utility which can be reached for a given privacy level is the utility of Central DP, where there exist a trusted entity to whom all parties can send their messages and the adversary cannot see them until this entity releases his output. 

\paragraph{Gaussian Mechanism} Our protocols use Gaussian noise to achieve differential privacy. Using the Gaussian Mechanism\cite{dwork2014algorithmic}, it is possible to compute $(\epsilon, \delta)$-DP estimations of averages with a mean squared error of $2\ln(1.25/\delta)/\epsilon^2 n^2$ in the central model. In the local model this error is $n$ times bigger.
\subsection{Threat Model}
\label{sec:prelim.threat}

\newcommand{\msgAbs}[1]{y_{#1}}
\newcommand{\transcript}{\mathcal{T}}
\newcommand{\msgCnt}{M}
\newcommand{\msgVal}{v}
\newcommand{\obsDPLabel}{Eavesdrop DP}
\newcommand{\obsDPAcron}{E-DP}
\newcommand{\obsDPLarge}[3]{(#1, #2, #3)-\obsDPLabel{}}
\newcommand{\obsDP}[3]{(#1, #2, #3)-\obsDPAcron{}}
\newcommand{\colDPLabel}{Collussion DP}
\newcommand{\colDPAcron}{C-DP}
\newcommand{\colDPLarge}[3]{(#1, #2, #3)-\colDPLabel{}}
\newcommand{\colDP}[3]{(#1, #2, #3)-\colDPAcron{}}

In our work, we assume all parties are \emph{semi-honest} in that they follow the protocol as prescribed. A stochastic decentralized algorithm $\alg$ that takes a dataset $D$ produces a transcript $\transcript(D) = ( \msgAbs{k} )_{k=1}^{\msgCnt}$ of all $\msgCnt$ messages exchanged by the protocol, where  $\msgAbs{k} = ((\partyA, \partyB),\msgVal ) \in (\partySet \times \partySet) \times \{0,1\}^*$ means that party $\partyA$ sent value $\msgVal$ to $\partyB$. We assume that the adversary knows the tuple $\transitionset = (p)_{ (p,v) \in \transcript(D)}$ of pairs sender-receiver of all of user interactions, e.g., as they may corrupt participants or track network traffic. We assume that parties communicate via secure channels, however, we consider that the adversary may succeed to break this security and observe a subset of messages  $\advValView \subseteq \transcript(D)$, e.g., by intercepting messages, exploiting side channels or other attacks. 

\begin{definition}[\obsDPLabel{}]
	We say that a decentralized algorithm $\alg$ satisfies \obsDPLarge{$\epsilon$}{$\delta$}{$\advValView$} (or \obsDP{$\epsilon$}{$\delta$}{$\advValView$}) if it satisfies $(\epsilon, \delta)$-DP for the output $\alg(D) =  (\transitionset,\advValView)$.  
\end{definition}  
Note that  \obsDP{$\epsilon$}{$\delta$}{$\transcript(D)$}  (i.e. where all messages are known by the adversary) is equivalent to Local DP. We also assume an adversary that may corrupt some subset of parties $\corrnodes\subset \partySet$.  If a party $\corrParty \in \corrnodes$ is corrupted, the adversary can learn both its private value $\valPriv{\aParty}$ and its incoming/outgoing messages. In that case, we denote by $\partySetHonest = \partySet \setminus \corrnodes$ to the set of $\nHonest = n - |\corrnodes|$ honest parties that are not controlled by the adversary. 
\begin{definition}[\colDPLabel{}]
	We say that a decentralized algorithm $\alg$ is \colDPLarge{$\epsilon$}{$\delta$}{$\corrnodes$} (or \colDP{$\epsilon$}{$\delta$}{$\corrnodes$}) if it is $(\epsilon, \delta)$-DP for the output $\alg(D^H) =  (\transitionset,\advValView)$, where the input dataset $D^H$ is the set of private values of honest parties $(\valPriv{\aParty})_{\aParty \in \partySetHonest}$ and $\advValView = \{ ((\partyA, \partyB), v) \in \transcript(D) : \partyA \in \corrnodes \text{ or }  \partyB \in \corrnodes  \}$.
\end{definition}  
We remark that parties executing algorithm $\alg$ do not have knowledge of the observed messages $\advValView$ or corrupted  parties $\corrnodes$. 

The view of the adversary in  \obsDPAcron{} and \colDPAcron{} is similar to other correlated noise techniques such as the presented in \cite{sabater2022accurate,allouah2024privacy} and in Secure Aggregation \cite{bell2020secure}, where the set of interactions $\transitionset$ is known. It is stronger than the view of the adversary in Network DP \cite{cyffers2022privacy} which only knows  $\advValView$ and ignores $\transitionset$. We remark that the knowledge of $\transitionset$ is a dangerous piece of  information as it can be exploited to completely compromise privacy~\cite{mrini2024privacy}. Our adversary  is also stronger than Pairwise Network DP \cite{cyffers2022muffliato} where $\transitionset$ is also unknown and the reported privacy loss of a party is the average over all possible placements of the adversary, whereas it corresponds to the worst-case in \colDPAcron{}.

\subsection{Communication Model}
\label{sec:prelim.comm}

We study  decentralized protocols in the synchronous setting as defined in~\cite{guerraoui2006introduction}. In this setting, parties perform exchanges in $\iterCnt$ iterations and there exists a known finite time bound for a message to reach its destination.

At each iteration $\anIter \in \iterSet$, all parties wake up and interact, among others sending a message to a set of neighbors.  We first model this communication structure.  For every iteration $\anIter \in\iterSet$, let $\iterSimpEdges{\anIter}\subseteq\partySet\times\partySet$ such that $(\partyA,\partyB) \in \iterSimpEdges{\anIter}$ if and only if party $\partyA$ sends a message to $\partyB$ in iteration $\anIter$.  If $\iterSimpEdges{\anIter}=\iterSimpEdges{1}$ for all $\anIter\in\iterSet$, we will say that our protocol has \emph{static} exchanges, otherwise these are \emph{dynamic}. For $\aParty\in\partySet$ and $\anIter \in [\iterCnt]$, let $\incNeigh{\partyA}{\anIter} = \{\partyB \mid (\partyB,\partyA)\in E_t\}$ and $\outNeigh{\partyA}{\anIter} = \{\partyB \mid (\partyA,\partyB)\in E_t\}$ to be respectively the sets of incoming and outgoing neighbors of $\aParty$.

%

\subsection{Failures} 
\label{sec:prelim.dropouts} 
We will consider that, at each iteration, parties could be unexpectedly absent of the computation due to a temporary or permanent crash, disconnection from the network, or other problems. We call this behavior a \emph{dropout}. We denote by $\onlNodes{\anIter} \in \partySet$ to the set of parties that did not drop out in the computation at iteration $\anIter \in [\iterCnt]$. We consider that a party $\aParty \in \partySet$ dropped out \emph{permanently} if he is not present in the last iteration (i.e., $\aParty \not\in \onlNodes{\iterCnt})$. We assume that messages have bounded delays~\cite{guerraoui2006introduction} and that there is sufficient time in an iteration that parties can confirm they received a message. Therefore, if a party $\aParty$ send a message to $\partyB$ in an iteration where $\partyB$ dropped out, $\aParty$ will detect it.



\section{Protocol}
\label{sec:protocol} 
In this section, we present our protocol. The base protocol is described in Section \ref{sec:protocol.base} and the dropout resistant version is presented in Section \ref{sec:protocol.dropouts}.

\subsection{Base Protocol}
\label{sec:protocol.base}  
We introduce \inca{}, our protocol for  incremental averaging  in Algorithm \ref{alg:ourprotocol}. It consist on three phases: \initPhase{}, \cancelPhase{} and \dissemPhase{}.

\paragraph{Protocol} The \initPhase{} Phase is devoted to sample DP noise and compute the initial messages of each party $\aParty \in \partySet$. In particular, $\aParty$ draws a Gaussian noise sample $\indnoise{\aParty}\sim\mathcal{N}(0,\sdInd^2)$ and a $(\iterCnt+1)$-vector 
$\left(\valPart{\aParty}{\anIter}\right)_{\anIter=0}^\iterCnt$ from a noise  distribution $\partDistr{\valPriv{\aParty}+\indnoise{\aParty}}$ (line \ref{alg.line:draw}). For all $u \in \xVectSpace$, $\partDistr{u}$  satisfies the following property:  
\begin{equation} 
	\sum_{\anIter=0}^\iterCnt {v_\anIter} = u, \quad \text{ for all  } \left(v_\anIter \right)_{\anIter=0}^\iterCnt \sim \partDistr{u}. 
	\label{eq:canceling}
\end{equation}
The goal of the $T+1$ terms $\left(\valPart{\aParty}{\anIter}\right)_{\anIter=0}^\iterCnt$ is to inject the value $\valPriv{\aParty}+\indnoise{\aParty}$ into the system incrementally in a randomized way, such that (a) it is hard for an adversary to infer this value and (b) if party $\aParty$ would drop out and would not inject the rest of the value into the system the resulting error remains bounded.
The goal of $\indnoise{\aParty}$ is to prevent leakages from the exact average, by outputing only a differentially private approximation of it.  The first message $\yVal{\aParty}{0}$ of  each party $\aParty \in \partySet$ is set to $\valPart{\aParty}{0}$ (line \ref{alg.line:init}). 

Exchanges to ensure privacy are done in the \cancelPhase{} Phase. At each iteration $\anIter \in [\iterCnt]$,  each party  $\aParty \in \partySet$ starts by sending its  message $\yVal{\aParty}{\anIter-1}$ to  outgoing neighbors $\outNeigh{\aParty}{\anIter}$.  In the spirit of gossip algorithms, one can view $\yVal{\aParty}{\anIter-1}$ as $\aParty$'s current estimate of the final output with the information it has so far.
Then, it computes its new message (updated estimate of the final output) $\yVal{\aParty}{\anIter}$ by aggregating incoming messages, his current message and a new part of its private value to inject $\valPart{\aParty}{\anIter}$. For each incoming neighbor $\partyB \in \incNeigh{\aParty}{\anIter}$, message $\yVal{\partyB}{\anIter-1}$  is aggregated with weight  $\transMatEl{\anIter}{\aParty}{\partyB}$ and $\yVal{\aParty}{\anIter}$ with  $\transMatEl{\anIter}{\aParty}{\aParty}$ 
(line \ref{alg.line:canceling}). 
At the end of the \cancelPhase{} Phase, each party $\aParty \in \partySet$ will have an estimate $\yVal{\aParty}{\iterCnt}$. By the time iteration $\iterCnt$ is reached, the final messages $(\yVal{\aParty}{\iterCnt})_{\aParty \in \partySet}$ of this phase sufficiently hide the private values due to the gossip mixing and the noise terms $(\indnoise{\aParty})_{\aParty \in \partySet}$ and $(\valPart{\aParty}{\anIter})_{\aParty,\anIter \in \partySet \times [\iterCnt]}$  (see Section \ref{sec:privacy} for details). 

In the \dissemPhase{} Phase, since $(\yVal{\aParty}{\iterCnt})_{\aParty \in \partySet}$ is differentially private, the parties can
compute $\frac{1}{n}\sum_{\aParty \in \partySet} \yVal{\aParty}{\iterCnt}$ in the clear.
There are many practical ways to perform this computation. One of them is to use a gossip averaging protocol \cite{boyd2006randomized}, this preserves a similar structure as the previous phases of our protocols.


\begin{algorithm}[H]
	\begin{algorithmic}[1]
		\STATE{\textbf{Input:} $\iterCnt \in \mathbb{N}$, $\vectPriv = (x_1, \dots, x_n)^\top \in \xSpace^n$, $W_1, \dots, W_{\iterCnt} \in \R^{n\times n}$
			, $\sdInd^2 \in \R$, $\partDistr{\cdot}:\xVectSpace\to\probDistrSpace{\xVectSpace^{T+1}}$}
		\STATE{\textit{\initPhase{} Phase}}
		\FORALL{$\aParty \in \partySet$} 
		\STATE{ Sample  $\indnoise{\aParty} \sim \N(0, \sdInd^2)$ and $(\valPart{\aParty,0} \ldots \valPart{\aParty}{\iterCnt}) \sim \partDistr{x_\aParty+\indnoise{\aParty}}$ \label{alg.line:draw}}
		\STATE{$y^{(0)}_\aParty \gets  \valPart{\aParty}{0}$} \label{alg.line:init}
		\ENDFOR
		\STATE{\textit{\cancelPhase{} Phase}}
		\FOR{$\anIter \in \{1 \dots \iterCnt\}$}
		\FORALL{$\partyA \in \partySet$}
		\STATE{$y^{(\anIter)}_\partyA \gets \left( \sum_{\partyB \in \partySet} \transMatTab y^{(\anIter-1)}_\partyB \right)+  \valPart{\partyA}{\anIter}$} \label{alg.line:canceling} 
		\ENDFOR
		\ENDFOR
		\STATE{\textit{\dissemPhase{} Phase:}
			Parties jointly compute $\frac{1}{n}\sum_{\aParty \in \partySet} \yVal{\aParty}{\iterCnt}$ 
		} 
	\end{algorithmic}		
	\caption{\inca{} Protocol}
	\label{alg:ourprotocol}
\end{algorithm} 

\paragraph{Utility} If for all $\anIter \in [\iterCnt]$ we have that  $\transMat{\anIter}$ satisfies 
\begin{equation}
	\forall \partyB \in \partySet, \quad \sum_{\aParty \in \partySet} \transMatEl{\anIter}{\aParty}{\partyB} = 1 \quad  \text{(column stochasticity)} 
	\label{eq:colstoch} 
\end{equation}
then
\begin{eqnarray}
	\frac{1}{n}\sum_{\aParty \in \partySet} \yVal{\aParty}{\iterCnt} &=& \frac{1}{n}\sum_{\aParty \in \partySet} \valPart{\partyA}{\iterCnt} +  \sum_{\partyB \in \partySet} W_{\iterCnt;i,j} y^{(\iterCnt-1)}_\partyB    \notag \\
	&=& \frac{1}{n} \sum_{\aParty \in \partySet} \valPart{\partyA}{\iterCnt} + \frac{1}{n}  \sum_{\partyB \in \partySet} \left( \sum_{\aParty \in \partySet} W_{\iterCnt;i,j} \right)   y^{(\iterCnt-1)}_\partyB \notag \\\
	\text{(by Eq. \eqref{eq:colstoch})}		&=& \frac{1}{n} \sum_{\aParty \in \partySet} \valPart{\partyA}{\iterCnt} + \frac{1}{n}  \sum_{\partyB \in \partySet}   y^{(\iterCnt-1)}_\partyB  \label{eq:noisyavg.rec} \\
	&\vdots& \notag \\
	\text{(by Eq. \eqref{eq:noisyavg.rec})}		&=& \frac{1}{n} \sum_{\aParty \in \partySet} \sum_{\anIter=0}^\iterCnt \valPart{\partyA}{\anIter}  \notag \\
	\text{(by Eq. \eqref{eq:canceling})} &=&  \frac{1}{n} \sum_{\aParty \in \partySet} \valPriv{\aParty} + \indnoise{\aParty}.
	\label{eq:noisyavg} 
\end{eqnarray}
Equation \eqref{eq:noisyavg} shows that the only noise that remains in the final estimate is $\frac{1}{n} \sum_{\aParty \in \partySet} \indnoise{\aParty}$, whose variance is $\sdInd^2/n$. This shows that, if matrices $(\transMat{\anIter})_{\anIter \in [\iterCnt]}$ are column stochastic, then as long as $\partDistrSymb$ satisfies Equation \eqref{eq:canceling} the injected noise of this distribution will not affect accuracy.

\paragraph{Distribution $\partDistrSymb$} There are several ways to model the distribution $\partDistr{\cdot}$. We first show two examples to illustrate the core principle of the distribution. 

\begin{example}[Early Injection: $\partDistrEarlySymb$]
	For some variance $\sdCancel^2 > 0$ and  $v \in \xVectSpace$, $(v_\anIter)_{\anIter =1}^\iterCnt \sim \partDistrEarly{v}$ is defined as follows:
	\begin{enumerate} 
		\item draw $\iterCnt$ i.i.d. samples $\partDistTerm{1}, \dots , \partDistTerm{\iterCnt}$ with distribution $\N(0, \sdCancel^2)$ 
		\item set  $v_0 = v + \sum_{\anIter=1}^\iterCnt \partDistTerm{\anIter}$ and $v_{\anIter} = - \partDistTerm{\anIter}$ for each $\anIter \in [\iterCnt]$
	\end{enumerate}   
	\label{ex:early.inj} 
\end{example}  

Let us instantiate $\partDistrSymb$ of Algorithm \ref{alg:ourprotocol} by $\partDistrEarlySymb$. First, each party $\aParty \in \partySet$ will draw $\iterCnt$ i.i.d noise terms $\cancelNoise{\aParty}{0}, \dots, \cancelNoise{\aParty}{\iterCnt}$ from $\N(0,\sdCancel^2)$. Then we have that 
\begin{align*}  
	\yVal{\aParty}{0} &= \valPriv{\aParty}+\indnoise{\aParty} + \sum_{\anIter=1}^\iterCnt \cancelNoise{\aParty}{\iterCnt}
	\text{\quad and that } \\
	\yVal{\aParty}{\anIter} &= \left( \sum_{\partyB \in \partySet} \transMatTab y^{(\anIter-1)}_\partyB \right) - \cancelNoise{\aParty}{\anIter}
\end{align*}  for all $\anIter \in [\iterCnt]$. In this example, parties  inject all the private value and noise at the beginning of the execution to $\yVal{\aParty}{0}$, and gradually remove each noise term in subsequent messages while mixing them.

Note that, in addition to the independent noise $\indnoise{\aParty}$, each party $\aParty$ injects $\iterCnt$ noise terms. One might naively think that privacy could be obtained by just adding a single extra noise term and progressively removing it instead of adding  many terms as done in the example. However, such approach is likely to compromise privacy. With the knowledge of interactions and weights matrices $(\transMat{1}, \dots, \transMat{\iterCnt})$, the adversary can easily construct a system of linear equations where private values and noise terms are unknowns and each observed message in $\advValView$ can be used to construct one equation of the system (see Section \ref{sec:privacy.abstract}). This attack strategy has shown to be dangerous in \cite{mrini2024privacy}. Therefore, it is crucial that each party has a sufficient number of noise terms to hide private values from adversarial observations.  

\begin{example}[Incremental Injection: $\partDistrIncremSymb$]
	For some variance $\sdCancel^2 > 0$ and  $v \in \xVectSpace$, $(v_\anIter)_{\anIter =1}^\iterCnt \sim \partDistrIncrem{v}$ is defined as follows:
	\begin{enumerate} 
		\item draw $\partDistTerm{1}, \dots , \partDistTerm{\iterCnt}$  as in Example \ref{ex:early.inj} 
		\item set $v_{0} = v/(\iterCnt+1) + \partDistTerm{1}$, $v_{\anIter} = v/(\iterCnt+1) -  \partDistTerm{\anIter} + \partDistTerm{\anIter+1}$ for all $\anIter \in [\iterCnt-1]$ and  $v_{\iterCnt} = v/(\iterCnt+1)-  \partDistTerm{\iterCnt}$
	\end{enumerate} 
	\label{ex:increm.inj} 
\end{example}
In Example \ref{ex:increm.inj}, $v$ is spread over all vectors $(v_\anIter)_{\anIter =1}^\iterCnt$ and one noise term is added and canceled at each iteration. When $\partDistrIncremSymb$ is  used in Algorithm \ref{alg:ourprotocol}, each party $\aParty \in \partySet$ draws  i.i.d. noise terms $\cancelNoise{\aParty}{0}, \dots, \cancelNoise{\aParty}{\iterCnt}$ as in the previous example. Then we have that 
\begin{align*}
	\yVal{\aParty}{0} &= \valPriv{\aParty}/(\iterCnt+1) + \cancelNoise{\aParty}{1},\\
	\yVal{\aParty}{\anIter} &= \left( \sum_{\partyB \in \partySet}\transMatTab y^{(\anIter-1)}_\partyB \right) - \cancelNoise{\aParty}{\anIter} + \cancelNoise{\aParty}{\anIter+1} \quad \forall \anIter \in [\iterCnt-1] \text{ and } \\
	\yVal{\aParty}{\iterCnt} &= \left( \sum_{\partyB \in \partySet} \transMatTab y^{(\anIter-1)}_\partyB \right) - \cancelNoise{\aParty}{\anIter}. 
\end{align*}
Party $\aParty$ injects $1/(\iterCnt+1)$-th of  $\valPriv{\aParty}$ per iteration, renewing the noise each time. At iteration $\iterCnt$ no new noise term is injected.  The sensitivity of each  noisy term $\valPart{\aParty}{\anIter}$ to changes in $\valPriv{\aParty}$ is in the worst case smaller than in Example \ref{ex:early.inj} where the private values are injected all at once in the first iteration. This reduces the variance $\sdCancel^2$ required to satisfy differential privacy.  As shown in Equation \eqref{eq:noisyavg}, noise added by $\partDistrSymb$ cancels and its variance $\sdCancel^2$ does not impact accuracy in the absence of failures. However, parties can disconnect during the computation. Therefore, it is important to remain $\sdCancel^2$ as low as possible.

\paragraph{Multivariate Gaussians} Examples \ref{ex:early.inj} and \ref{ex:increm.inj} are part of a more general types of distributions for which we prove differential privacy guarantees. We define it below.  

\begin{definition}[Multivariate Gaussian]
We call a distribution $\partDistrSymb$ is a $(\xCoeffVec, \matrixPart)$-\emph{Gaussian} if for some $\xCoeffVec \in \R^{\iterCnt+1}$ and $\matrixPart \in \R^{(\iterCnt+1)\times \iterCnt}$ there holds $\partDistr{u} =\partDistrGauss{u}{\xCoeffVec}{\matrixPart}$, where sampling $(v_{0},\dots,v_{\iterCnt})\sim \partDistrGauss{u}{\xCoeffVec}{\matrixPart}$ is achieved by
	\begin{enumerate}
		\item Sampling $\partDistTerm{k} \sim \mathcal{N}(0, \sigma_k^2)$ for each $k \in [1,\iterCnt]$
		\item Setting $	v_{\anIter} =  \xCoeff{\anIter}u + \sum_{k=1}^{\iterCnt} \matrixPartEl{\anIter}{k} \partDistTerm{k}$ for each $\anIter \in [0, \iterCnt]$.
	\end{enumerate}
\end{definition}
One can see that $\partDistrGauss{v}{\xCoeffVec}{\matrixPart}$ is the multivariate Gaussian distribution $\mathcal{N}(\xCoeffVec v, \matrixPart 
\Sigma_{g}  \matrixPart^\top)$ where $\Sigma_{g} = \mathrm{diag}(\sigma_1^2,\dots,\sigma_\iterCnt^2)$. 
To prove our privacy guarantees, we will need  \emph{Valid} $(\xCoeffVec, \matrixPart)$-Gaussians:
\begin{definition}[Valid $(\xCoeffVec, \matrixPart)$-Gaussian] 
  A 	$(\xCoeffVec, \matrixPart)$-\emph{Gaussian} $\partDistrSymb$  is \emph{Valid} if (i) it 
  satisfies the property of Equation \eqref{eq:canceling}, i.e., $\sum_{\anIter=0}^\iterCnt v_\anIter = u$ independently of the $\eta_k$,
  (ii) the matrix $(\xCoeffVec, \matrixPart) \in \R^{(\iterCnt+1)\times (\iterCnt+1)}$ is invertible and (iii) $\matrixPartEl{-\iterCnt}{:} \in \R^{\iterCnt\times\iterCnt}$ is invertible.   
\end{definition}
Condition (i) above implies that $\sum_{\anIter=0}^\iterCnt \xCoeff{\anIter}=1$ and $\forall k: \sum_{\anIter=0}^\iterCnt \matrixPartEl{\anIter}{k} = 0$. Conditions (ii) and (iii) prevent individual noise terms from being guessed if the adversary does not observe all the messages of a party. 
\newcommand{\lemmaValidGaussStm}{
	$\partDistrEarlySymb$ and $\partDistrIncremSymb$ are Valid Gaussians. }
\begin{lemma}
	\label{lm:validgauss}
	\lemmaValidGaussStm
\end{lemma}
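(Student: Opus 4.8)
The plan is to verify the three defining conditions of a Valid $(\xCoeffVec,\matrixPart)$-Gaussian directly for each of the two constructions, after first rewriting Examples~\ref{ex:early.inj} and~\ref{ex:increm.inj} in the matrix form $v_\anIter = \xCoeff{\anIter}u + \sum_{k=1}^\iterCnt \matrixPartEl{\anIter}{k}\partDistTerm{k}$. For $\partDistrEarlySymb$: reading off the construction, $\xCoeffVec = (1,0,\dots,0)^\top$, and $\matrixPart$ is the $(\iterCnt+1)\times\iterCnt$ matrix whose top row is $(1,1,\dots,1)$ and whose bottom $\iterCnt\times\iterCnt$ block is $-\identity$. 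For $\partDistrIncremSymb$: $\xCoeffVec = \tfrac{1}{\iterCnt+1}(1,1,\dots,1)^\top$, and $\matrixPart$ is the $(\iterCnt+1)\times\iterCnt$ matrix whose $(0,1)$ entry is $1$, whose entries on the $\anIter\to\anIter$ diagonal (for rows $1,\dots,\iterCnt$) are $-1$, whose entries on the $\anIter\to\anIter+1$ superdiagonal (for rows $1,\dots,\iterCnt-1$) are $+1$, and which is $0$ elsewhere — i.e., a bidiagonal-type structure. Writing these down explicitly is the only real bookkeeping step.

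Condition (i), the canceling property, is checked by summing the rows: for $\partDistrEarlySymb$, $\sum_\anIter \xCoeff{\anIter} = 1$ and each column of $\matrixPart$ sums to $1 + (-1) = 0$; for $\partDistrIncremSymb$, $\sum_\anIter \xCoeff{\anIter} = (\iterCnt+1)\cdot\tfrac{1}{\iterCnt+1} = 1$ and each column $k$ of $\matrixPart$ sums to $(+1)_{\text{row }k-1 \text{ or } 0} + (-1)_{\text{row }k} = 0$ (telescoping), so $\sum_\anIter v_\anIter = u$ identically in the $\partDistTerm{k}$. This is immediate from the telescoping structure and matches the remark after the definition.

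For conditions (ii) and (iii) I would compute the two determinants. Condition (iii) asks that the bottom $\iterCnt\times\iterCnt$ block $\matrixPartEl{-0}{:}$ — rows $1,\dots,\iterCnt$ — be invertible: for $\partDistrEarlySymb$ this block is $-\identity$, determinant $(-1)^\iterCnt \neq 0$; for $\partDistrIncremSymb$ this block is lower... actually upper-triangular with $-1$ on the diagonal, determinant $(-1)^\iterCnt \neq 0$. (Note: I am reading $\matrixPartEl{-\iterCnt}{:}$ as "$\matrixPart$ with the $\iterCnt$-th... " — I would double-check the paper's index convention for which row is removed; given that the $0$-th row is the "special" one holding $u$ in $\partDistrEarlySymb$, the intended block is rows $1,\dots,\iterCnt$, and both arguments go through.) Condition (ii) asks that the full $(\iterCnt+1)\times(\iterCnt+1)$ matrix $[\xCoeffVec \mid \matrixPart]$ be invertible: for $\partDistrEarlySymb$, expanding along the first column (which is $e_0$) reduces the determinant to $\det(-\identity) = (-1)^\iterCnt \neq 0$; for $\partDistrIncremSymb$, I would use that the canceling property lets me add all other rows to row $0$, turning row $0$ into $(1,0,\dots,0)$ (since columns of $[\xCoeffVec\mid\matrixPart]$ each sum to $1$ by condition (i)), and then the remaining $\iterCnt\times\iterCnt$ block is again the triangular block from condition (iii), giving a nonzero determinant.

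The main obstacle is essentially notational rather than mathematical: getting the index conventions exactly right (which row of $\matrixPart$ is "row $0$", what precisely $\matrixPartEl{-\iterCnt}{:}$ denotes, and whether $\partDistrIncremSymb$'s matrix is upper- or lower-triangular after the chosen orientation) so that the triangularity arguments for (ii) and (iii) are stated correctly. Once the matrices are written out explicitly in the paper's notation, all three checks are short determinant/row-sum computations with no genuine difficulty.
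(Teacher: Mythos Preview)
Your approach is essentially the paper's: write out $(\xCoeffVec,\matrixPart)$ explicitly for each example and then verify the three validity conditions; the paper does exactly this but simply asserts the invertibility without the determinant computations you supply. One correction on notation: $\matrixPartEl{-\iterCnt}{:}$ in the paper's conventions means the matrix with row $\iterCnt$ (the \emph{last} row) removed, not row $0$, so condition~(iii) concerns rows $0,\dots,\iterCnt-1$; fortunately for both $\partDistrEarlySymb$ and $\partDistrIncremSymb$ that block is still triangular (lower-triangular with diagonal $(1,-1,\dots,-1)$ after a row swap for $\partDistrEarlySymb$, and lower-triangular with $1$'s on the diagonal for $\partDistrIncremSymb$), so your invertibility argument goes through with only cosmetic changes, and your row-sum trick for condition~(ii) is unaffected.
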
 
We prove  Lemma \ref{lm:validgauss} it in Appendix \ref{app:protocol}. 
In practice we'll adopt  $\partDistrIncremSymb$
due to its good properties against dropouts. 

\subsection{Protocol for Dropouts} 
\label{sec:protocol.dropouts} 

We now present our protocol in the presence of temporary and permanent dropouts.  For all  $\anIter \in [1,\iterCnt]$, we denote by $\onlNodes{\anIter}$ the set of parties that did not dropout at iteration $\anIter$. Our protocol is described in Algorithm \ref{alg:ourprotocol.drop}. We consider that all parties in $\partySet$ have at least started the protocol and computed its first message (i.e., $\onlNodes{0} = \partySet$), otherwise they are not considered as part of the protocol. 

The \initPhase{} Phase only requires minor changes with respect to Algorithm \ref{alg:ourprotocol} on the sampling of correlated noise. Since a party $\aParty \in \partySet$ might dropout, not all samples $(\valPart{\anIter}{\aParty})_{\anIter \in [0,\iterCnt]}$ are generated in advance, as it is more convenient to do it per iteration, depending on the online history $\aParty$.  Only the first term $\valPart{\aParty}{0}$ is sampled at \initPhase{} Phase. 


At each iteration $\anIter \in [1, \iterCnt]$ of the \cancelPhase{} Phase,  only active parties in $\onlNodes{\anIter}$ exchange messages. Each party $\partyA \in \onlNodes{\anIter}$ receives its incoming messages of online parties, sends its outgoing messages and detects that its outgoing messages are not received by $\partyB \in \outNeigh{\aParty}{\anIter} \setminus \onlNodes{\anIter}$ due to dropout. Matrix  $\transMatOnl{\anIter}$ has the weights of real exchanges, correcting  the attempted exchanges $\transMat{\anIter}$ according to $\onlNodes{\anIter}$. The weights of incoming neighbors that did not drop out remain unchanged (line \ref{alg.line:ourprotocol.drop.WOnl.1}), otherwise they are set to $0$ (line \ref{alg.line:ourprotocol.drop.WOnl.2}). To keep the column stochasticity of $\transMatOnl{\anIter}$, the weight $\transMatOnlEl{\anIter}{\partyA}{\partyA}$ of its own message $\yVal{\partyA}{\anIter-1}$ is increased with the weights of value that $\partyA$ could not send due to dropout (line \ref{alg.line:ourprotocol.drop.WOnl.3}). 

Next, each party $\partyA$ samples a new term $\valPart{\partyA}{\anIter}$ from the distribution $\partDistrOnl{\partyA}{\anIter}$ (line \ref{alg.line:ourprotocol.drop.valPart}).  Each party adapts $\partDistrSymb$ to its online history. At the end of the \cancelPhase{} phase, noise terms $(\valPart{\aParty}{\anIter})_{\anIter \in [0,\iterCnt]}$ follow distribution $\partDistrOnlParty{\aParty}$, which we will explain later.

After that, party $\partyA$ aggregates incoming messages and noise terms to compute $\yVal{\partyA}{\anIter}$ as in the original protocol (line \ref{alg.line:ourprotocol.drop.canceling}). If party $\partyA$ dropped out, it will not be able to perform any exchange or sample any noise term. Therefore it will consider the last message he computed as his current message  (line \ref{alg.line:ourprotocol.drop.canceling.drop}). 
This concludes the \cancelPhase{} Phase.

In the \dissemPhase{} phase, parties average the values $(\yVal{\partyA}{\iterCnt})_{\aParty \in \onlNodes{\iterCnt}}$.  When a party $\aParty$ drops out during an iteration, it only partially injects its value $\valPriv{\aParty}$. Moreover, if $\aParty$ permanently drops out at some iteration $\anIter \in [1, T]$, a part of the private values of (a subset of) all parties is lost due to the mixing nature of our protocol, along with the message $\yVal{\partyA}{\anIter}$, which is never sent. Therefore, when computing the average of $(\yVal{\partyA}{\iterCnt})_{\aParty \in \onlNodes{\iterCnt}}$, dividing $\sum_{\aParty \in \onlNodes{\iterCnt}} \yVal{\partyA}{\iterCnt}$ by $n$ is less accurate than dividing by the total weight of private values injected. 

We adopt the latter approach, which is feasible by adding some extra information to the messages.
Even if $(\valPriv{\aParty})_{\aParty \in \partySet}$ are secret, parties can still track the total weight of private values in each message if they share their own weight alongside their messages and update it when they inject part of their private values or aggregate incoming messages. This does not constitute a privacy breach, as such weights are already known to the adversary (as encoded in Equation~\eqref{eq:linear.complete} and explained in Section~\ref{sec:privacy.knowledge}) and only depend on the status of parties being online of offline rather than the private values themselves. 

We assume that $(\yVal{\partyA}{\iterCnt})_{\partyA \in \onlNodes{\iterCnt}}$ is visible to the adversary. For resilience to dropouts, instead of using the method in \cite{boyd2006randomized}, one can employ other non-private epidemic dissemination protocols \cite{pittel1987spreading,demers1987epidemic} that have negligible probability of information loss due to dropout. In the unlikely case that a message from party $\aParty$ is lost in the \dissemPhase{} phase due to a dropout, we consider $\aParty \not\in \onlNodes{\iterCnt}$.

\begin{algorithm}[H]
	\begin{algorithmic}[1]
		\STATE{\textbf{Input:} $\iterCnt \in \mathbb{N}$, $\vectPriv = (x_1, \dots, x_n)^\top \in \xSpace^n$, $W_1, \dots, W_{\iterCnt} \in \R^{n\times n}$
			, $\sdInd^2 \in \R$, $\partDistrSymb: \xVectSpace \to\probDistrSpace{\xVectSpace^{T+1}}$}
		\STATE{\textit{\initPhase{} Phase}}
		\FORALL{$\aParty \in \partySet$} 
		\STATE{Sample $\indnoise{\aParty} \sim \N(0, \sdInd^2)$ and $\valPart{\aParty}{0}  \sim \partDistrOnl{\aParty}{0}$ \label{alg.line:ourprotocol.drop.draw}}
		\STATE{$y^{(0)}_\aParty \gets  \valPart{\aParty}{0}$} \label{alg.line:ourprotocol.drop.init}
		\ENDFOR
		\STATE{\textit{\cancelPhase{} Phase}}
		\FOR{$\anIter \in \{1 \dots \iterCnt\}$}
		\FORALL{$\partyA \in \partySet$}
		\IF{$\partyA \in \onlNodes{\anIter}$}
		\STATE{Set $\transMatOnlEl{\anIter}{\partyA}{\partyB} \gets \transMatEl{\anIter}{\partyA}{\partyB}$ \quad for all $\partyB \in \onlNodes{\anIter}$}
		\label{alg.line:ourprotocol.drop.WOnl.1}
		\STATE{Set $\transMatOnlEl{\anIter}{\partyA}{\partyB} \gets 0$ \qquad\quad   for all $\partyB \in \partySet \setminus \onlNodes{\anIter}\setminus \{\partyA\} $}
		\label{alg.line:ourprotocol.drop.WOnl.2}
		\STATE{Set $\transMatOnlEl{\anIter}{\partyA}{\partyA} \gets  \sum_{\partyB \in \partySet \setminus \onlNodes{\anIter}} \transMatEl{\anIter}{\partyB}{\partyA}$}  
		\label{alg.line:ourprotocol.drop.WOnl.3}
		\STATE{Sample $\valPart{\aParty}{\anIter}  \sim \partDistrOnl{\aParty}{\anIter}$ }
		\label{alg.line:ourprotocol.drop.valPart}
		\STATE{$y^{(\anIter)}_\partyA \gets \left( \sum_{\partyB \in \partySet} \transMatOnlEl{\anIter}{\partyA}{\partyB} y^{(\anIter-1)}_\partyB \right)+  \valPart{\partyA}{\anIter}$}
		\label{alg.line:ourprotocol.drop.canceling} 
		\ELSE
		\STATE{$y^{(\anIter)}_\partyA \gets y^{(\anIter-1)}_\partyA$}
		\label{alg.line:ourprotocol.drop.canceling.drop} 
		\ENDIF
		\ENDFOR
		\ENDFOR
		\STATE{\textit{\dissemPhase{} Phase}: Parties jointly average $(\yVal{\partyA}{\iterCnt})_{\partyA \in \onlNodes{\iterCnt}}$.
		} 
	\end{algorithmic}		
	\caption{\inca{} for dropouts}
	\label{alg:ourprotocol.drop}
\end{algorithm}

\paragraph{Distribution $\partDistrOnlSymb$} Now we describe how parties use multivariate Gaussians when dropouts occur. 
Let  $\partDistrSymb = \partDistrGauss{v}{\xCoeffVec}{\matrixPart}$ and $\onlSymb = (\onlNodes{\aParty})_{\aParty \in \partySet}$ be the dropout history. Then, for each party $\aParty \in \partySet$, $\partDistrOnlParty{\aParty} = \partDistrGauss{\valPriv{\aParty} + \indnoise{\aParty}}{\xCoeffOnlVec{\aParty}}{\matrixPartOnl{\aParty}}$ where 
\begin{eqnarray*}
	\xCoeffOnl{\aParty}{\anIter} = 0,  &\matrixPartOnlEl{\aParty}{\anIter}{:} = 0&  \quad \text{if $\aParty \not\in \onlNodes{\anIter}$ and $\anIter \in [0,\iterCnt]$}\\ 
	\xCoeffOnl{\aParty}{\anIter} = \xCoeff{\anIter'},  & \matrixPartOnlEl{\aParty}{\anIter'}{:} = \matrixPartEl{\anIter}{:} & \quad  \text{if $\aParty \in \onlNodes{\anIter}$ and $\anIter \in [0,\iterCnt-1]$} \\
	\xCoeffOnl{\aParty}{\iterCnt} = \xCoeff{\iterCnt'},  &\matrixPartOnlEl{\aParty}{\iterCnt}{:} = -\sum_{\anIter'=0}^{\iterCnt'-1} \matrixPartEl{\anIter'}{:}& \quad  \text{if $\aParty \in \onlNodes{\iterCnt}$} 
\end{eqnarray*}
and $\anIter' = \sum_{k=0}^\anIter \indicator{\partyA \in \onlNodes{k}}$ for all $\anIter \in [0, \iterCnt]$. Essentially, parties do not inject anything into the system if they  dropped out and, if they are online in the last iteration, they cancel all previously injected noise terms. 

Replacing $\valPart{\aParty}{\anIter}$ using the definition of $\partDistrOnlSymb$, we have the following equations for the messages:
\begin{align}
	\yVal{\aParty}{0} &= \xCoeffOnl{\aParty}{0} (\valPriv{\aParty}+\indnoise{\aParty}) + \sum_{k=1}^\iterCnt  \matrixPartOnlEl{\aParty}{0}{k} \cancelNoise{\aParty}{k} \label{eq:iter.gauss.init}  \\ 
	\yVal{\aParty}{\anIter } &= \sum_{\partyB =1}^n \transMatOnlEl{\anIter}{\aParty}{\partyB} \yVal{\partyB}{\anIter} +  \xCoeffOnl{\aParty}{\anIter} (\valPriv{\aParty}+\indnoise{\aParty}) + \sum_{k=1}^\iterCnt  \matrixPartOnlEl{\aParty}{\anIter}{k} \cancelNoise{\aParty}{k} \label{eq:iter.gauss.cancel} \\   & \qquad \text{for all $\anIter \in [1, \iterCnt].$} \notag  
\end{align}  
These equations determine the view of the adversary.

For each $\aParty \in \onlNodes{\iterCnt}$, we have that 
\begin{equation}  
	\sum_{\anIter=0}^\iterCnt \valPart{\aParty}{\anIter} =   \totalWeightParty{\aParty} (\valPriv{\aParty} + \indnoise{\aParty}) 
	\label{eq:canceling.drop} 
\end{equation} 
where $\totalWeightVec \in \R^{n}$ is such that 
\begin{equation}
	\totalWeightParty{\aParty} = \sum_{\anIter=0}^\iterCnt \xCoeffOnl{\aParty}{\anIter}.
	\label{eq:totalWeight}
\end{equation} If party $\aParty$ did not permanently drop out, she/he will be able to inject a proportion $\totalWeightParty{\aParty}$ of its private value $\valPriv{\aParty}$ plus independent noise $\indnoise{\aParty}$. 

\paragraph{Utility} If all parties have the same probability of dropout, our estimate will be unbiased. If there are only temporary drop outs, i.e. $\onlNodes{\iterCnt} = \partySet$, all correlated noise is canceled. By following a similar analysis as the one in Equation \eqref{eq:noisyavg}, we have that  
\[ 
\sum_{\aParty=1}^n  \yVal{\aParty}{\iterCnt} = \sum_{\aParty=1}^n \totalWeightParty{\aParty}(\valPriv{\aParty}+ \indnoise{\aParty}). 
\] 
If permanent dropouts occur, two main disruptions happen. First, parts of the private values with non-zero coefficients in the dropped messages are lost. Second, correlated noise is not completely canceled. Therefore, the properties in equations \eqref{eq:colstoch} and \eqref{eq:canceling.drop} do not hold. We empirically show the final accuracy in Section \ref{sec:exp.dropouts}.





\section{Privacy Analysis} 
\label{sec:privacy} 

We now present our privacy results. In Section \ref{sec:privacy.knowledge} we show how the adversary's knowledge can be structured as a set of linear equations. Next, in Section \ref{sec:privacy.abstract} we present an abstract result that accurately accounts for the privacy loss of \inca{}. After that, we present in Section \ref{sec:privacy.optimal} interpretable results on the conditions to obtain differential privacy with accuracy comparable to that of Central DP when no dropouts occur. Finally, in Section \ref{sec:privacy.negative}, we present sufficient positive and negative conditions on the graph that models parties interactions to obtain privacy. 

\subsection{Knowledge of the Adversary}
\label{sec:privacy.knowledge}  

We prove \obsDPLabel{} and \colDPLabel{} guarantees. Hence, we  assume that the adversary knows the set  $\transitionset = \{\transMat{1}, \dots, \transMat{\iterCnt}\}$ and online activity $\onlSymb = (\onlNodes{\anIter})_{\anIter \in [1,\iterCnt]}$. For the subset of values of messages it knows, we will slightly abuse notation with respect to the definition of $\advValView$ in  \obsDPAcron{} and \colDPAcron{} so that   $\advValView \subseteq  \partySet \times  \initIterSet$  denotes the subset of observed iteration-party messages. 
Hence the messages $\{((\aParty, \anIter),\yVal{\aParty}{\anIter})\}_{(\aParty,\anIter)\in\advValView}$ will be known. 

For \obsDPAcron{}, we assume that the adversary can observe the final messages of the \cancelPhase{} Phase of all parties, i.e.,  
\[
\advValView \supseteq \{ (\aParty, \iterCnt) \}_{\aParty \in \partySet}.
\]
For 
\colDPAcron{}, they will additionally see all messages seen by corrupted parties. That is, 
\[
\advValView = \{(\aParty, \anIter) \in  \partySet \times [0,\iterCnt-1]: (\{\aParty\} \cup \outNeigh{\aParty}{\anIter+1}) \cap \corrnodes \not{=} \emptyset \} \cup \{ (\aParty, \iterCnt) \}_{\aParty \in \partySet}.
\]

\paragraph{Knowledge as Linear Equations}


The knowledge of the adversary can be structured as a set of linear equations where the unknowns are private values $\vectPriv$ and noise terms $(\indNoiseVec,  \cancelNoiseMat)$. We show below how  these equations are constructed. 

Let  $\nNoiseHonest \coloneq \nHonest \iterCnt$  be the number of canceling noise terms  unknown by the adversary before they make any observations. We denote by  $\vectPrivHonest \coloneq (\valPriv{\aParty})_{\aParty \in \partySetHonest} \in \xSpace^{\nHonest}$, 
$\indNoiseVecH \coloneq (\indnoise{\aParty})_{\aParty \in \partySetHonest} \in \xVectSpace^\nHonest$ and $\cancelNoiseVecHonest \coloneq   (\cancelNoiseVecParty{\aParty})_{\aParty \in \partySetHonest} \in \xVectSpace^{\nNoiseHonest}$ 
the vectors of private values and noise terms of honest parties. 
Then, from an execution $\exec = (\transitionset, \onlSymb, \advValView)$ of Algorithm \ref{alg:ourprotocol.drop}, the adversary can construct  a set of linear equations described by 
\begin{equation}
	\xMatCorr (\vectPrivHonest + \indNoiseVecH) + \nMatCorr \cancelNoiseVecHonest = \yVecCorr
	\label{eq:linear.adv}
\end{equation}
where for $\nObs \coloneq |\advValView| \le \nHonest \iterCnt$, $\xMatCorr \in \R^{\nObs \times \nHonest}$ and $\nMatCorr \in \R^{m \times\nNoiseHonest}$  are matrices with public coefficients such that $(\xMatCorr, \nMatCorr)$ is a full rank, $\yVecCorr  \in \xVectSpace^{\nObs}$ 
is the set of messages observed by the adversary after removing known constants and redundant information (e.g., linearly dependent equations),   and $(\vectPrivHonest, \indNoiseVecH, \cancelNoiseVecHonest)$ are the unknowns. In Appendix \ref{app:privacy.knowledge} we show the details on how to construct the system of Equation \eqref{eq:linear.adv}. 

\paragraph{Discussion} Let's inspect the system of Equation \eqref{eq:linear.adv}. For each $\aParty \in \partySetHonest$, variables $\valPriv{\aParty}$ and  $\indnoise{\aParty}$ only appear coupled in the term  $\valPriv{\aParty} + \indnoise{\aParty}$. Therefore, the adversary will never be able to isolate them and we can consider them as a single variable $\valPrivNoisy{\aParty} = \valPriv{\aParty} + \indnoise{\aParty}$. In the worst case, the adversary observes all the messages of honest parties, meaning that $\nObs = \nHonest + \nNoiseHonest$ and that  $(\xMatCorr, \nMatCorr)$ is a square invertible matrix. Then, the adversary is able to recover  $\valPrivNoisy{\aParty}$ for all  $\aParty \in \partySetHonest$ by computing $(\xMatCorr, \nMatCorr)^{-1} \yVecCorr$. In that case, the only way to protect private values is to set $\indnoise{\aParty}$ large enough, i.e. using the same noise required for Local DP. Hence,  we emphasize the importance of setting $\nHonest + \nNoiseHonest > \nObs$ to prevent this situation, which justifies the large amount of noise terms (one per iteration) each party generates. 

\subsection{Abstract Result} 
\label{sec:privacy.abstract}

We first present an abstract privacy result that provides a relationship between the variance of added noise terms and $(\epsilon, \delta)$-DP guarantees for a given execution of Algorithm \ref{alg:ourprotocol.drop}. We define $\varNoise \in \xVectSpace^{(\nHonest+ \nNoiseHonest) \times (\nHonest + \nNoiseHonest)}$ to be the covariance matrix of  $\left(\begin{smallmatrix} \indNoiseVecH \\ \cancelNoiseVecHonest \end{smallmatrix}\right)$. Note that $\varNoise$ is diagonal as all noise terms are independent. 

\newcommand{\abstractThmStatement}{Let $\exec = (\transitionset, \onlSymb, \advValView)$ be an execution of Algorithm \ref{alg:ourprotocol.drop}  where $\partDistrSymb$ is a Valid  $(\xCoeffVec, \matrixPart)$-Gaussian and  $\advValView$ be defined by the observations of the adversary (respectively by the corruption of a set $\corrnodes$ of parties).  Let $\xMatCorr$, $\nMatCorr$ be derived from $\exec$ as described in Equation \eqref{eq:linear.adv}. Let  $\varNoise$ be the covariance matrix of noise terms $\left(\begin{smallmatrix} \indNoiseVecH \\ \cancelNoiseVecHonest \end{smallmatrix}\right)$, which have positive variance. 
	Then, execution $\exec$ is  \obsDP{$\epsilon$}{$\delta$}{$\advValView$}  (respectively \colDP{$\epsilon$}{$\delta$}{$\corrnodes$})  if all noise terms have positive variance and 
	\begin{equation}
		\xMatCorrCol^\top \varMatProd^{-1} \xMatCorrCol \le  \frac{\epsilon^2}{c^2} \text{\qquad for each column $\xMatCorrCol$ of $\xMatCorr$} 
		\label{eq:dp.abstract}  
	\end{equation}
	where  $c^2 > 2 \ln(1.25 / \delta)$ and  
	\begin{equation}
		\varMatProd \coloneq (\xMatCorr, \nMatCorr) \varNoise(\xMatCorr, \nMatCorr)^\top \in \R^{\nObs \times \nObs} 
		\label{eq:varMatProd}.
	\end{equation}
}
\begin{theorem}  
	\abstractThmStatement
	\label{thm:dp.abstract}  
\end{theorem}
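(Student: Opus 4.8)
The plan is to fix the execution $\exec$, argue that the adversary's view is a reversible transformation of a spherical Gaussian mechanism run on the honest private values, and then invoke the classical $(\epsilon,\delta)$-DP bound for the Gaussian mechanism; the $\varMatProd$-weighted quantity in \eqref{eq:dp.abstract} is exactly the squared $\ell_2$-sensitivity that appears after whitening the noise covariance. First I would discard the data-independent part of the view: the weight matrices (hence all sender--receiver pairs $\transitionset$), the online history $\onlSymb$, and the index set $\advValView$ do not depend on $D^H$, so the only data-dependent component of $(\transitionset,\advValView)$ is the vector $\yVecCorr$ of observed message values, and by post-processing it suffices to prove $(\epsilon,\delta)$-DP for the map $D^H\mapsto\yVecCorr$. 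Since $\partDistrSymb$ is a Valid $(\xCoeffVec,\matrixPart)$-Gaussian, every message is an affine function of $\vectPrivHonest$, $\indNoiseVecH$, $\cancelNoiseVecHonest$, and the construction recalled around \eqref{eq:linear.adv} gives $\yVecCorr=\xMatCorr(\vectPrivHonest+\indNoiseVecH)+\nMatCorr\cancelNoiseVecHonest$ with $\xMatCorr,\nMatCorr$ determined by $\exec$ alone. As $\indNoiseVecH$ and $\cancelNoiseVecHonest$ are jointly Gaussian, centered, with diagonal covariance $\varNoise$ whose entries are positive, for a fixed input $D^H$ with honest-value vector $\vectPrivHonest$ the output $\yVecCorr$ follows $\mathcal{N}(\xMatCorr\vectPrivHonest,\varMatProd)$ with $\varMatProd$ as in \eqref{eq:varMatProd}; here I would use that $(\xMatCorr,\nMatCorr)$ has full row rank and $\varNoise$ is positive definite to conclude that $\varMatProd$ is positive definite, hence invertible, so $\varMatProd^{-1/2}$ exists.

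Next, I would whiten. The linear map $y\mapsto\varMatProd^{-1/2}y$ is invertible, hence preserves $(\epsilon,\delta)$-DP in both directions, and $\varMatProd^{-1/2}\yVecCorr$ follows $\mathcal{N}(\varMatProd^{-1/2}\xMatCorr\vectPrivHonest,\identity)$ --- i.e., the unit-variance Gaussian mechanism applied to the linear query $g(D^H)=\varMatProd^{-1/2}\xMatCorr\vectPrivHonest$. For two neighboring datasets, after fixing a common labeling of the values they share, exactly one honest coordinate --- say the one indexed by $\aParty$ --- changes, by some $\theta$ with $|\theta|\le 1$ because $\xSpace=[0,1]$; hence $g$ changes by $\theta\,\varMatProd^{-1/2}\xMatCorrCol$, where $\xMatCorrCol$ is the column of $\xMatCorr$ indexed by $\aParty$, and its squared change in $\ell_2$ norm is at most $\xMatCorrCol^\top\varMatProd^{-1}\xMatCorrCol$. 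So the $\ell_2$-sensitivity of $g$ is bounded by $\max_{\xMatCorrCol}\sqrt{\xMatCorrCol^\top\varMatProd^{-1}\xMatCorrCol}$ over all columns $\xMatCorrCol$ of $\xMatCorr$, which by hypothesis \eqref{eq:dp.abstract} is at most $\epsilon/c$ with $c^2>2\ln(1.25/\delta)$. The classical Gaussian-mechanism guarantee then gives $(\epsilon,\delta)$-DP for $g(\cdot)+\mathcal{N}(0,\identity)$, and undoing the whitening (post-processing by $\varMatProd^{1/2}$) transfers this to $D^H\mapsto\yVecCorr$ and hence to the full view. Instantiating $\advValView$ with the eavesdropper's observations yields \obsDP{$\epsilon$}{$\delta$}{$\advValView$}, and with the messages seen by a corrupted set $\corrnodes$ yields \colDP{$\epsilon$}{$\delta$}{$\corrnodes$}; the two cases differ only in which equations populate \eqref{eq:linear.adv}.

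The step I expect to be the main obstacle is the reduction to \eqref{eq:linear.adv} itself: carefully checking that, after removing known constants and linearly dependent equations, $\yVecCorr$ remains a sufficient statistic for the adversary's view, that the reduced matrix $(\xMatCorr,\nMatCorr)$ indeed has full row rank, and that the coupling of $\vectPrivHonest+\indNoiseVecH$ with the independent terms $\cancelNoiseVecHonest$ produces exactly the covariance $\varMatProd$ of \eqref{eq:varMatProd} --- essentially the bookkeeping (deferred to the appendix) that turns Algorithm \ref{alg:ourprotocol.drop} together with the observation set into a linear system of the stated form. The remaining steps --- whitening, the sensitivity computation, and invoking the Gaussian mechanism --- are routine.
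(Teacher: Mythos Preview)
Your proposal is correct and arrives at exactly the same condition as the paper, but the route is slightly different. The paper does a \emph{direct} computation of the privacy-loss random variable: it writes $\yVecCorr\sim\mathcal{N}(\xMatCorr\vectPrivHonest,\varMatProd)$, forms the log-likelihood ratio between the two neighboring inputs, shows that it equals a centered Gaussian plus a constant, and then bounds the tail probability using the inequality $\Pr(Y\ge\lambda)\le\frac{\sigma_Y}{\lambda\sqrt{2\pi}}e^{-\lambda^2/2\sigma_Y^2}$, finally reducing the resulting pair of conditions to $\xMatCorrCol^\top\varMatProd^{-1}\xMatCorrCol\le\epsilon^2/c^2$ via a preliminary lemma (the same computation as in the Dwork--Roth proof of the Gaussian mechanism). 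You instead \emph{whiten}: apply $\varMatProd^{-1/2}$, identify the result as the standard identity-covariance Gaussian mechanism on the linear query $g(D^H)=\varMatProd^{-1/2}\xMatCorr\vectPrivHonest$, read off the $\ell_2$-sensitivity as $\max_{\xMatCorrCol}\sqrt{\xMatCorrCol^\top\varMatProd^{-1}\xMatCorrCol}$, and invoke the classical guarantee as a black box.

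Both are sound; your reduction is more modular and avoids re-deriving the Gaussian-mechanism tail bound inside the proof, at the cost of depending on that theorem as an external fact. The paper's version is self-contained and makes the dependence on $c^2>2\ln(1.25/\delta)$ explicit through an auxiliary lemma. Your remark that the main work is the bookkeeping behind \eqref{eq:linear.adv} is accurate: the paper defers that construction to an appendix and takes the linear form (and the full-row-rank property of $(\xMatCorr,\nMatCorr)$, hence the invertibility of $\varMatProd$) as the starting point of the proof, just as you do.
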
  

We prove this result in Appendix \ref{app:privacy.abstract}. Note that the invertibility of $\varMatProd$  is not an additional constraint: given that $(\xMatCorr, \nMatCorr)$ is full rank, $\varMatProd$  is always invertible if all noise terms have positive variance. 
Using Theorem \ref{thm:dp.abstract}, one can set variances and obtain $\epsilon$ and $\delta$. 
We now show the inverse direction, i.e., how to obtain $\varNoise$ given $\epsilon$ and $\delta$  using  convex optimization. 

A symmetric matrix $M$ is positive semi-definite, denoted $M \succeq 0$, if  $x^\top M x \ge  0$ for every non-zero vector $x$. $M$ is positive definite, or $M \succ 0$, if $x^\top M x >  0$ for every non-zero vector $x$.  

\newcommand{\crlConvexEq}{
\begin{equation}
	\begin{ourmatrix}
		\varMatProd & \xMatCorrCol \\
		\xMatCorrCol^\top & \epsilon^2/  c^2 
	\end{ourmatrix} \succeq 0 
\end{equation}
}

\newcommand{\crlConvexStm}{
Let $\epsilon,\delta\in(0,1)$. 
Let $\xMatCorr$, $\nMatCorr$ be associated with an execution of Algorithm \ref{alg:ourprotocol.drop} be derived from $(\transitionset, \onlSymb, \advValView)$ (or from $(\transitionset, \onlSymb, \corrnodes)$) as defined in Theorem \ref{thm:dp.abstract}. Let $\varMatProd$ also be defined as in Theorem \ref{thm:dp.abstract}. Then, the execution is  \obsDP{$\epsilon$}{$\delta$}{$\advValView$}  (respectively \colDP{$\epsilon$}{$\delta$}{$\corrnodes$})  if all noise terms have positive variance and   
  \crlConvexEq
  for each column $\xMatCorrCol$ of $\xMatCorr$, where $c^2 > 2 \ln(1.25 / \delta)$. 
}
\begin{corollary}
	\label{crl:sdp}
	\crlConvexStm 
\end{corollary}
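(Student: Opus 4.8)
The plan is to show that the linear matrix inequality in the statement is, column by column, equivalent to the scalar condition~\eqref{eq:dp.abstract} of Theorem~\ref{thm:dp.abstract}, after which the corollary is immediate. The tool for this is the Schur complement. First I would record that $\varMatProd$ is in fact positive \emph{definite}, not merely positive semi-definite: by~\eqref{eq:varMatProd}, $\varMatProd = (\xMatCorr,\nMatCorr)\,\varNoise\,(\xMatCorr,\nMatCorr)^\top$, where $(\xMatCorr,\nMatCorr)\in\R^{\nObs\times(\nHonest+\nNoiseHonest)}$ has full row rank $\nObs$ and $\varNoise$ is diagonal with strictly positive diagonal (all noise terms have positive variance, by hypothesis). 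Hence for any nonzero $x\in\R^{\nObs}$ we have $x^\top\varMatProd x = \bigl\|\varNoise^{1/2}(\xMatCorr,\nMatCorr)^\top x\bigr\|^2 > 0$, so $\varMatProd\succ 0$ and in particular $\varMatProd^{-1}$ exists, which is exactly what makes~\eqref{eq:dp.abstract} well posed.

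Next I would invoke the standard Schur complement criterion for a symmetric block matrix with positive definite top-left block: for each column $\xMatCorrCol$ of $\xMatCorr$,
\[
	\begin{ourmatrix} \varMatProd & \xMatCorrCol \\ \xMatCorrCol^\top & \epsilon^2/c^2 \end{ourmatrix} \succeq 0
	\quad\Longleftrightarrow\quad
	\frac{\epsilon^2}{c^2} - \xMatCorrCol^\top \varMatProd^{-1} \xMatCorrCol \ \ge\ 0 .
\]
The ``$\Rightarrow$'' direction comes from evaluating the quadratic form of the block matrix on vectors $(-t\,\varMatProd^{-1}\xMatCorrCol,\ t)^\top$, $t\in\R$; the ``$\Leftarrow$'' direction follows from the congruence transformation by $\left(\begin{smallmatrix} I & -\varMatProd^{-1}\xMatCorrCol \\ 0 & 1\end{smallmatrix}\right)$, which block-diagonalizes the matrix into $\varMatProd$ and the scalar Schur complement $\epsilon^2/c^2 - \xMatCorrCol^\top \varMatProd^{-1}\xMatCorrCol$, so that the block matrix is $\succeq 0$ iff that scalar is $\ge 0$ (here using $\varMatProd\succ 0$). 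Stating this as a small lemma or citing a standard reference keeps the argument short.

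Finally, applying this equivalence to every column $\xMatCorrCol$ of $\xMatCorr$ turns the hypothesis of the corollary into precisely the family of inequalities $\xMatCorrCol^\top\varMatProd^{-1}\xMatCorrCol \le \epsilon^2/c^2$, i.e.\ into condition~\eqref{eq:dp.abstract}. Since the remaining hypotheses of Theorem~\ref{thm:dp.abstract} ($c^2 > 2\ln(1.25/\delta)$, all noise terms with positive variance, $(\xMatCorr,\nMatCorr)$ full rank) are assumed here as well, that theorem applies verbatim and yields that the execution is \obsDP{$\epsilon$}{$\delta$}{$\advValView$} (respectively \colDP{$\epsilon$}{$\delta$}{$\corrnodes$}), which is the claim. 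The one point that deserves care — and the only thing that is not a mechanical application of the Schur complement lemma and of Theorem~\ref{thm:dp.abstract} — is the first step: one must ensure $\varMatProd$ is invertible, indeed positive definite, since both the Schur complement equivalence and the term $\varMatProd^{-1}$ in Theorem~\ref{thm:dp.abstract} require it, and this is exactly where the full-rank assumption on $(\xMatCorr,\nMatCorr)$ and the strict positivity of the noise variances are used.
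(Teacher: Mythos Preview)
Your proof is correct and follows essentially the same approach as the paper: establish $\varMatProd\succ 0$ from the full rank of $(\xMatCorr,\nMatCorr)$ and positivity of $\varNoise$, then use the Schur complement to reduce the block-matrix condition to $\xMatCorrCol^\top\varMatProd^{-1}\xMatCorrCol \le \epsilon^2/c^2$, and conclude via Theorem~\ref{thm:dp.abstract}. Your version is simply more detailed in justifying the Schur complement equivalence and the invertibility of $\varMatProd$.
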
  

We prove it in Appendix \ref{app:privacy.abstract}. It implies that for fixed $(\epsilon, \delta)$  the possible values of $\varNoise$ can be computed with a convex program.


\subsection{Bounded Independent Noise $\indNoiseVec$} 
\label{sec:privacy.optimal}

In this section, we show how to achieve such guarantees with bounded variance $\sdInd^2$ of the independent noise $\indNoiseVec$. This is important, as $\sdInd^2$ completely determines the accuracy of Algorithm \ref{alg:ourprotocol} (see Equation \ref{eq:canceling} and significantly influences that of Algorithm \ref{alg:ourprotocol.drop}, as $\indNoiseVec$  are the only terms that \inca{} does not attempt to cancel. 

From an execution $\exec  = (\transitionset, \onlSymb, \advValView)$ of Algorithm \ref{alg:ourprotocol.drop}, the view of the adversary is completely determined by the values of $\vectPrivHonest$, $\indNoiseVecH$ and $\cancelNoiseVecHonest$ as described by Equation \eqref{eq:linear.adv}. We denote this view by
\[ 
\advViewExec{\exec}{\vectPrivHonest + \indNoiseVecH,  \cancelNoiseMatHonest} = \yVecCorr.
\]
We also sometimes slightly abuse notation by using the matrix $\cancelNoiseMatHonest = (\cancelNoise{\aParty}{:})_{\aParty \in \partySetHonest} \in \xVectSpace^{\nHonest\times \iterCnt}$ instead of vector $\cancelNoiseVecHonest$  (which contains the same elements in a different shape)  as the last parameter of  $\advViewExecFunc{\exec}$.

%

\newcommand{\stmIndLemmaOld}{
	Let $0\le \anIter \le \iterCnt-1$ and $\aParty\in\partySetHonest$. 
	If $(\aParty, \anIter) \not\in \advValView$ and if $\matLemmaA{\aParty}{\anIter}$ and $(\xCoeffVec, \matrixPart)$ are invertible, then for any $\vectPriv\in\xSpace^n$, $\indNoiseVec \in \xVectSpace^n$, $\noiseMatrix \in \xVectSpace^{n\times\iterCnt}$ and $\beta\in\mathbb{R}$ there holds
	\[\advViewP{\vectPriv, \indNoiseVec, \noiseMatrix } = \advViewP{\vectPriv+\edgeVec{\aParty}{\anIter}\beta,\indNoiseVec, \noiseMatrix-\etaTransEl{\aParty}{\anIter}\beta}\]  
}

We start by presenting Lemma \ref{lm:dp.noiseDiff.abstract}, which relates the change required in noise terms $\indNoiseVecH$ and $\cancelNoiseVecHonest$ such that executing the protocol with neighboring datasets  $\neighDatasetA$ or $\neighDatasetB$ remain indistinguishable to the adversary and the required variance of   $\indNoiseVecH$ and  $\cancelNoiseVecHonest$ to satisfy $(\epsilon,\delta)$-DP guarantees.

%

\newcommand{\execDefStm}{Let $\exec = (\transitionset, \onlSymb, \advValView)$ be an execution of Algorithm \ref{alg:ourprotocol.drop}  where $\partDistrSymb$ is a Valid Gaussian and  $\advValView$ be defined by the observations of the advesary (or by the corruption of a set $\corrnodes$ of parties)}

\newcommand{\lemmaNoiseDiffStm}{
	Let $\epsilon, \delta \in (0,1)$. 
	Let $\exec = (\transitionset,\onlSymb,\advValView)$ be defined as in Theorem \ref{thm:dp.abstract}. For a pair of neighboring datasets $\neighDatasetA, \neighDatasetB \in \xVectSpace^{\nHonest}$, let $\indNoiseDiff \in \xVectSpace^{\nHonest}$ and $\cancelNoiseDiffVec \in \xVectSpace^{\nNoiseHonest}$ be such that  
	\[
	\advViewExec{\exec}{\neighDatasetA + \indNoiseVecH, \cancelNoiseVecHonest} =  \advViewExec{\exec}{\neighDatasetB+\indNoiseVecH+\indNoiseDiff, \cancelNoiseVecHonest+\cancelNoiseDiffVec}
	\label{eq:eqViewDiff}
	\]
	for any $\indNoiseVecH$, $\cancelNoiseVecHonest$. Recall that $\varNoise$ is the covariance matrix of $\begin{ourSmallMatrix}
		\indNoiseVecH \\
		\cancelNoiseVecHonest
	\end{ourSmallMatrix}$. 
	Then, execution $\exec$ is  \obsDP{$\epsilon$}{$\delta$}{$\advValView$}  (or \colDP{$\epsilon$}{$\delta$}{$\corrnodes$}) if
        for all such neighboring datasets $\neighDatasetA$ and $\neighDatasetB$
	\[ 
	\left((\indNoiseDiff)^\top, \cancelNoiseDiffVec^\top \right) \varNoise^{-1} \begin{ourmatrix}
		\indNoiseDiff \\
		\cancelNoiseDiffVec \\ 
	\end{ourmatrix}
	\le \frac{\epsilon^2}{c^2}
	\]
	where 
$c^2 > 2 \ln(1.25/\delta)$. 
}

\begin{lemma}
	\label{lm:dp.noiseDiff.abstract} 
	\lemmaNoiseDiffStm
\end{lemma}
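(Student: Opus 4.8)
The plan is to reduce the $(\epsilon,\delta)$-DP claim to the standard analytic Gaussian mechanism guarantee by exhibiting, for each pair of neighboring datasets $\neighDatasetA,\neighDatasetB$, an explicit coupling between the adversary's view under $\neighDatasetA$ and under $\neighDatasetB$ that shifts the noise vector $\begin{ourSmallMatrix}\indNoiseVecH\\\cancelNoiseVecHonest\end{ourSmallMatrix}$ by the deterministic vector $\begin{ourSmallMatrix}\indNoiseDiff\\\cancelNoiseDiffVec\end{ourSmallMatrix}$. First I would observe that, by the hypothesis on $\indNoiseDiff,\cancelNoiseDiffVec$, the map $\begin{ourSmallMatrix}\indNoiseVecH\\\cancelNoiseVecHonest\end{ourSmallMatrix}\mapsto\begin{ourSmallMatrix}\indNoiseVecH+\indNoiseDiff\\\cancelNoiseVecHonest+\cancelNoiseDiffVec\end{ourSmallMatrix}$ carries the view $\advViewExecFunc{\exec}$ evaluated at $\neighDatasetA$ exactly onto the view evaluated at $\neighDatasetB$: that is, $\advViewP{\alg(\neighDatasetA)}$ and a translate of $\advViewP{\alg(\neighDatasetB)}$ have the same distribution as functions of the underlying Gaussian noise. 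Hence the statistical distance between the two output distributions is bounded by the statistical distance between $\N(0,\varNoise)$ and $\N\!\big(\begin{smallmatrix}\indNoiseDiff\\\cancelNoiseDiffVec\end{smallmatrix},\varNoise\big)$ after pushing forward through the (common) linear read-out map $(\xMatCorr,\nMatCorr)$ of Equation~\eqref{eq:linear.adv}; more precisely, the entire view is a deterministic function of the shifted noise, so the DP comparison between the two neighbors is governed by the distinguishability of these two Gaussians.

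Next I would invoke the analytic Gaussian mechanism: two multivariate Gaussians $\N(0,\varNoise)$ and $\N(\mu,\varNoise)$ with $\mu=\begin{smallmatrix}\indNoiseDiff\\\cancelNoiseDiffVec\end{smallmatrix}$ are $(\epsilon,\delta)$-indistinguishable (in the sense of Equation~\eqref{eq:dp.def}) whenever the Mahalanobis norm $\|\mu\|_{\varNoise^{-1}}^2 = \mu^\top\varNoise^{-1}\mu$ is at most $\epsilon^2/c^2$ with $c^2>2\ln(1.25/\delta)$ — this is the classical calibration where a sensitivity-$\Delta$ query needs noise $\sigma\ge c\Delta/\epsilon$. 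Since $\varNoise$ is diagonal with strictly positive entries (all noise terms have positive variance, as stated), $\varNoise^{-1}$ is well-defined, and the quadratic-form hypothesis of the lemma is exactly the required bound $\mu^\top\varNoise^{-1}\mu\le\epsilon^2/c^2$. One must be slightly careful that the post-processing by the linear map $(\xMatCorr,\nMatCorr)$ (and by $\transitionset,\onlSymb$, which are constants here) cannot increase distinguishability: this is immediate from the data-processing inequality for $(\epsilon,\delta)$-DP, so the bound on the Gaussians transfers to the bound on the views.

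The main obstacle I anticipate is the first step — arguing rigorously that the view under $\neighDatasetB$ is a genuine pushforward of the view under $\neighDatasetA$ along the deterministic translation by $(\indNoiseDiff,\cancelNoiseDiffVec)$, uniformly over all realizations of the noise. This needs that $\advViewExecFunc{\exec}$ is affine in $(\vectPrivHonest+\indNoiseVecH,\cancelNoiseVecHonest)$ with the same coefficient matrices $(\xMatCorr,\nMatCorr)$ regardless of which neighbor is the input — which holds because the coefficients depend only on the public data $\transitionset,\onlSymb,\advValView$ and on the fixed Valid Gaussian $\partDistrSymb$, not on the private values — and that a suitable $(\indNoiseDiff,\cancelNoiseDiffVec)$ exists, i.e. that the required shift lies in a subspace on which the noise density is supported (here, all of $\xVectSpace^{\nHonest+\nNoiseHonest}$, since $\varNoise\succ0$). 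Once the affine structure and the existence of the shift are pinned down, the remainder is a direct application of the Gaussian-mechanism bound, and the diagonality of $\varNoise$ makes the quadratic form elementary. I would also note in passing that the "for all such neighboring datasets" quantifier is handled because $(\xMatCorr,\nMatCorr)$ and $\varNoise$ do not depend on the neighbor pair, so a single worst-case bound over the finitely many neighbor-types suffices.
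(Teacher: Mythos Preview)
Your proposal is correct and follows essentially the same approach as the paper's proof: establish that the view under $\neighDatasetB$ is the pushforward through the same map as the view under $\neighDatasetA$ but with the Gaussian noise translated by $t=(\indNoiseDiff,\cancelNoiseDiffVec)$, then reduce to the $(\epsilon,\delta)$-indistinguishability of $\N(0,\varNoise)$ and $\N(t,\varNoise)$, which is governed by the Mahalanobis norm $t^\top\varNoise^{-1}t$. The only difference is presentational: the paper unpacks the data-processing step as an explicit integration over the solution sets $S^{\adjASup}$ and $S^{\adjBSup}=S^{\adjASup}+t$, and re-derives the Gaussian tail bound in the style of \cite[Theorem~A.1]{dwork2014algorithmic} (via Lemma~\ref{lm:cDivEpsilon}) rather than invoking the Gaussian mechanism as a black box.
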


We can think of $\indNoiseDiff$ as a maximum difference between data points in the domain (and hence neighboring datasets), i.e., a measure closely related to sensitivity.  In particular, for the sensitivity $\Delta(\hbox{\textsc{IncA}})$ there holds $\Delta(\hbox{\textsc{IncA}})=\max_{\neighDatasetA,\neighDatasetB }\|\hbox{\textsc{IncA}}(\neighDatasetA,\neighDatasetB)\|= \max_{\indNoiseDiff}\|\hbox{\textsc{IncA}}(\indNoiseDiff)\|$.

The main techniques to prove Lemma \ref{lm:dp.noiseDiff.abstract} have been studied in \cite[Theorem 1]{sabater2022accurate}.
We adapt the proof to our setting in Appendix \ref{app:privacy.noiseDiff}.  This lemma  establishes the relation between the required $(\epsilon, \delta)$-DP parameters and magnitude of the change $\indNoiseDiff$ in independent noise term  $\indNoiseVec$ and the change $\cancelNoiseDiffVec$ in canceling noise terms $\cancelNoiseVec$ such that the view the of adversary remains indistinguishable when the input is changed from $\neighDatasetA$ to $\neighDatasetB$.

\paragraph{Correlated Noise Variance $\sdCancel^2$} In the rest of the paper, all noise terms of vector $\cancelNoiseMatHonest$ have variance $\sdCancel^2$. Hence the condition  to satisfy $(\epsilon,\delta)$-DP in Lemma \ref{lm:dp.noiseDiff.abstract} is equivalent to 
\begin{equation} 
	\frac{\|\indNoiseDiff\|^2_2}{\sdInd^2} + \frac{\|\cancelNoiseDiffVec\|^2_2}{\sdCancel^2}  \le \frac{\epsilon^2}{c^2}
	\label{eq:dpThetaMax}  
\end{equation} 
where $c^2 > 2\ln(1.25/\delta)$. 


From Equation \eqref{eq:dpThetaMax}, we can see that a bounded $\sdInd$  is achievable for sufficiently large $\sdCancel> \|\cancelNoiseDiffVec\|c/\epsilon$ as long as $\indNoiseDiff$ is bounded. 
We break down our problem, first analyzing  in  Lemma \ref{lm:edgeVec} how privacy is amplified by each message that the adversary does not see and in subsequent theorems the conditions on unseen messages to obtain a bounded $\indNoiseDiff$ (and consequently   $\sdInd$).

\newcommand{\stmIndLemma}{Let $\exec = (\transitionset,\onlSymb,\advValView)$ be defined as in Theorem \ref{thm:dp.abstract}. Then if 
  $\yVal{\aParty}{\anIter}$ is not observed by the adversary  (i.e.,  $(\aParty,\anIter) \in \partySetHonest \times [0, \iterCnt-1] \setminus \advValView$)
  and $\aParty \in \onlNodes{\anIter}$, then there exists $\edgeVec{\aParty}{\anIter} \in \xVectSpace^n$ and $\etaTransEl{\aParty}{\anIter} \in \xVectSpace^{n \times \iterCnt}$  such that for any $\beta \in \R$ we have 
	\[ 
	\advViewExec{\exec}{\vectPrivNoisy^\honestSymb , \cancelNoiseMatHonest}  = 	\advViewExec{\exec}{\vectPrivNoisy^\honestSymb  + \beta \edgeVec{\aParty}{\anIter}, \cancelNoiseMatHonest  - \beta \etaTransEl{\aParty}{\anIter} }.  
	\]
	Moreover, $\edgeVec{\aParty}{\anIter}_\aParty = \frac{\transMatOnlEl{\anIter+1}{\aParty}{\aParty}-1}{\totalWeightParty{\aParty}}$ and $\edgeVec{\aParty}{\anIter}_\partyB = \frac{\transMatOnlEl{\anIter+1}{\partyB}{\aParty}}{\totalWeightParty{\partyB}}$ for all $\partyB \in \partySetHonest \setminus \{\aParty\}$, where $\totalWeightVec\in \R^{n}$ is defined as in   Equation \eqref{eq:totalWeight}.
}

\begin{lemma}
	\stmIndLemma
	\label{lm:edgeVec}
\end{lemma}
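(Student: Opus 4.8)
The plan is to show that the unobserved message $\yVal{\aParty}{\anIter}$ is the \emph{only} message produced by Algorithm~\ref{alg:ourprotocol.drop} that moves under a suitable perturbation of the honest inputs, so that the adversary's view $\advViewExecFunc{\exec}$ — which is a subvector of all messages that does not include $\yVal{\aParty}{\anIter}$ — is left unchanged. Since \eqref{eq:iter.gauss.init}–\eqref{eq:iter.gauss.cancel} are affine in the inputs and noise, the perturbation $(\beta\edgeVec{\aParty}{\anIter},-\beta\etaTransEl{\aParty}{\anIter})$ of $(\vectPrivNoisy^\honestSymb,\cancelNoiseMatHonest)$ — with $\edgeVec{\aParty}{\anIter}$ the vector in the statement, zero rows for corrupted parties, and $\etaTransEl{\aParty}{\anIter}$ to be constructed — acts on each injected term by $\Delta\valPart{\partyB}{\anIter'}=\xCoeffOnl{\partyB}{\anIter'}\Delta\valPrivNoisy{\partyB}+\sum_{k}\matrixPartOnlEl{\partyB}{\anIter'}{k}\Delta\cancelNoise{\partyB}{k}$ for honest $\partyB$ (and by $0$ for corrupted $\partyB$), and then propagates through line~\ref{alg.line:ourprotocol.drop.canceling}. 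One degenerate case is dispatched first: if $\aParty\notin\onlNodes{\anIter+1}$ then, with the carry-over rule of line~\ref{alg.line:ourprotocol.drop.canceling.drop} (i.e.\ $\transMatOnlEl{\anIter+1}{\aParty}{\aParty}=1$ and $\transMatOnlEl{\anIter+1}{\partyB}{\aParty}=0$ for $\partyB\neq\aParty$), the stated $\edgeVec{\aParty}{\anIter}$ is zero and the claim holds with $\etaTransEl{\aParty}{\anIter}=0$; so from now on assume $\aParty\in\onlNodes{\anIter+1}$.

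Next I prescribe the target increments of the injected terms: $\Delta\valPart{\aParty}{\anIter}=-\beta$; $\Delta\valPart{\partyB}{\anIter+1}=\beta\transMatOnlEl{\anIter+1}{\partyB}{\aParty}$ for every honest $\partyB$ (which is $\beta\transMatOnlEl{\anIter+1}{\aParty}{\aParty}$ when $\partyB=\aParty$ and $0$ when $\partyB\notin\outNeigh{\aParty}{\anIter+1}$); and $\Delta\valPart{\partyB}{\anIter'}=0$ for all other pairs. A forward induction on $\anIter'$ using line~\ref{alg.line:ourprotocol.drop.canceling} (and line~\ref{alg.line:ourprotocol.drop.canceling.drop} on offline iterations) then shows these increments yield $\Delta\yVal{\partyB}{\anIter'}=-\beta\,\indicator{(\partyB,\anIter')=(\aParty,\anIter)}$: nothing moves before iteration $\anIter$; at iteration $\anIter$ only $\valPart{\aParty}{\anIter}$ moves, shifting $\yVal{\aParty}{\anIter}$ by $-\beta$; at iteration $\anIter+1$ each $\yVal{\partyB}{\anIter+1}$ picks up $\transMatOnlEl{\anIter+1}{\partyB}{\aParty}\cdot(-\beta)$ from the shifted $\yVal{\aParty}{\anIter}$, which the prescribed $\Delta\valPart{\partyB}{\anIter+1}$ exactly cancels; and thereafter nothing moves. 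Here the hypothesis $(\aParty,\anIter)\notin\advValView$ is invoked to guarantee $\outNeigh{\aParty}{\anIter+1}\cap\corrnodes=\emptyset$ — otherwise a corrupted receiver of $\yVal{\aParty}{\anIter}$ would place $(\aParty,\anIter)$ in $\advValView$ — so that every party whose injected term must be modified at iteration $\anIter+1$ is honest, which is needed since corrupted parties' terms are fixed.

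The remaining and main step is to realize the prescribed increments, i.e.\ to solve, for each honest $\partyB$, the system $\matrixPartOnl{\partyB}\,\etaTransEl{\aParty}{\anIter}_{\partyB,:}=\edgeVec{\aParty}{\anIter}_{\partyB}\,\xCoeffOnlVec{\partyB}-\hat{p}_{\partyB,:}$ with $\hat{p}_{\partyB,:}:=\Delta\valPart{\partyB}{:}/\beta$; any solution gives $\etaTransEl{\aParty}{\anIter}_{\partyB,:}$ (independent of $\beta$ by linearity), and zeros are used for corrupted $\partyB$. I expect this to be where the work lies, and I would prove consistency as follows. Writing $S_\partyB$ for the set of iterations at which $\partyB$ is online, the right-hand side is supported on $S_\partyB$ (both $\xCoeffOnlVec{\partyB}$ and the rows of $\matrixPartOnl{\partyB}$ vanish off $S_\partyB$, and the support $\{\anIter,\anIter+1\}$ of $\hat{p}_{\partyB,:}$ lies in $S_\partyB$ since $\aParty\in\onlNodes{\anIter}\cap\onlNodes{\anIter+1}$ and any $\partyB$ with $\transMatOnlEl{\anIter+1}{\partyB}{\aParty}\neq0$ is online at $\anIter+1$), and its coordinates sum to zero — this is precisely the point at which the exact values $\edgeVec{\aParty}{\anIter}_\partyB=\transMatOnlEl{\anIter+1}{\partyB}{\aParty}/\totalWeightParty{\partyB}$ and $\edgeVec{\aParty}{\anIter}_\aParty=(\transMatOnlEl{\anIter+1}{\aParty}{\aParty}-1)/\totalWeightParty{\aParty}$ are forced, combined with $\sum_{\anIter'}\xCoeffOnl{\partyB}{\anIter'}=\totalWeightParty{\partyB}$ from \eqref{eq:totalWeight} and $\sum_{\anIter'}\hat{p}_{\partyB,\anIter'}\in\{\transMatOnlEl{\anIter+1}{\partyB}{\aParty},\,\transMatOnlEl{\anIter+1}{\aParty}{\aParty}-1\}$. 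On the other side, Valid-Gaussian condition~(iii) makes the rows $\matrixPartEl{0}{:},\dots,\matrixPartEl{\iterCnt-1}{:}$ of $\matrixPart$ linearly independent and condition~(i) makes the columns of $\matrixPartOnl{\partyB}$ sum to zero, so a rank count identifies the column space of $\matrixPartOnl{\partyB}$, restricted to coordinates $S_\partyB$, as the zero-sum hyperplane of $\R^{S_\partyB}$ (and as all of $\R^{S_\partyB}$ when $\partyB$ drops out permanently, $\iterCnt\notin S_\partyB$). Hence the system is consistent, $\etaTransEl{\aParty}{\anIter}$ exists, and $\advViewExec{\exec}{\vectPrivNoisy^\honestSymb,\cancelNoiseMatHonest}=\advViewExec{\exec}{\vectPrivNoisy^\honestSymb+\beta\edgeVec{\aParty}{\anIter},\cancelNoiseMatHonest-\beta\etaTransEl{\aParty}{\anIter}}$ follows.

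I expect the main obstacle to be purely the dropout bookkeeping in this last step — tracking $S_\partyB$, the index remapping $\anIter\mapsto\anIter'$ that defines $\partDistrOnlSymb$, and the boundary cases ($\anIter=0$, $\anIter+1=\iterCnt$, $\partyB$ permanently offline) — rather than any further idea: the conceptual core is the single mechanism "shift one unobserved message by $\beta$ and annihilate it one step later through the $\aParty$-column of $\transMatOnl{\anIter+1}$", whose feasibility is exactly what the Valid-Gaussian invertibility conditions and the normalization $\totalWeightVec$ buy.
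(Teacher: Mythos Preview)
Your proposal is correct and takes a somewhat different — and arguably cleaner — route than the paper's own proof. The paper fixes $\Delta\valPrivNoisy{\aParty}=\beta$ first, then \emph{derives} $\Delta\yVal{\aParty}{\anIter}=\totalWeightParty{\aParty}\beta/(1-\transMatOnlEl{\anIter+1}{\aParty}{\aParty})$ by summing all of party~$\aParty$'s messages over its online iterations and invoking the cancellation identity~\eqref{eq:canceling.drop}; it then solves two separate linear systems (one for $\Delta\cancelNoise{\aParty}{:}$ using the $|S_\aParty|-1$ equations that exclude iteration~$\anIter$, and one for the pair $(\Delta\valPrivNoisy{\partyB},\Delta\cancelNoise{\partyB}{:})$ for each $\partyB\neq\aParty$), and finally rescales by $K=(\transMatOnlEl{\anIter+1}{\aParty}{\aParty}-1)/\totalWeightParty{\aParty}$ to recover the stated $\edgeVec{\aParty}{\anIter}$. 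Permanent dropouts are treated by a separate device: augmenting the adversary with a fictitious cancellation message at iteration~$\iterCnt$ so as to reduce to the temporary-dropout analysis. Your approach reverses the logic: you take $\edgeVec{\aParty}{\anIter}$ as given by the statement, normalize so that $\Delta\yVal{\aParty}{\anIter}=-\beta$, prescribe the injected-term increments $\Delta\valPart{\partyB}{\anIter'}$ directly, and reduce every case (including $\partyB=\aParty$ and permanently dropped parties) to a single consistency check — the identification of the column space of $\matrixPartOnl{\partyB}$ on $S_\partyB$ with the zero-sum hyperplane (or the full space when $\iterCnt\notin S_\partyB$). The paper's derivation makes it transparent \emph{why} the specific coefficients $\edgeVec{\aParty}{\anIter}_\partyB$ arise; your verification is more uniform, avoids the rescaling step and the special-message trick, and makes explicit two points the paper leaves implicit: the degenerate case $\aParty\notin\onlNodes{\anIter+1}$ (where $\edgeVec{\aParty}{\anIter}=0$), and the fact that $(\aParty,\anIter)\notin\advValView$ forces $\outNeigh{\aParty}{\anIter+1}\cap\corrnodes=\emptyset$, which is precisely what allows all compensating changes at iteration $\anIter+1$ to be absorbed into honest parties' noise.
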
  

Lemma \ref{lm:edgeVec} establishes that if $\yVal{\aParty}{\anIter}$ is not part of the adversary's observations and was effectively transmitted to other parties,  a 
vector in the nullspace of Equation \eqref{eq:linear.adv}, i.e., a vector which added to a solution gives another solution, is $(\edgeVec{\aParty}{\anIter}, -\etaTransEl{\aParty}{\anIter})$.   We prove this in Appendix \ref{app:privacy.indist}. 

In the following theorems bounds on $\sdInd^2$ are the result of constructing a bounded $\indNoiseDiff$, from the possible solutions  $\vectPriv+\indNoiseVec$ of Equation \eqref{eq:linear.adv}. The more messages are unobserved the easier to find smaller values of $\indNoiseDiff$ and $\cancelNoiseDiffVec$ of Lemma \eqref{lm:dp.noiseDiff.abstract}.

First  we show this when \inca{} is executed without dropouts. 
\newcommand{\thmCdpNodropStm}{
	Let $\epsilon, \delta \in (0,1)$. 
 Let $\exec = (\transitionset,\onlSymb,\advValView)$ be defined as in Theorem \ref{thm:dp.abstract} and associated with an execution without dropouts (i.e., $\onlNodes{\anIter} = \partySet$ for all $\anIter \in [0,\iterCnt]$).
	 Let $\hidden =  \partySetHonest\times [0,\iterCnt] \setminus \advValView$  be the set of pairs $(\aParty,\anIter)$ such that $\yVal{\aParty}{\anIter}$ is not seen by the adversary.  For all $(\aParty,\anIter) \in  \hidden$, let $\edgeVec{\aParty}{\anIter}$ be as defined in Lemma \ref{lm:edgeVec}. 
	If   $\{ \edgeVec{\aParty}{\anIter} \}_{(\aParty,\anIter) \in \hidden}$ has at least $\nHonest -1$ independent vectors, there exist $\indNoiseDiff$ and $\cancelNoiseDiffVec$ as defined in Lemma \ref{lm:dp.noiseDiff.abstract} and  $\exec$ satisfies  \obsDP{$\epsilon$}{$\delta$}{$\advValView$}  (or \colDP{$\epsilon$}{$\delta$}{$\corrnodes$}) for any 
	\[ 
	\sdInd^2 >  \frac{c^2}{\nHonest\epsilon^2} 	
	\] 	
	and  
	\[ 
	\sdCancel^2 \ge \frac{\|\cancelNoiseDiffVec\|^2_2}{
		\frac{\epsilon^2}{c^2} - 		\frac{1}{ \nHonest \sdInd^2}},
	\] 
	where $c^2 > 2\ln(1.25/\delta)$. 
}
\begin{theorem}	
	\thmCdpNodropStm
	\label{thm:cdp.nodrop}
\end{theorem}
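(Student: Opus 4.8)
The plan is to reduce the statement, via Lemma~\ref{lm:dp.noiseDiff.abstract} in the form of Equation~\eqref{eq:dpThetaMax}, to a linear-algebra construction. Fix a pair of neighboring honest datasets $\neighDatasetA,\neighDatasetB$ and let $u:=\neighDatasetA-\neighDatasetB$. Since the adversary's view is an affine function of $(\vectPrivHonest,\indNoiseVecH,\cancelNoiseVecHonest)$ with linear part $(\xMatCorr,\nMatCorr)$ (Equation~\eqref{eq:linear.adv}), swapping the honest input from $\neighDatasetA$ to $\neighDatasetB$ keeps the view identical precisely when the compensating perturbations $\indNoiseDiff$ (of the independent noise) and $\cancelNoiseDiffVec$ (of the canceling noise) satisfy $\xMatCorr\indNoiseDiff+\nMatCorr\cancelNoiseDiffVec=\xMatCorr u$; the set of such $(\indNoiseDiff,\cancelNoiseDiffVec)$ is the coset $(u,\mathbf{0})$ plus the nullspace of the block matrix $(\xMatCorr,\nMatCorr)$. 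By Lemma~\ref{lm:dp.noiseDiff.abstract} (specialized to the case where $\cancelNoiseMatHonest$ has i.i.d.\ variance $\sdCancel^2$), it then suffices to pick one element of this coset with $\|\indNoiseDiff\|_2^2/\sdInd^2+\|\cancelNoiseDiffVec\|_2^2/\sdCancel^2\le\epsilon^2/c^2$.

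To populate the nullspace I would use Lemma~\ref{lm:edgeVec}: for every hidden message $(\aParty,\anIter)\in\hidden$ the pair $(\edgeVec{\aParty}{\anIter},-\etaTransEl{\aParty}{\anIter})$ lies in the nullspace of $(\xMatCorr,\nMatCorr)$ (here $\hidden\subseteq\partySetHonest\times[0,\iterCnt-1]$, since iteration-$\iterCnt$ messages are always observed, and $\aParty\in\onlNodes{\anIter}=\partySet$ because there are no dropouts). The crucial structural observation is that each such $\edgeVec{\aParty}{\anIter}$ is orthogonal to $\oneVec$: from the explicit coordinates in Lemma~\ref{lm:edgeVec}, using column stochasticity of $\transMat{\anIter+1}$ (Equation~\eqref{eq:colstoch}) and $\totalWeightParty{\aParty}=1$ in the dropout-free case (Equation~\eqref{eq:totalWeight} together with $\sum_{\anIter}\xCoeff{\anIter}=1$, which holds because $\partDistrSymb$ is Valid), one gets $\oneVec^\top\edgeVec{\aParty}{\anIter}=-\sum_{\partyB\in\corrnodes}\transMatEl{\anIter+1}{\partyB}{\aParty}$, and this vanishes: in the \obsDP{$\epsilon$}{$\delta$}{$\advValView$} case $\corrnodes=\emptyset$, and in the \colDP{$\epsilon$}{$\delta$}{$\corrnodes$} case $(\aParty,\anIter)\in\hidden$ forces $\outNeigh{\aParty}{\anIter+1}\cap\corrnodes=\emptyset$, so those weights are $0$. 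Hence $\operatorname{span}\{\edgeVec{\aParty}{\anIter}:(\aParty,\anIter)\in\hidden\}\subseteq\oneVec^\perp$, and since by hypothesis this family has at least $\nHonest-1$ independent vectors while $\dim\oneVec^\perp=\nHonest-1$, it spans $\oneVec^\perp$ exactly.

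I would then decompose $u=\tfrac{\oneVec^\top u}{\nHonest}\oneVec+u^\perp$ with $u^\perp\in\oneVec^\perp$, noting $\oneVec^\top u=\xValueA-\xValueB$ (the shared component $\vectPriv^{(0)}$ and any relabeling contribute nothing to the total), so $|\oneVec^\top u|\le 1$ since $\xSpace=[0,1]$. Because the vectors $\edgeVec{\aParty}{\anIter}$ span $\oneVec^\perp$, pick coefficients $\beta_{\aParty,\anIter}$ with $\sum_{(\aParty,\anIter)\in\hidden}\beta_{\aParty,\anIter}\edgeVec{\aParty}{\anIter}=-u^\perp$ and set $\indNoiseDiff=u+\sum_{(\aParty,\anIter)\in\hidden}\beta_{\aParty,\anIter}\edgeVec{\aParty}{\anIter}$ and $\cancelNoiseDiffVec=-\sum_{(\aParty,\anIter)\in\hidden}\beta_{\aParty,\anIter}\etaTransEl{\aParty}{\anIter}$ (reshaped to $\xVectSpace^{\nNoiseHonest}$). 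This pair lies in the required coset, and by construction $\indNoiseDiff=\tfrac{\xValueA-\xValueB}{\nHonest}\oneVec$, whence $\|\indNoiseDiff\|_2^2=(\xValueA-\xValueB)^2/\nHonest\le 1/\nHonest$, while $\|\cancelNoiseDiffVec\|_2$ is exactly the quantity appearing in the statement. Plugging into Equation~\eqref{eq:dpThetaMax}, the left-hand side is at most $1/(\nHonest\sdInd^2)+\|\cancelNoiseDiffVec\|_2^2/\sdCancel^2$, which is $\le\epsilon^2/c^2$ exactly when $\sdInd^2>c^2/(\nHonest\epsilon^2)$ (so the denominator below is positive) and $\sdCancel^2\ge\|\cancelNoiseDiffVec\|_2^2/(\epsilon^2/c^2-1/(\nHonest\sdInd^2))$; Lemma~\ref{lm:dp.noiseDiff.abstract} then yields the claimed guarantee. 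Since $\indNoiseDiff$ is the same direction $\oneVec$ for every neighboring pair (with scalar in $[-1,1]$), the bound on $\sdInd^2$ is uniform, and $\|\cancelNoiseDiffVec\|_2$ is to be read as the worst case over pairs.

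The main obstacle — and really the heart of the theorem — is the orthogonality $\edgeVec{\aParty}{\anIter}\perp\oneVec$ for hidden messages: it confines the admissible noise perturbation to the single direction $\oneVec$ and thereby produces the $1/\nHonest$ averaging amplification that matches Central DP, and it is also the step where the two threat models and the bookkeeping with column stochasticity and $\totalWeightParty{}$ must all be reconciled. A secondary point to handle cleanly is that neighboring datasets may differ by a relabeling in addition to a single coordinate, so $u$ need not be sparse; this is harmless here, because only $\oneVec^\top u$ survives into $\indNoiseDiff$ and the full component $u^\perp$ is absorbed by the hidden-edge nullspace vectors.
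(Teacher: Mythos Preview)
Your proposal is correct and follows essentially the same route as the paper: the paper derives Theorem~\ref{thm:cdp.nodrop} as the special case $\coalSet=\partySetHonest$, $\totalWeightParty{\partyB}=1$ of Theorem~\ref{thm:coalition}, whose proof (via Lemma~A.2) hinges on the identity $\sum_{\partyB\in\coalSet}\totalWeightParty{\partyB}\,\edgeVec{\aParty}{\anIter}_{\partyB}=0$, which is exactly your orthogonality $\oneVec^\top\edgeVec{\aParty}{\anIter}=0$, and then takes $\indNoiseDiff$ proportional to $(\totalWeightParty{\partyB})_{\partyB}$, i.e.\ to $\oneVec$ here. You have simply inlined that argument in the no-dropout setting; your explicit treatment of the permutation in the neighboring relation is even a bit more careful than the paper's, which silently restricts to a single-coordinate change.
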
  

We prove this theorem in Appendix \ref{app:privacy.cdp}. By Equation \eqref{eq:noisyavg}, we have that Algorithm \ref{alg:ourprotocol} produces an unbiased estimate of   $\frac{1}{n}\sum_{\aParty \in \partySet} \valPriv{\aParty}$ with variance $\sdInd^2/n$. Under the conditions of Theorem \ref{thm:cdp.nodrop} this variance of Algorithm \ref{alg:ourprotocol}'s output can be made arbitrarily close to 
\[ 
\frac{c^2}{\nHonest n \epsilon^2}.
\]
The above matches the error of Secure Aggregation plus DP noise \cite{kairouz2021distributed,agarwal2021skellam,chen2022poisson} and 
Central DP when $\nHonest = n$. In the case where  $\nHonest < n$, \inca{} order-wisely matches the $O(1/n^2)$ mean squared error of Central DP. 
Protocols that exactly obtain Central DP accuracy in the presence of corrupted parties require a large communication cost \cite{sabater2023private,jayaraman2018distributed,dwork2006our}. 
In the remaining of Section \ref{sec:exp.dropouts} we show a bound on $\sdInd^2$ in the presence of dropouts.

\newcommand{\thmTotalSumStm}{ 
	Let $\epsilon,\delta \in (0,1)$. Let $\exec = (\transitionset,\onlSymb,\advValView)$ be defined as in Theorem \ref{thm:dp.abstract}. Let  $\wOnl = \sum_{\aParty \in \partySetHonest} \totalWeightParty{\aParty}$, where $\totalWeightVec$ is defined in Equation \eqref{eq:totalWeight}. Let $\hidden =  \partySetHonest\times [0,\iterCnt] \setminus \advValView$  be the set of pairs $(\aParty,\anIter)$ such that $\yVal{\aParty}{\anIter}$ is not seen by the adversary.  For all $(\aParty,\anIter) \in  \hidden$, let $\edgeVec{\aParty}{\anIter}$ be as defined in Lemma \ref{lm:edgeVec}. 
	If   $\{ \edgeVec{\aParty}{\anIter} \}_{(\aParty,\anIter) \in \hidden}$ has at least $|\nHonest| -1$ independent vectors, then
	there exist $\indNoiseDiff$ and $\cancelNoiseDiffVec$ as defined in Lemma \ref{lm:dp.noiseDiff.abstract} such that  $\exec$ satisfies  \obsDP{$\epsilon$}{$\delta$}{$\advValView$}  (or \colDP{$\epsilon$}{$\delta$}{$\corrnodes$})
	for any  
	\[ 
	\sdInd^2 >  \frac{(\nHonest-1)c^2}{(\wOnl-1)^2 \epsilon^2} 
	\] 
	
	and  
	\[ 
	\sdCancel^2 \ge \frac{\|\cancelNoiseDiffVec\|^2_2}{
		\frac{\epsilon^2}{c^2} - 		\frac{(\nHonest-1)}{ (\wOnl-1)^2 \sdInd^2}},
	\] 
	where $c^2 > 2\ln(1.25/\delta)$. 
}

\begin{theorem} 
	\thmTotalSumStm{} 
	\label{thm:totalsum}  
\end{theorem}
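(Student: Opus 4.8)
The plan is: for every pair of neighbouring honest datasets $\neighDatasetA,\neighDatasetB$, exhibit a noise shift $(\indNoiseDiff,\cancelNoiseDiffVec)$ of the type required by Lemma~\ref{lm:dp.noiseDiff.abstract}, bound $\|\indNoiseDiff\|_2^2$ by $(\nHonest-1)/(\wOnl-1)^2$, and then obtain the claimed conditions on $\sdInd^2$ and $\sdCancel^2$ directly from Equation~\eqref{eq:dpThetaMax}. By Equation~\eqref{eq:linear.adv} the view $\advViewExecFunc{\exec}$ depends on the honest inputs only through $\xMatCorr(\vectPrivHonest+\indNoiseVecH)+\nMatCorr\cancelNoiseVecHonest$, so the requirement $\advViewExec{\exec}{\neighDatasetA+\indNoiseVecH,\cancelNoiseVecHonest}=\advViewExec{\exec}{\neighDatasetB+\indNoiseVecH+\indNoiseDiff,\cancelNoiseVecHonest+\cancelNoiseDiffVec}$ of Lemma~\ref{lm:dp.noiseDiff.abstract} is equivalent to $\bigl(s-\indNoiseDiff,\,-\cancelNoiseDiffVec\bigr)\in\ker(\xMatCorr,\nMatCorr)$, where $s=\neighDatasetA-\neighDatasetB$; as in the sensitivity discussion preceding Lemma~\ref{lm:dp.noiseDiff.abstract}, and up to the permutations appearing in the neighbouring relation, it suffices to treat $s=s_{\aParty_0}\project{\aParty_0}$ for one honest coordinate $\aParty_0$ with $|s_{\aParty_0}|\le 1$. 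By Lemma~\ref{lm:edgeVec} each $\bigl(\edgeVec{\aParty}{\anIter},-\etaTransEl{\aParty}{\anIter}\bigr)$ with $(\aParty,\anIter)\in\hidden$ belongs to $\ker(\xMatCorr,\nMatCorr)$, so it is enough to realise $s-\indNoiseDiff$ inside $\mathrm{span}\{\edgeVec{\aParty}{\anIter}:(\aParty,\anIter)\in\hidden\}$ and take $\cancelNoiseDiffVec$ to be the matching linear combination of the $\etaTransEl{\aParty}{\anIter}$.

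The crux is to pin down this span. Substituting the entries of $\edgeVec{\aParty}{\anIter}$ from Lemma~\ref{lm:edgeVec} and using column stochasticity of $\transMatOnl{\anIter+1}$ yields $\sum_{\partyB\in\partySetHonest}\totalWeightParty{\partyB}\,\edgeVec{\aParty}{\anIter}_\partyB=\bigl(\transMatOnlEl{\anIter+1}{\aParty}{\aParty}-1\bigr)+\sum_{\partyB\in\partySetHonest\setminus\{\aParty\}}\transMatOnlEl{\anIter+1}{\partyB}{\aParty}=\sum_{\partyB\in\partySetHonest}\transMatOnlEl{\anIter+1}{\partyB}{\aParty}-1$. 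Moreover $(\aParty,\anIter)\in\hidden$ implies $\outNeigh{\aParty}{\anIter+1}\cap\corrnodes=\emptyset$ --- otherwise a corrupted party would observe $\yVal{\aParty}{\anIter}$, contradicting $(\aParty,\anIter)\notin\advValView$ --- so the entries of column $\aParty$ of $\transMatOnl{\anIter+1}$ indexed by $\corrnodes$ all vanish, whence $\sum_{\partyB\in\partySetHonest}\transMatOnlEl{\anIter+1}{\partyB}{\aParty}=1$ and the sum above equals $0$. Thus every $\edgeVec{\aParty}{\anIter}$ lies in the hyperplane $H=\{v\in\xVectSpace^{\nHonest}:(\honestVector{\totalWeightVec})^\top v=0\}$, $\honestVector{\totalWeightVec}$ denoting the restriction of $\totalWeightVec$ to honest coordinates, which (when $\honestVector{\totalWeightVec}\neq 0$) has dimension $\nHonest-1$; combined with the hypothesis that $\{\edgeVec{\aParty}{\anIter}\}_{(\aParty,\anIter)\in\hidden}$ has at least $\nHonest-1$ independent vectors, this forces $\mathrm{span}\{\edgeVec{\aParty}{\anIter}\}=H$.

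It remains to choose $\indNoiseDiff$ with $s-\indNoiseDiff\in H$, i.e. $(\honestVector{\totalWeightVec})^\top\indNoiseDiff=(\honestVector{\totalWeightVec})^\top s=\totalWeightParty{\aParty_0}s_{\aParty_0}$, of small norm; then $\cancelNoiseDiffVec$ is automatically obtained from the span membership $s-\indNoiseDiff=\sum_{(\aParty,\anIter)\in\hidden}\beta_{\aParty,\anIter}\edgeVec{\aParty}{\anIter}$ by setting $\cancelNoiseDiffVec=\sum_{(\aParty,\anIter)\in\hidden}\beta_{\aParty,\anIter}\etaTransEl{\aParty}{\anIter}$. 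Taking $\indNoiseDiff$ equal to $\totalWeightParty{\aParty_0}s_{\aParty_0}/(\wOnl-\totalWeightParty{\aParty_0})$ on each honest coordinate other than $\aParty_0$ and to $0$ on coordinate $\aParty_0$ satisfies the constraint and gives $\|\indNoiseDiff\|_2^2=(\nHonest-1)\bigl(\totalWeightParty{\aParty_0}s_{\aParty_0}/(\wOnl-\totalWeightParty{\aParty_0})\bigr)^2\le(\nHonest-1)/(\wOnl-1)^2$, the last inequality being the one-line fact that $\totalWeightParty{\aParty_0}/(\wOnl-\totalWeightParty{\aParty_0})\le 1/(\wOnl-1)$ because $\totalWeightParty{\aParty_0}\le 1$ (we use here $\wOnl>1$, so that also $\wOnl-\totalWeightParty{\aParty_0}>0$). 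Plugging $\|\indNoiseDiff\|_2^2\le(\nHonest-1)/(\wOnl-1)^2$ into Equation~\eqref{eq:dpThetaMax}: the stated lower bound on $\sdInd^2$ makes $\epsilon^2/c^2-(\nHonest-1)/\bigl((\wOnl-1)^2\sdInd^2\bigr)>0$, and the stated lower bound on $\sdCancel^2$ then dominates the value $\|\cancelNoiseDiffVec\|_2^2/\bigl(\epsilon^2/c^2-\|\indNoiseDiff\|_2^2/\sdInd^2\bigr)$ actually needed, so that $\|\indNoiseDiff\|_2^2/\sdInd^2+\|\cancelNoiseDiffVec\|_2^2/\sdCancel^2\le\epsilon^2/c^2$; Lemma~\ref{lm:dp.noiseDiff.abstract} then yields \obsDP{$\epsilon$}{$\delta$}{$\advValView$} (resp. \colDP{$\epsilon$}{$\delta$}{$\corrnodes$}).

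The step I expect to be the main obstacle is the span computation of the second paragraph, and in particular the observation that membership of $(\aParty,\anIter)$ in $\hidden$ annihilates the corrupted contributions to column $\aParty$ of $\transMatOnl{\anIter+1}$ --- this is exactly what makes the relevant hyperplane the $\totalWeightVec$-weighted $(\honestVector{\totalWeightVec})^\perp$, rather than something depending on $\corrnodes$ --- together with the routine but fiddly reduction of the permutation-laden neighbouring relation to a single-coordinate change and the treatment of the degenerate cases $\honestVector{\totalWeightVec}=0$ and $\wOnl\le 1$. Once the span is identified, the choice of $\indNoiseDiff$ and the arithmetic with $\sdInd^2$ and $\sdCancel^2$ are elementary.
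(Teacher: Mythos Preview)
Your argument is correct and follows essentially the same route as the paper, with two organisational differences worth noting. First, the paper derives Theorem~\ref{thm:totalsum} as a one-line specialisation of the more general Theorem~\ref{thm:coalition} (taking $\coalSet=\partySetHonest$); the span identification $\mathrm{span}\{\edgeVec{\aParty}{\anIter}\}=H$ that forms the heart of your second paragraph is the content of the paper's Lemma~\ref{lm:partialWeightSum}, proved by the same weighted-sum-equals-zero computation you carry out. Second, the paper chooses $\indNoiseDiff$ to be the minimum-norm vector satisfying $(\honestVector{\totalWeightVec})^\top\indNoiseDiff=\totalWeightParty{\aParty_0}$, namely $\indNoiseDiff_\partyB=-\totalWeightParty{\aParty_0}\totalWeightParty{\partyB}/\|\honestVector{\totalWeightVec}\|_2^2$, giving $\|\indNoiseDiff\|_2^2=\totalWeightParty{\aParty_0}^2/\|\honestVector{\totalWeightVec}\|_2^2$, and then bounds this by $(\nHonest-1)/(\wOnl-1)^2$ via the inequality $w_{\max}^2/\|\honestVector{\totalWeightVec}\|_2^2\le(\nHonest-1)/(\wOnl-1)^2$; your constant-on-other-coordinates choice is larger in norm but reaches the same final bound through the simpler inequality $\totalWeightParty{\aParty_0}/(\wOnl-\totalWeightParty{\aParty_0})\le 1/(\wOnl-1)$. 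Both routes are valid; the paper's is tighter per-instance, yours is more elementary.
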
 

Theorem \ref{thm:totalsum} shows the needed increase of $\sdInd^2$ due to drop-out. We prove it in Appendix \ref{app:privacy.cdp}. As $\totalWeightParty{\aParty}$ is the proportion of $\valPriv{\aParty}$ injected to the computation for each $\aParty \in \partySetHonest$,  $\wOnl \in [\nHonest/\iterCnt, \nHonest]$ is the total amount for all private values. It is proportional to the online time of all parties. As  $\wOnl \in  O(n)$, then 
$\sdInd^2 \in O(c^2/n  \epsilon^2)$ which order-wisely matches $\sdInd^2$ in the setting without drop-outs. 

If a honest party drops-out permanently in an early iteration, it is possible that it was not able to send a sufficient number of messages in order for 
$\{ \edgeVec{\aParty}{\anIter} \}_{(\aParty,\anIter) \in \hidden}$ to have at least $|\nHonest| -1$ independent vectors. This would make Theorem \ref{thm:totalsum} inapplicable. However, as it has not completely canceled his correlated noise due to the permanent drop-out, it will remain protected by the uncanceled correlated noise terms even if $\sdInd^2$ is bounded. We reflect that possibility in the following theorem. 

\newcommand{\thmPermDropStm}{Let $\epsilon,\delta \in (0,1)$. Let $\exec = (\transitionset,\onlSymb,\advValView)$ be defined as in Theorem \ref{thm:dp.abstract}. Let $\coalSet \subseteq \partySetHonest$ be a coalition that contain all honest parties that did not drop-out permanently (i.e. $\onlNodes{\iterCnt} \cap \partySetHonest \subseteq \coalSet)$. 
	 Let  $\wOnl = \sum_{\aParty \in \partySetHonest} \totalWeightParty{\aParty}$, where $\totalWeightVec$ is defined in Equation \eqref{eq:totalWeight}
	 and  
	\[ 
	\hidden = \{ (\aParty, \anIter) \in \partySetHonest\times [0,\iterCnt] \setminus \advValView : \aParty \in \coalSet \land  \outNeigh{\aParty}{\anIter} \subseteq \coalSet \}.
	\]
	For all $(\aParty,\anIter) \in  \hidden$, let  $\edgeVec{\aParty}{\anIter}$ be as defined in Lemma \ref{lm:edgeVec}. 
	If   $\{ \edgeVec{\aParty}{\anIter} \}_{(\aParty,\anIter) \in \hidden}$ has at least $|\coalSet| -1$ independent vectors, 
	then
	there exist $\indNoiseDiff$ and $\cancelNoiseDiffVec$ as defined in Lemma \ref{lm:dp.noiseDiff.abstract} and  $\exec$ satisfies  \obsDP{$\epsilon$}{$\delta$}{$\advValView$}  (or \colDP{$\epsilon$}{$\delta$}{$\corrnodes$})
	for any  
	\[ 
	\sdInd^2 >  \frac{(|\coalSet|-1)c^2}{(\wOnl-1)^2 \epsilon^2} 
	\] 
	and  
	\[ 
	\sdCancel^2 \ge \frac{\|\cancelNoiseDiffVec\|^2_2}{
		\frac{\epsilon^2}{c^2} - 		\frac{(|\coalSet|-1)}{ (\wOnl-1)^2 \sdInd^2}},
	\] 
	where $c^2 > 2\ln(1.25/\delta)$. 
}

\begin{theorem}
	\thmPermDropStm{}
	\label{thm:coalition} 
\end{theorem}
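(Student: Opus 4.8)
The plan is to mirror the proof of Theorem \ref{thm:totalsum} with the coalition $\coalSet$ in place of $\partySetHonest$, adding a separate argument for the (necessarily permanently-dropped) honest parties outside $\coalSet$. By Lemma \ref{lm:dp.noiseDiff.abstract}, equivalently Equation \eqref{eq:dpThetaMax}, it suffices to produce, for every pair of neighboring honest datasets $\neighDatasetA,\neighDatasetB$, vectors $\indNoiseDiff$ and $\cancelNoiseDiffVec$ that leave the adversary's view $\advViewExecFunc{\exec}$ unchanged and satisfy $\|\indNoiseDiff\|_2^2/\sdInd^2 + \|\cancelNoiseDiffVec\|_2^2/\sdCancel^2 \le \epsilon^2/c^2$. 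As in Theorem \ref{thm:totalsum}, I would first reduce to the worst case where $\neighDatasetA$ and $\neighDatasetB$ differ only in a single honest coordinate $\aParty^*$ by at most $1$, so the difference is $d=\beta\project{\aParty^*}$ with $|\beta|\le 1$ (the permutation component of the neighboring relation being absorbed exactly as there).

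\textbf{Case $\aParty^*\in\coalSet$.} Since $\advViewExecFunc{\exec}$ is linear in its arguments, keeping the view fixed under $d\mapsto d-\indNoiseDiff$ together with a change $\cancelNoiseDiffVec$ of the canceling noise is equivalent to requiring $d-\indNoiseDiff$ to lie in the span of $\{\edgeVec{\aParty}{\anIter}\}_{(\aParty,\anIter)\in\hidden}$, with $\cancelNoiseDiffVec$ the matching combination of the $\etaTransEl{\aParty}{\anIter}$ supplied by Lemma \ref{lm:edgeVec}. The out-neighbour condition $\outNeigh{\aParty}{\anIter}\subseteq\coalSet$ in the definition of $\hidden$ forces every such $\edgeVec{\aParty}{\anIter}$ to be supported inside $\coalSet$, and column-stochasticity of $\transMatOnl{\anIter+1}$ applied to the explicit entries in Lemma \ref{lm:edgeVec} gives $\sum_{\partyB\in\coalSet}\totalWeightParty{\partyB}\,\edgeVec{\aParty}{\anIter}_\partyB = 0$. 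Together with the hypothesis that these vectors contain $|\coalSet|-1$ independent ones, their span is exactly the hyperplane $\{v:\mathrm{supp}(v)\subseteq\coalSet,\ \sum_{\partyB\in\coalSet}\totalWeightParty{\partyB}v_\partyB = 0\}$ of dimension $|\coalSet|-1$. Hence the minimal admissible $\indNoiseDiff$ is the projection of $d=\beta\project{\aParty^*}$ onto the one-dimensional orthogonal complement, spanned by $(\totalWeightParty{\partyB})_{\partyB\in\coalSet}$, giving $\indNoiseDiff$ proportional to this weight vector with $\|\indNoiseDiff\|_2^2 = \beta^2\totalWeightParty{\aParty^*}^2/\sum_{\partyB\in\coalSet}\totalWeightParty{\partyB}^2$; a convexity argument over the admissible weight vectors, exactly as in Theorem \ref{thm:totalsum} (using $\totalWeightParty{\partyB}\le 1$ and $\sum\totalWeightParty{\partyB}\le\wOnl$), bounds this by $(|\coalSet|-1)/(\wOnl-1)^2$, and bounds $\|\cancelNoiseDiffVec\|_2^2$ by a finite quantity. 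Substituting into Equation \eqref{eq:dpThetaMax} and isolating $\sdInd^2$ and $\sdCancel^2$ yields the two displayed inequalities.

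\textbf{Case $\aParty^*\in\partySetHonest\setminus\coalSet$.} Such a party is permanently dropped, so it never reaches iteration $\iterCnt$ and, by the definition of $\partDistrOnlSymb$, never performs the final cancellation; consequently, in Equations \eqref{eq:iter.gauss.init}--\eqref{eq:iter.gauss.cancel} the variable $\valPrivNoisy{\aParty^*}$ and $\aParty^*$'s canceling noise terms occur across its messages with a telescoping structure that admits one extra degree of freedom relative to the number of its messages. Writing the required per-message shifts induced by changing $\valPrivNoisy{\aParty^*}$ by $\beta$ as partial sums of the coefficients $\xCoeffOnl{\aParty^*}{\anIter}$, one solves uniquely for shifts of $\aParty^*$'s noise terms that cancel these shifts on every message of $\aParty^*$; since $\aParty^*$'s noise terms appear nowhere else in the system, the whole adversary view is left unchanged, with $\indNoiseDiff = 0$ in the $\aParty^*$-direction and only a bounded contribution to $\cancelNoiseDiffVec$. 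Hence this case imposes no constraint on $\sdInd^2$ beyond Case 1, and both displayed bounds hold for every neighboring pair.

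\textbf{Main obstacle.} The delicate point is the span characterization together with the accompanying constant: one must verify that the out-neighbour restriction in $\hidden$ is precisely what keeps the edge vectors inside $\coalSet$ and orthogonal to the coalition weight vector, that the $|\coalSet|-1$ independent vectors therefore fill the whole hyperplane (so $\indNoiseDiff$ can be taken as the pure weight-direction projection), and that the optimization yielding the factor $(|\coalSet|-1)/(\wOnl-1)^2$ goes through. A secondary subtlety is making the telescoping argument of Case 2 precise from the exact form of $\partDistrOnlSymb$ and checking that changing only $\aParty^*$'s own noise terms indeed leaves every downstream message untouched.
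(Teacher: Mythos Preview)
Your proposal is correct and takes essentially the same approach as the paper: both split on whether the differing party lies in $\coalSet$, both use the hyperplane span of $\{\edgeVec{\aParty}{\anIter}\}_{(\aParty,\anIter)\in\hidden}$ inside $\{v:\mathrm{supp}(v)\subseteq\coalSet,\ \sum_{\partyB\in\coalSet}\totalWeightParty{\partyB}v_\partyB=0\}$ (the paper packages this as Lemma~\ref{lm:partialWeightSum}) to produce a weight-direction $\indNoiseDiff$ with $\|\indNoiseDiff\|_2^2=\totalWeightParty{\aParty^*}^2/\|(\totalWeightParty{\partyB})_{\partyB\in\coalSet}\|_2^2$, and both then invoke Lemma~\ref{lm:dp.noiseDiff.abstract} and the same numerical bound used for Theorem~\ref{thm:totalsum}.

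One small caveat on your Case~2: the paper does not rely on a ``telescoping structure'' (which is specific to $\partDistrIncremSymb$) but instead uses the Valid Gaussian property~(iii), namely that $\matrixPartEl{-\iterCnt}{:}$ is invertible. Because a permanently-dropped party has $\matrixPartOnlEl{\aParty^*}{\iterCnt}{:}=0$, the rows of $\matrixPartOnl{\aParty^*}$ for online iterations are a subset of the rows of the invertible matrix $\matrixPartEl{-\iterCnt}{:}$ and are therefore linearly independent; this full-row-rank argument is what guarantees solvability of $\matrixPartOnl{\aParty^*}(\cancelNoiseDiff_{\aParty^*,:})^\top=-\xCoeffOnlVec{\aParty^*}\beta$ for a \emph{general} Valid $(\xCoeffVec,\matrixPart)$-Gaussian. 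Your choice $\indNoiseDiff=0$ in this case is in fact cleaner than the paper's (which sets a nonzero but bounded $\indNoiseDiff_{\aParty^*}$ only to match the Case~1 bound uniformly); both are valid.
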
  

We prove this theorem in Appendix \ref{app:privacy.cdp}. 
Here, $\hidden$ is 
the set of pairs $(\aParty,\anIter)$ such that $\yVal{\aParty}{\anIter}$ has not been seen by the adversary and only sent between the members of $\coalSet$.
Theorem \ref{thm:coalition} shows how $\sdInd^2$ increases if honest parties cannot join the coalition $\coalSet$, which is an ``adequately connected component'' of honest users. 
However, if over time a party performs sufficient exchanges with other parties, they are likely to be part of $\coalSet$.   
%

So far we have shown conditions under which DP can be achieved while $\sdInd^2$, which is an important factor to determine output accuracy, remains bounded.
  However, $\sdCancel^2$ still depends on $\cancelNoiseDiffVec$.
  A large $\sdCancel^2$ implies a larger risk (in terms of additional error) in case a party drops out permanently.
  One can reduce  $\sdCancel^2$ by increasing $\iterCnt$.  Indeed, more iterations means that in every iteration a smaller fraction of $\valPriv{\aParty}$ is injected, and hence the needed $\|\indNoiseDiff\|$ and $\cancelNoiseDiffVec$ become smaller.  While this dependency on $\iterCnt$ isn't explicit from the above theorems, in
Section \ref{sec:exp.dropouts}, we empirically show
	how $\sdCancel^2$ depends on $\iterCnt$.

\subsection{Topological Conditions for Privacy} 
\label{sec:privacy.negative}  

We show that certain conditions on the underlying topology of the parties' exchanges
already determine the number of linearly independent vectors of
$\{ \edgeVec{\aParty}{\anIter} \}_{(\aParty,\anIter) \in \hidden}$ for Theorems \ref{thm:cdp.nodrop} and \ref{thm:totalsum}. 

For all $\anIter \in [1, \iterCnt]$, let 
\[ \iterSimpHiddenEdges{\anIter} = \{ (\aParty,\partyB) \in \partySetHonest \times \partySetHonest : \transMatOnlEl{\anIter}{\partyB}{\aParty
} > 0 \land (\aParty, \anIter) \not\in \advValView  \}
\] 
be the set of edges whose messages the adversary has not seen. 
Let 
\begin{equation}
	\flatGraphHidden = \left(\partySetHonest, \bigcup_{\anIter=1}^{\iterCnt} \iterSimpHiddenEdges{\anIter}\right) 
	\label{eq:flatgraph.hidden} 
\end{equation} be the graph induced by these edges, which combines the unseen exchanges across iterations.  

\newcommand{\thmStronglyConnectedTotSumStm}{
	Let $\epsilon,\delta \in (0,1)$. Let $\exec = (\transitionset,\onlSymb,\advValView)$ be defined as in Theorem \ref{thm:dp.abstract}, $\wOnl = \sum_{\aParty \in \partySetHonest} \totalWeightParty{\aParty}$ and $\flatGraphHidden$ as defined in Equation \eqref{eq:flatgraph.hidden}. 
	If $\flatGraphHidden$ is strongly connected, then 
	$\exec$ satisfies  \obsDP{$\epsilon$}{$\delta$}{$\advValView$}  (or \colDP{$\epsilon$}{$\delta$}{$\corrnodes$})
	for $\sdInd^2$ and   $\sdCancel^2$ defined as in Theorem \ref{thm:totalsum}. 
}

\begin{theorem}
	\thmStronglyConnectedTotSumStm
	\label{thm.stronglyConnected.totalSum}
\end{theorem}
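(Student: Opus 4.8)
The plan is to reduce the statement to Theorem~\ref{thm:totalsum}. Since the bounds on $\sdInd^2$ and $\sdCancel^2$ and the resulting \obsDPAcron{}/\colDPAcron{} guarantees are exactly the ones stated there, it suffices to verify that theorem's only nontrivial hypothesis under the assumption that $\flatGraphHidden$ is strongly connected: namely, that $\{\edgeVec{\aParty}{\anIter}\}_{(\aParty,\anIter)\in\hidden}$ contains at least $\nHonest-1$ linearly independent vectors. Equivalently, letting $L$ denote the space of $z\in\xVectSpace^{\nHonest}$ with $z^\top\edgeVec{\aParty}{\anIter}=0$ for all $(\aParty,\anIter)\in\hidden$, I would show $\dim L\le 1$. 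The approach is a discrete maximum-principle argument on the honest subgraph $\flatGraphHidden$: after an appropriate rescaling, a vector $z\in L$ behaves like a harmonic potential for the hidden exchanges, and strong connectivity pins it down to a single degree of freedom.

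First I would record a no-leakage fact: if $(\aParty,\anIter)\in\hidden$, then $\yVal{\aParty}{\anIter}$ is unseen, which in \colDPAcron{} forces every out-neighbour of $\aParty$ at iteration $\anIter+1$ to be honest (a corrupted out-neighbour would observe $\yVal{\aParty}{\anIter}$), while in \obsDPAcron{} there are no corrupted parties at all; in either case every $\partyB$ with $\transMatOnlEl{\anIter+1}{\partyB}{\aParty}>0$ is honest, so by column stochasticity of $\transMatOnl{\anIter+1}$ (line~\ref{alg.line:ourprotocol.drop.WOnl.3}) we get $\sum_{\partyB\in\partySetHonest}\transMatOnlEl{\anIter+1}{\partyB}{\aParty}=1$. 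Next, using the explicit entries of $\edgeVec{\aParty}{\anIter}$ from Lemma~\ref{lm:edgeVec} and the substitution $\tilde z_\partyB=z_\partyB/\totalWeightParty{\partyB}$ (legitimate since $\totalWeightParty{\partyB}\neq 0$ for honest parties, as already required in Lemma~\ref{lm:edgeVec}), a one-line computation rewrites $z^\top\edgeVec{\aParty}{\anIter}=0$ as the harmonic identity $\tilde z_\aParty=\sum_{\partyB\in\partySetHonest}\transMatOnlEl{\anIter+1}{\partyB}{\aParty}\tilde z_\partyB$; by the no-leakage fact this expresses $\tilde z_\aParty$ as a convex combination of the values $\tilde z_\partyB$ over the honest recipients of $\yVal{\aParty}{\anIter}$.

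Then I would run the maximum principle. The case $\nHonest=1$ is vacuous, so assume $\nHonest\ge 2$; strong connectivity of $\flatGraphHidden$ then guarantees that every honest vertex has an outgoing edge in $\flatGraphHidden$, i.e.\ it sends at least one hidden message that some honest party actually incorporates. Let $S\subseteq\partySetHonest$ be the (nonempty) set of honest parties at which $\tilde z$ attains its maximum. For $\aParty\in S$ and any out-edge $(\aParty,\partyB)$ of $\flatGraphHidden$ --- which by Equation~\eqref{eq:flatgraph.hidden} arises from some iteration with $(\aParty,\anIter_0)\in\hidden$ and $\transMatOnlEl{\anIter_0+1}{\partyB}{\aParty}>0$ --- the harmonic identity at $(\aParty,\anIter_0)$ writes $\tilde z_\aParty$ as a convex combination of honest $\tilde z$-values equal to the maximum, which forces every positively weighted term, in particular $\tilde z_\partyB$, to equal the maximum; hence $\partyB\in S$. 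Thus $S$ is closed under the out-edges of $\flatGraphHidden$, and strong connectivity forces $S=\partySetHonest$, so $\tilde z$ is constant and $z$ is a scalar multiple of $(\totalWeightParty{\partyB})_{\partyB\in\partySetHonest}$. Therefore $\dim L\le 1$ (in fact $\dim L=1$, since that vector lies in $L$ by the no-leakage fact), the rank hypothesis of Theorem~\ref{thm:totalsum} is met, and its conclusion transfers.

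I expect the main obstacle to be the bookkeeping that makes this reduction airtight in the presence of dropouts: matching the index shift between ``$\yVal{\aParty}{\anIter}$ unseen'' and the weight matrix $\transMatOnl{\anIter+1}$ appearing in $\edgeVec{\aParty}{\anIter}$, ensuring the edge vectors are only invoked for $\aParty\in\onlNodes{\anIter}$, and checking that $\totalWeightParty{\partyB}\neq 0$ for honest parties for the distribution actually used --- together with the no-leakage step itself, which is the one place where \obsDPAcron{} and \colDPAcron{} have to be argued separately.
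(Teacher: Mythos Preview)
Your reduction to Theorem~\ref{thm:totalsum} and the maximum-principle argument are correct; this is a genuinely different route from the paper's own proof. The paper instead constructs, for each honest party $\aParty$, an averaged column $\transMatHiddenEl{:}{\aParty}$ from the columns $\transMatOnlEl{\anIter+1}{\partySetHonest}{\aParty}$ over hidden iterations, observes that the resulting matrix $\transMatHidden$ is a nonnegative column-stochastic weighted adjacency matrix of $\flatGraphHidden$, and then applies the Perron--Frobenius theorem to conclude that the eigenvalue $1$ of $(\transMatHidden)^\top$ is simple, whence $\transMatHidden-I$ has rank $\nHonest-1$; since each column of $\transMatHidden-I$ is a convex combination of rescaled edge vectors, the span of $\{\edgeVec{\aParty}{\anIter}\}_{(\aParty,\anIter)\in\hidden}$ has dimension at least $\nHonest-1$. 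Your harmonic/maximum-principle argument is in fact the combinatorial core of the Perron--Frobenius step (simplicity of the Perron eigenvalue is typically proved by exactly this kind of propagation along a strongly connected graph), so the two proofs are morally equivalent; yours is more self-contained and avoids the black-box invocation, while the paper's is shorter once one is willing to cite Perron--Frobenius. Two small points of bookkeeping you already flagged deserve to be made explicit in a final write-up: both arguments silently use nonnegativity of the gossip weights $\transMatOnlEl{\anIter}{\partyB}{\aParty}$ (needed for irreducibility/convexity), and the vectors $\edgeVec{\aParty}{\anIter}$ live in $\xVectSpace^n$ but have zero entries on $\corrnodes$ under your no-leakage fact, so the rank computation in $\xVectSpace^{\nHonest}$ transfers to $\xVectSpace^n$ unchanged.
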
  
We prove the theorem in Appendix \ref{app:privacy.topologies}. Essentially, we show  that if $\flatGraphHidden$ is strongly connected, $\{ \edgeVec{\aParty}{\anIter} \}_{(\aParty,\anIter) \in \hidden}$ has at least $\nHonest -1$ linearly independent vectors and therefore theorems \ref{thm:cdp.nodrop} and \ref{thm:totalsum} are applicable \footnote{Theorem \ref{thm:cdp.nodrop}  is a special case of Theorem \ref{thm:totalsum}, therefore the same reasoning applies.}. Analogous results can be obtained if the graph induced by the unseen interactions inside a coalition $\coalSet$ is strongly connected for  Theorem \ref{thm:coalition}  to apply. 

Finally, we provide a negative result on static interactions: that is, when all parties interact with the same neighbors in all iterations, then it is difficult to obtain privacy using our theorems. In particular if all the messages of two parties are observed, the preconditions of our main results are not met. 

\newcommand{\thmStaticStm}{
Let $\exec = (\transitionset,\onlSymb,\advValView)$ be defined as in Theorem \ref{thm:dp.abstract} 
Let $\hidden =  \partySetHonest\times [0,\iterCnt] \setminus \advValView$  be the set of pairs $(\aParty,\anIter)$ such that $\yVal{\aParty}{\anIter}$ is not seen by the adversary.  For all $(\aParty,\anIter) \in  \hidden$, let $\edgeVec{\aParty}{\anIter}$ be as defined in Lemma \ref{lm:edgeVec}. 
If parties do not change neighbors across iterations, (i.e., $\transMatOnl{\anIter} = \transMatOnl{1}$ for all $\anIter \in [2,\iterCnt]$) and there exist $\partyA,\partyB \in \partySetHonest$ such that $(\partyA,\anIter) \in \advValView$ and $(\partyB,\anIter) \in \advValView$ for all $\anIter  \in [0,\iterCnt]$, we have that 
$\{ \edgeVec{\aParty}{\anIter} \}_{(\aParty,\anIter) \in \hidden}$ has less than $\nHonest -1$ independent vectors.  
}
\begin{theorem}
\thmStaticStm 
	\label{thm:static} 
\end{theorem}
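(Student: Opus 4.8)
\textbf{Proof plan for Theorem~\ref{thm:static}.} The goal is to bound from above the number of linearly independent vectors in $\{\edgeVec{\aParty}{\anIter} : (\aParty,\anIter)\in\hidden\}$, i.e.\ $\dim\mathrm{span}\{\edgeVec{\aParty}{\anIter} : (\aParty,\anIter)\in\hidden\}$, by $\nHonest-2$. The plan has two moves: first, show that the static hypothesis makes $\edgeVec{\aParty}{\anIter}$ depend on the party $\aParty$ only, so the whole family collapses to at most one vector per honest party; then observe that the two fully observed honest parties $\partyA,\partyB$ never supply such a vector, leaving at most $\nHonest-2$ of them.

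For the first move I would start from the explicit description in Lemma~\ref{lm:edgeVec}: for honest $\aParty$ and $\anIter\in[0,\iterCnt-1]$ with $\aParty\in\onlNodes{\anIter}$, the vector $\edgeVec{\aParty}{\anIter}$ has coordinates $\edgeVec{\aParty}{\anIter}_\aParty=(\transMatOnlEl{\anIter+1}{\aParty}{\aParty}-1)/\totalWeightParty{\aParty}$ and $\edgeVec{\aParty}{\anIter}_\partyB=\transMatOnlEl{\anIter+1}{\partyB}{\aParty}/\totalWeightParty{\partyB}$ for the remaining honest $\partyB$. Under the hypothesis $\transMatOnl{\anIter}=\transMatOnl{1}$ for all $\anIter\in[2,\iterCnt]$ we get $\transMatOnl{\anIter+1}=\transMatOnl{1}$ for every $\anIter\in[0,\iterCnt-1]$; moreover $\totalWeightParty{\aParty}$ and $\totalWeightParty{\partyB}$ are per-party constants (sums over all iterations of the online coefficients, Equation~\eqref{eq:totalWeight}). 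Hence none of these coordinates depends on $\anIter$, so $\edgeVec{\aParty}{\anIter}=\edgeVec{\aParty}{0}$ for every admissible $\anIter$, and consequently $\{\edgeVec{\aParty}{\anIter} : (\aParty,\anIter)\in\hidden\}\subseteq\{\edgeVec{\aParty}{0} : \aParty\in\partySetHonest\}$. Pairs $(\aParty,\anIter)\in\hidden$ for which Lemma~\ref{lm:edgeVec} assigns no vector (namely $\anIter=\iterCnt$, or $\aParty\notin\onlNodes{\anIter}$) contribute nothing to the family and can be discarded at the outset, so this does not weaken the conclusion.

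For the second move I would invoke the other hypothesis: since $(\partyA,\anIter),(\partyB,\anIter)\in\advValView$ for all $\anIter\in[0,\iterCnt]$, no pair of the form $(\partyA,\anIter)$ or $(\partyB,\anIter)$ belongs to $\hidden=\partySetHonest\times[0,\iterCnt]\setminus\advValView$. Combining with the first move,
\[
\{\edgeVec{\aParty}{\anIter} : (\aParty,\anIter)\in\hidden\}\ \subseteq\ \{\edgeVec{\aParty}{0} : \aParty\in\partySetHonest\setminus\{\partyA,\partyB\}\},
\]
a set of at most $\nHonest-2$ vectors; hence its span has dimension at most $\nHonest-2<\nHonest-1$, as claimed. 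Since the preconditions of Theorems~\ref{thm:cdp.nodrop} and \ref{thm:totalsum} ask for at least $\nHonest-1$ independent vectors among $\{\edgeVec{\aParty}{\anIter}\}_{(\aParty,\anIter)\in\hidden}$, this exhibits the static regime with two fully observed honest parties as one where those theorems cannot be applied.

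I expect no genuine computation to be required; the only delicate point is bookkeeping. One must keep track of the domain of definition of $\edgeVec{\aParty}{\anIter}$ ($\anIter\le\iterCnt-1$ and $\aParty$ online), and note that the equality of transition matrices that matters is the one for the \emph{realized} matrices $\transMatOnl{\anIter}$ (which already absorb dropouts), not the attempted ones $\transMat{\anIter}$ — and the lemma's coordinate formula is written in terms of $\transMatOnl{}$, so the hypothesis as stated is exactly what is needed. The real content of the theorem is the conceptual observation that a time-invariant communication topology degenerates the per-message privacy-amplification vectors.
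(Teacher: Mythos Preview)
Your proposal is correct and follows essentially the same approach as the paper: use Lemma~\ref{lm:edgeVec} to see that under the static hypothesis each $\edgeVec{\aParty}{\anIter}$ depends only on the party $\aParty$, then exclude the two fully observed honest parties to bound the set by $\nHonest-2$ vectors. Your additional bookkeeping about the domain of definition of $\edgeVec{\aParty}{\anIter}$ and the distinction between $\transMatOnl{}$ and $\transMat{}$ is more careful than the paper's own write-up, but the core argument is identical.
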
  
We prove it in Appendix \ref{app:privacy.topologies}. Essentially if all messages of two honest parties are observed and no party changes neighbors in the entire execution, it is not possible to meet the conditions for privacy with bounded $\sdInd^2$. However, changes in the neighborhood can occur accidentally with dropouts. In that case, negative conditions do not apply. Note that when parties do not drop out, negative conditions are obtained easily if at least one corrupted party exchanges messages with two honest parties. Therefore, to increase the amount of different vectors in $\{ \edgeVec{\aParty}{\anIter}\}_{(\aParty, \anIter) \in \hidden}$, it is beneficial that parties change neighbors as much as possible. This aligns with \cite{touat2024scrutinizing}, which shows that dynamic networks are beneficial for privacy in decentralized learning as they increase the mixing speed of messages.  


\section{Empirical Evaluation}
\label{sec:exp}

In this section, we empirically evaluate \inca{}. 
The main question we aim to answer is: 

\emph{How does the mean squared error and communication cost of \inca{} compare with existing differentially private and decentralized techniques? }

We perform our evaluation under different adversarial settings, dropout regimes and communication parameters  of \inca{}.  
In all cases, parties will generate their per-iteration neighbors by choosing $k$ random parties from $\partySet$ to be their outgoing neighbors. 
Message weights are chosen evenly: for all $\anIter \in [1,\iterCnt]$,  $\transMatEl{\anIter}{\partyB}{\aParty} = 1/(k+1)$ for each $\partyB \in \outNeigh{\aParty}{\anIter}$ and  $\transMatEl{\anIter}{\aParty}{\aParty} = 1/(k+1)$.  We measure the mean squared error (MSE) of the privacy preserving estimations with respect to the ground truth over a set of samples. We measure the  communication cost of a protocol by counting the number of exchanges and iterations protocols perform. 

First, in Section \ref{sec:exp.nodrop} we perform our evaluation  when parties do not drop-out. Then we do it  in the presence of permanent drop-outs in Section \ref{sec:exp.dropouts}.

\subsection{Performance without Dropouts}
\label{sec:exp.nodrop}

\begin{figure}
	\includegraphics[width=0.4\textwidth]{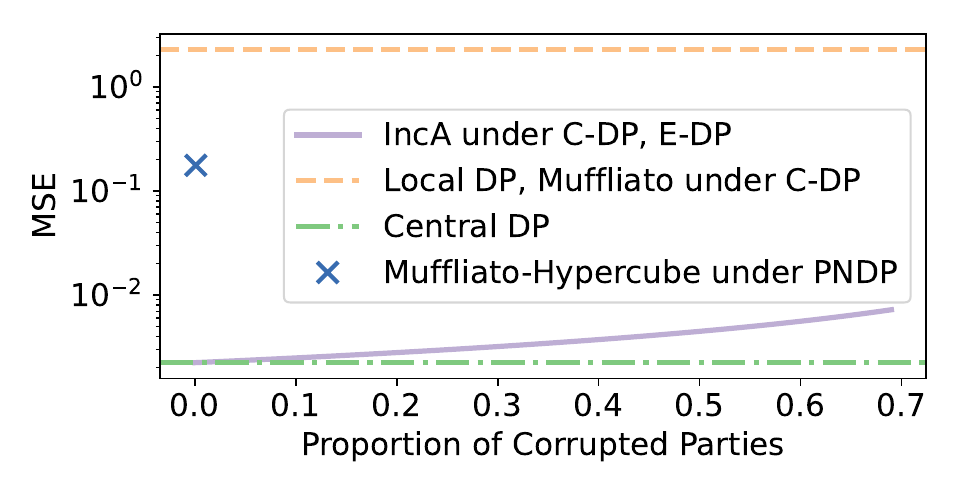}
	\caption{MSE of Local DP, Central DP, \inca{} and Muffliato using a Hypercube graph for $n=2^{10}$, $\epsilon=0.1$ and   $\delta=10^{-5}$. The MSE of \inca{} is in function of $\nHonest$ in the $x$-axis. } 
	\label{fig:accuracy1}
\end{figure} 

We analyze the performance of Algorithm \ref{alg:ourprotocol} in the absence of dropouts.  We start by discussing the error of \inca{}, given by Theorem \ref{thm:cdp.nodrop}. After that, we show the communication effort required such that preconditions to apply this theorem are met.


\paragraph{Mean Squared Error} As a result of Theorem \ref{thm:cdp.nodrop}, we concluded in Section \ref{sec:privacy.optimal} that Algorithm \ref{alg:ourprotocol} produces an estimate with MSE equal to $\frac{c^2}{\nHonest n \epsilon^2}$. 
In Figure \ref{fig:accuracy1}, we illustrate this for $n=1024$,  comparing \inca{},  Central DP, Local DP and Muffliato\cite{cyffers2022muffliato}. The latter is an approach for differentially private averaging that relaxes the adversary by reporting the mean privacy loss under Pairwise Network DP (PNDP). We show the error of \inca{} under \obsDPAcron{}, where there are no corrupted parties and \colDPAcron{}, where the proportion of corrupted parties varies in the $x$-axis from 0 to 0.7 (the error under \obsDPAcron{} correspond to the left extreme of the \inca{} curve). This shows the degradation of \inca{} due to collusion, which allows the adversary to know a proportion of independent noise terms $\{\indnoise{\aParty} \}_{\aParty \in \corrnodes}$. 
The error of \inca{} is much closer to Central DP than to Local DP even when more than $70\%$ of the parties are corrupted.  
For Muffliato, we use hypercube topologies (which provide the best privacy-utility trade-offs) under no collusion. As discussed in Section \ref{sec:prelim.threat}, the adversary of PNDP is substantially weaker than \obsDPAcron{} and \colDPAcron{}. We can see that the error of Muffliato is significantly higher than \inca{}, even with a weaker adversary. Moreover, when executed under \obsDPAcron{} or \colDPAcron{}, Muffliato matches the error of Local DP. We provide more details on the calculations for Muffliato in Appendix \ref{app:exp.muffliato}.  
When executed without dropouts, existent correlated noise techniques have the same error than ours \cite{sabater2022accurate,vithana2025correlated}. However, they are significantly impacted by dropouts, as we show in Section \ref{sec:exp.dropouts}. 

\paragraph{Communication Cost} The more iterations $\iterCnt$ that \inca{} performs, the higher is the likelihood  to obtain $\nHonest-1$ linearly independent vectors in $\{ \edgeVec{\aParty}{\anIter} \}_{(\aParty,\anIter) \in \hidden}$ in order to have the error given by Theorem \ref{thm:cdp.nodrop}.
We analyze the communication effort required to obtain these conditions. We evaluate \inca{} when each party communicates with $k \in \{1,2,3,4,5\}$ neighbors per iteration. We consider  $\iterCnt \in \{2,4,6,8,10\}$, where  50\% of messages per iteration chosen randomly are observed by the adversary or 30\% of the parties are chosen randomly to collude with the adversary.   For each parameter combination we run the protocol 100 times and count how many times preconditions of Theorem \ref{thm:cdp.nodrop} are met. 


Our results are shown for $n = 100$ in Figure \ref{fig:extract.success.nodrop}, with the  success rate as a function of the number of iterations ($\iterCnt$) or total number of messages ($k\iterCnt$) per party. The left and center figures show results when the adversary only observes messages (\obsDPLabel{}). In Figure \ref{fig:extract.success.nodrop.obs.it}, we can see that the higher $k$ is, the lower is the number of iterations $\iterCnt$ required for \inca{} to get 100\% success. However, as shown in Figure  \ref{fig:extract.success.nodrop.obs.msg}, lower $k$ requires less number of messages in total.
In Figure \ref{fig:extract.success.nodrop.corr.it}, we switch to \colDPLabel{}, where parties are corrupted. When  $k$ is lower, it is easier to achieve higher success rate. 
This happens because bigger values of $k$ reduce the size  of $(\edgeVec{\aParty}{\anIter})_{(\aParty,\anIter)\in \hidden}$ (and therefore the possibility that this set has $\nHonest-1$ linearly independent vectors), as the corrupted nodes receive more messages. 
This results hold for any Valid $(\xCoeffVec,\matrixPart)$-Gaussian $\partDistrSymb$, as when there are no dropouts, $\{ \edgeVec{\aParty}{\anIter} \}_{(\aParty,\anIter) \in \hidden}$ only depends on the communication matrices $\transitionset$.

\begin{figure*}[htbp] 
	\begin{subfigure}[b]{0.3\textwidth}
		\includegraphics[width=\textwidth]{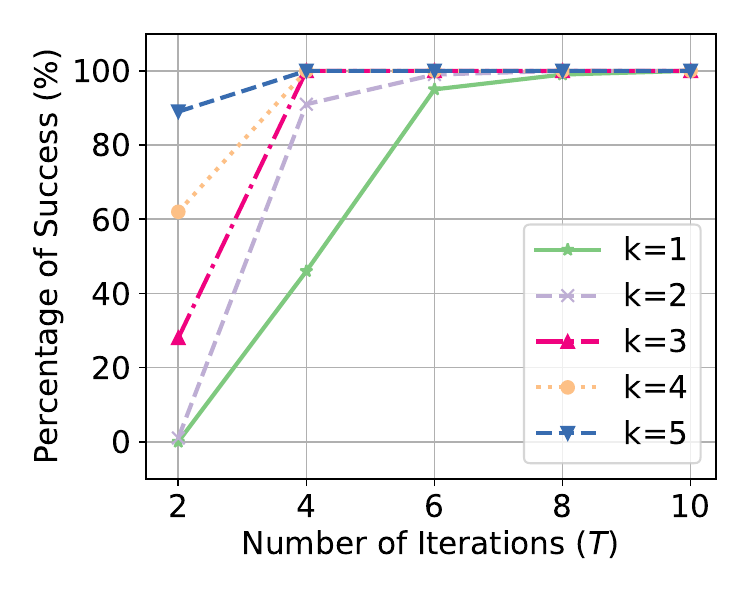}
		\caption{50\% of observed messages}
		\label{fig:extract.success.nodrop.obs.it}
	\end{subfigure}
	\begin{subfigure}[b]{0.3\textwidth}
		\includegraphics[width=\textwidth]{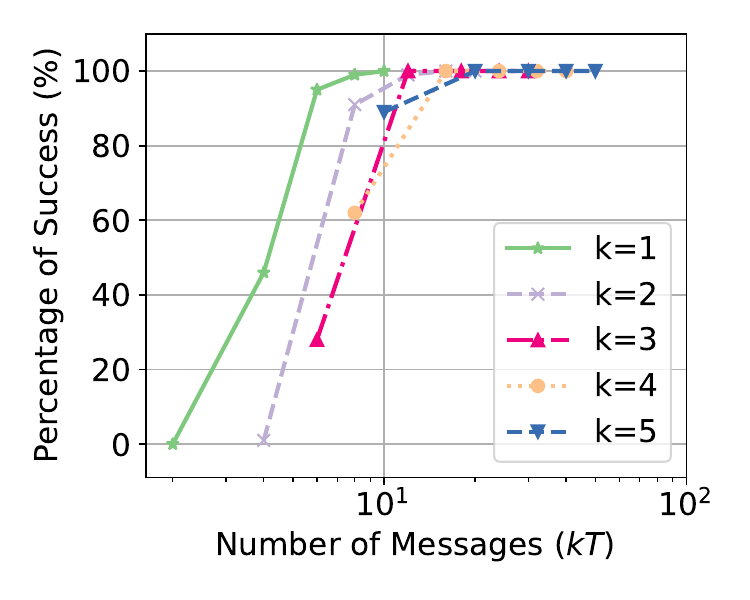}
		\caption{50\% of observed messages}
		\label{fig:extract.success.nodrop.obs.msg}
	\end{subfigure}
	\begin{subfigure}[b]{0.3\textwidth}
		\includegraphics[width=\textwidth]{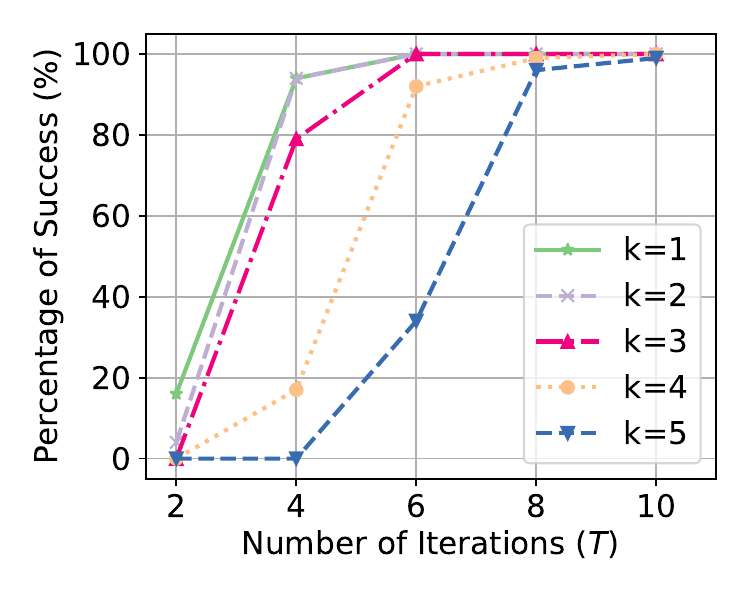}
		\caption{30\% of corrupted parties}
		\label{fig:extract.success.nodrop.corr.it}
	\end{subfigure}
	\caption{Percentage of success in function of the number of iterations (figures \ref{fig:extract.success.nodrop.obs.it} and  \ref{fig:extract.success.nodrop.corr.it}) and the number of messages (Figure \ref{fig:extract.success.nodrop.obs.it})  to meet the preconditions of Theorem \ref{thm:cdp.nodrop} for $k \in \{1,2,3,4,5\}$, where the adversary only observes messages or corrupts a subset of parties.  The total number of parties is $n=100$.}
	\label{fig:extract.success.nodrop}
\end{figure*}

\paragraph{Larger Graphs and Safer Parameters} To conclude Section \ref{sec:exp.nodrop}, we show how the amount of parties affect the communication cost. We analyze the number of messages required to apply Theorem \ref{thm:cdp.nodrop} when $n$ increases. For all sets of parameters, we run the protocol $10^5$ times and in all cases we obtain 100\% success. We focus on \colDPLabel{} with 50\% of corrupted parties. As seen in our previous experiment, the best $k$ under \colDPAcron is equal to $1$ as it reduces the number of messages and $\iterCnt$. In Figure \ref{fig:k-succeed}, the number of parties $n$ is chosen from  $\{100,500, 1000, 5000\}$.  Even with $5000$ parties, the communication cost is low: the amount of messages (and $\iterCnt$) required is slightly higher than 15. For this experiment, we made a slight modification in the way parties made exchanges: at each iteration, each party chose an outgoing neighbor that he/she had not chosen in any previous iteration.

\begin{figure}
	\includegraphics[width=0.35\textwidth]{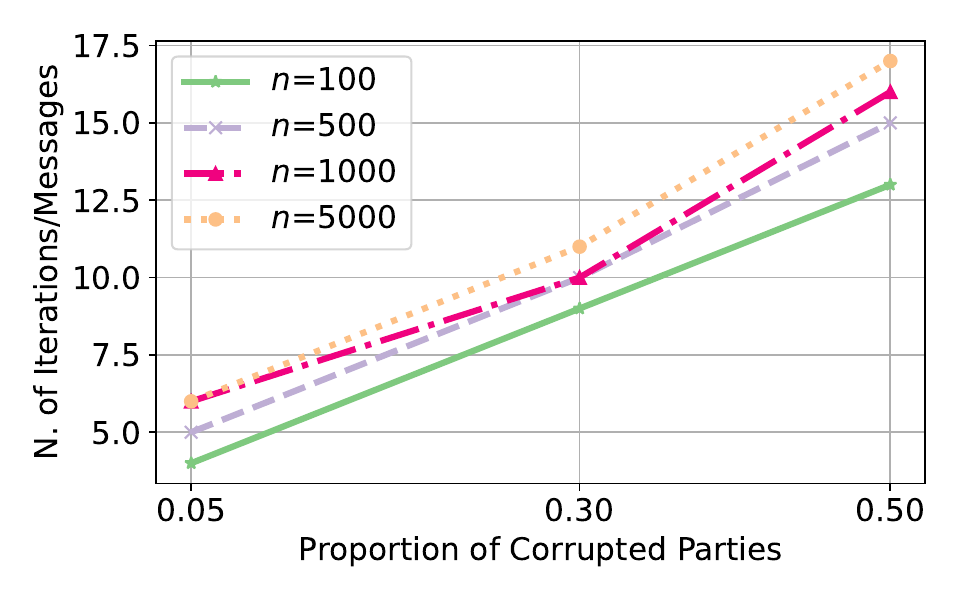}
	\caption{Minimum number of iterations (and messages since $k=1$) of \inca{}  such that it obtains $100\%$ success over $10^5$ runs of the protocol for $n \in \{100,500,1000,5000\}$ considering a proportion of corrupted nodes between $0.05\%$ and $0.5$.} 
	\label{fig:k-succeed}
\end{figure}

\subsection{Performance when Dropouts Occur} 
\label{sec:exp.dropouts}


We now analyze \inca{} in the presence of  dropouts under the \colDPLabel{} threat model. We assume that a proportion $\propDrop \in (0,1)$ of parties permanently drop out at a random iteration.
We use the $\partDistrIncremSymb$  (Example~\ref{ex:increm.inj}) as parameter for $\partDistrSymb$, in which private values are evenly injected across all iterations, thereby reducing the required $\sdCancel^2$. The error of \inca{} under dropouts depends on several factors. They consist on
\begin{enumerate}
	\item the variance $\sdInd^2$ of terms $\indNoiseVec$, which are not canceled   \label{it:dropfactor.1} 
	\item the variance of $\sdCancel^2$ terms $\cancelNoiseMat$  \label{it:dropfactor.2} 
	\item the amount of uncanceled terms $\cancelNoiseMat$  \label{it:dropfactor.3} 
	\item the coefficients $\totalWeightVec$ of $\vectPriv+\indNoiseVec$
	\label{it:dropfactor.4} 
\end{enumerate}
Theorems~\ref{thm:totalsum} and \ref{thm:coalition} provide bounds on $\sdInd^2$ (factor \ref{it:dropfactor.1}). We experimentally measure the impact of factors \ref{it:dropfactor.1}-\ref{it:dropfactor.4} in two experiments. The first experiment determines $\sdCancel^2$ (factor \ref{it:dropfactor.2}) and in the second we jointly assess the relation of all  factors. 


\newcommand{\indVarDeg}{\alpha}

\paragraph{First Experiment} For each run, we fix parameters  $\epsilon$, $\delta$, $n$, $k$, $\iterCnt$, the proportion of corrupted parties $\propCor \in [0,1)$, the dropout rate $\propDrop$, and the independent noise variance  $\sdInd^2$.
Then, we compute the required value of $\sdCancel^2$ such that \inca{} satisfies \colDP{$\epsilon$}{$\delta$}{$\corrnodes$} according to Theorem~\ref{thm:coalition}.
 We set $\sdInd^2 = \indVarDeg 2\ln(1.25/\delta)/\nOnlHonest \epsilon^2$ where $\nOnlHonest = n( 1- \propDrop - \propCor)$ and a degradation parameter $\indVarDeg$. The quantity $2\ln(1.25/\delta)/\nOnlHonest \epsilon^2$ is the theoretical lower bound of $\sdInd^2$ according to Theorem \ref{thm:coalition} when $\nOnlHonest$ honest users are online all iterations. As $\sdInd^2$ must be bigger than the theoretical bound, we chose $\indVarDeg=1.3$, providing a good balance between $\sdInd^2$ and $\sdCancel^2$. 
For each set of parameters, we perform 1000 runs for each parameter set and keep the worst case $\sdCancel^2$.

\paragraph{Second Experiment} For the second experiment, each run the protocol estimates the average of a random vector $\vectPriv \in [0,1]^n$ using the same sets of parameters than the first and worst case values of $\sdCancel^2$. We perform 1000 runs and compute the MSE of the estimations with respect to the true average. 

\paragraph{Comparison of Different Topologies} Figure \ref{fig:drop.topologies} shows the results for $n=200$, $\epsilon=0.2$, $\delta=10^{-5}$ and proportion  of corrupted parties $\rho=0.1$. We tested \inca{} with parameters  $k \in \{1,3,5\}$ and $T \in \{10,20\}$. It is clear that, for the same amount of permanent drop-outs, higher $\iterCnt$ decreases the error. Parameter $k$ impacts different factors that determine the MSE. First, as for the case without dropouts, higher $k$ increases the number of observed messages, which result in a larger $\sdCancel^2$ to achieve DP guarantees. Second, higher $k$ also increases the mixing speed of the protocol reducing $\sdCancel^2$. 
In experiments, these effects compensate for $k \in \{ 1,3,5\}$, providing similar results, although smaller $k$ yields smaller error and fewer amount of messages per iteration. 

\begin{figure}[htbp] 
	\begin{subfigure}[b]{0.3\textwidth}
		\includegraphics[width=\textwidth]{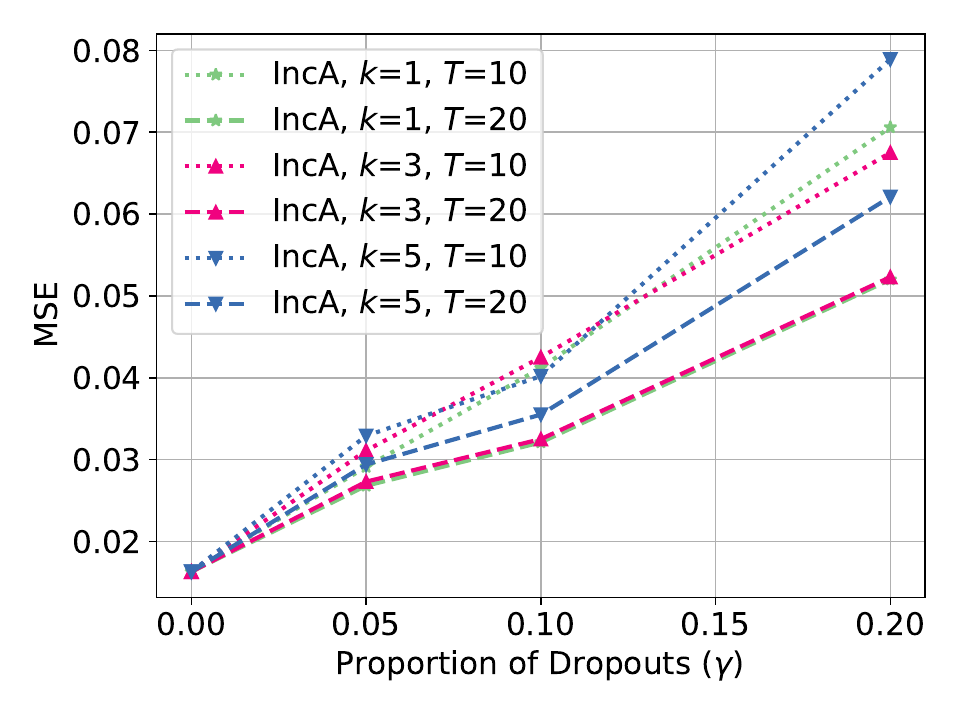}
		\caption{Parameters of \inca{}}
		\label{fig:drop.topologies}
	\end{subfigure}
	\begin{subfigure}[b]{0.3\textwidth}
		\includegraphics[width=\textwidth]{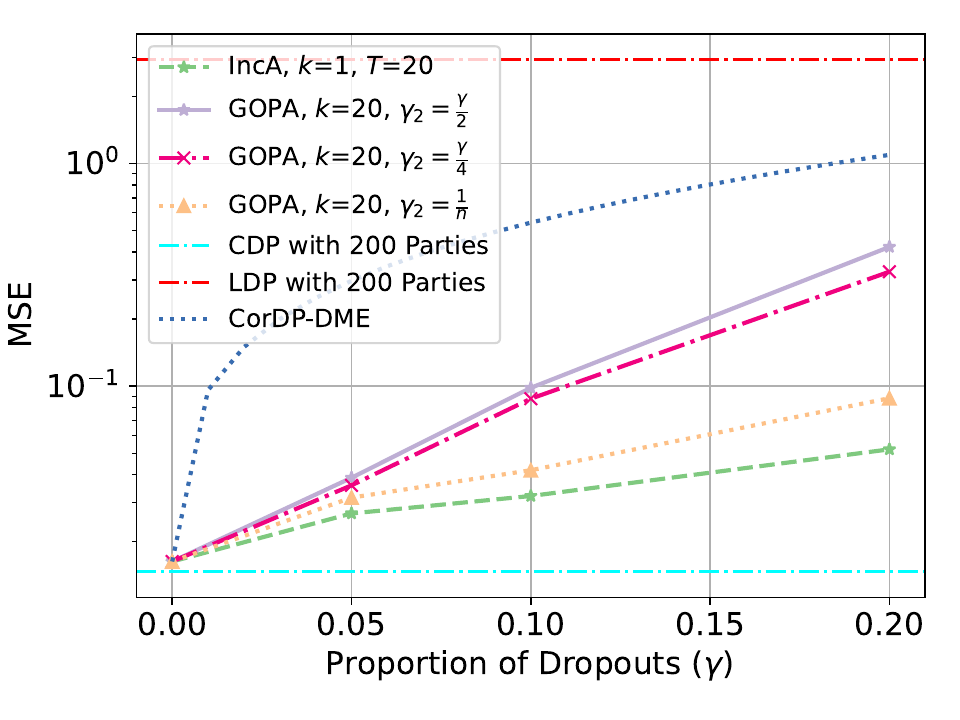}
		\caption{\inca{} and Previous Work}
		\label{fig:drop.sota}
	\end{subfigure}
	\caption{Mean Squared Error (MSE) in function of the proportion $\propDrop$ of dropouts for \inca{}, \gopa{} and \cordp{} protocols, over 1000 runs where $n=200$, $\epsilon=0.2$, $\delta=10^{-5}$ and proportion  of corrupted parties $\rho=0.1$.  As references we include the MSE of Central DP (CDP) and Local DP (LDP) with $n$ parties.} 
	\label{fig:drop}
\end{figure}

\paragraph{Comparison with Related Techniques} We compare our protocol with the GOPA  \cite{sabater2022accurate} and Cor-DP-DME \cite{vithana2025correlated}, that are fully decentralized and provide similar MSE when there are no dropouts. These protocols follow the same base protocol. In a first round parties  exchange pairwise canceling noise terms that they add to their private value to protect privacy. In a second round, they release the privatized noisy values, whose sum cancels the majority of the noise. In the case of dropouts, some of these terms remain uncanceled and  increase the error.  The more exchanges parties perform  with each other, the smaller is the variance of correlated noise and the lower is the error under dropouts. \cite{vithana2025correlated} is a particular case of GOPA when all parties communicate as much as possible to reduce the noise variance. 
GOPA provides an additional dropout mitigation phase in the second round, where parties rollback their noise. However, if  dropouts occur in this round, uncanceled terms remain.

Figure \ref{fig:drop.sota} shows the comparison of \inca{}, GOPA and Cor-DP-DME. For GOPA, $k$ is the number of exchanges at each round. In Cor-DP-DME each party performs $n$ messages per round. Similarly to  \inca{}, GOPA is evaluated with two experiments of 1000 runs,  first determining worst-case empirical variance and then the MSE.
Both experiments are done in 1000 runs. For Cor-DP-DME, we ignore the error induced by missing private values due to drop-out and only compute the error induced by the variance of noise terms when they are uncanceled, which is a lower bound on the real error and can be computed theoretically. 

We evaluate all of the protocols for a proportion of dropouts $\gamma \in \{ 0.05, 0.1, 0.2\}$,  and set GOPA and \inca{} to send 20 messages: $k=20$ for GOPA and $(k,T)= (1,20)$ for \inca{}. GOPA is very sensitive to the distribution of drop-outs over its two rounds. We denote by $\propDrop_2 \in [0,\gamma]$ to the proportion of dropouts that happen in the rollback round. We consider  $\propDrop_2 \in \{\frac{\propDrop}{2}, \frac{\propDrop}{4}, \frac{1}{n}\}$, note that $\propDrop_2 = \frac{1}{n}$ assumes an extremely optimistic scenario for GOPA, in which only a single party drops out in the rollback round. In all cases, we observe that  \inca{} outperforms the other protocols for  the same proportion of dropouts. If $\propDrop_2 = \frac{1}{n}$ GOPA approaches \inca{}, but significantly degrades if  $\propDrop_2 \in \{ \frac{\propDrop}{2}, \frac{\propDrop}{4} \}$. Cor-DP-DME provides the worst performance even if sends substantially more messages, showing the impact when noise terms are not rolled back.

\paragraph{Discussion} Each protocol may experience a different  number and distribution of dropouts. 
In practice, this  is influenced by factors  such as the runtime complexity (which is discussed in more detail in   \cite{gupta2016model,dutta2018slow,wang2019matcha}) and the ability of parties to rejoin the computation. The former depends on the number of iterations  ($\iterCnt$ for \inca{}) and their duration. The duration of an iteration is influenced by the number of computations and  messages   ($k$ for \inca{}). \inca{} allows  dropped parties to rejoin the computation. Therefore, permanent dropouts do not necessarily increase with  $\iterCnt$, as longer runtime may increase the probability of a party to rejoin. GOPA requires only two iterations but substantially more messages per iteration than \inca{} and rejoining is only allowed until parties start the rollback phase.

\section{Related Work}
\label{sec:relatedwork}

Many distributed data processing tasks can be accomplished through local computations and subsequent aggregation of their outcomes across participating entities~\cite{bharadwaj2024federated,mcmahan2017communication,kairouz2021advances,nicolas2024differentially, nicolas2025secure,barczewski2025differentially}. This makes averaging a simple, yet powerful primitive. Averages can be computed in a federated way \cite{bharadwaj2024federated,mcmahan2017communication} or in a decentralized way using gossip protocols \cite{boyd2006randomized,berthier2020accelerated,pittel1987spreading,demers1987epidemic}. However, many early approaches required that parties share their data in the clear, which may be exploited by external attackers \cite{melis2019exploiting,geiping2020inverting,mrini2024privacy,touat2024scrutinizing,kariyappa2023cocktail,pasquini2023security}. 

\paragraph{Differentially Private Mean Estimation} We study approaches that prevent information leakage by providing differential privacy guarantees \cite{dwork2006calibrating}. Satisfying DP guarantees typically requires to perturb the computation with noise, which compromises accuracy. In the central model of DP (CDP), where a trusted curator performs the computation, one can achieve a mean squared error of $O(1/n^2)$ induced by privacy noise~\cite{dwork2014algorithmic}. If such a party is not available, parties could add noise before sharing their sensitive information, which further compromises accuracy. In the local model of DP (LDP)\cite{duchi2013local,kasiviswanathan2011can,kairouz2014extremal,kairouz2016discrete,chen2020breaking}, where parties fully privatize their data before using it in a collaborative computation,
the estimation error is a factor of $n$ greater than in CDP and only yields acceptable accuracy when the number of parties is massive \cite{erlingsson2014rappor,ding2017collecting}. 

\paragraph{Approaches based on Cryptographic Primitives}
Instead of a trusted curator, one can also use
cryptographic primitives such as Secure Aggregation \cite{bonawitz2017practical,bell2020secure,taiello2024let,bell2023acorn}, other types of Multiparty Computation \cite{yao1982protocols,damgaard2013practical} or Shuffling \cite{ishai2006cryptography}. Using these techniques, one can recover similar privacy utility trade-offs of Central DP \cite{kairouz2021distributed,agarwal2021skellam,chen2022poisson,cheu2019distributed,balle2020private,ghazi2020private,heikkila2025using} or even exactly match them \cite{jayaraman2018distributed,dwork2006our,sabater2023private}. Implementing secure shuffling or general purpose multiparty computation approaches such as  
\cite{damgaard2013practical} requires a large communication cost or substantially  relaxes assumptions by considering multiple non-colluding servers or trusted parties that anonymize messages by shuffling them. Notably, some secure aggregation protocols \cite{bell2020secure,so2021turbo} require only $O(\log(n))$ communication per party, however, they rely on structured interaction approaches and  central coordination. 

\paragraph{Fully Decentralized Approaches} Fully decentralized approaches can reduce the noise required to achieve DP by relaxing adversary models or using correlated noise techniques. Existing relaxation techniques consider Network DP \cite{cyffers2022privacy} and Pairwise Network DP \cite{cyffers2022muffliato,cyffers2024differentially}. Both of these models substantially restrict the view of the adversary, making these models less applicable in practice (see our detailed discussion of Section \ref{sec:prelim.threat}). \cite{cyffers2022privacy,cyffers2024differentially} focus on decentralized stochastic gradient descent, while \cite{cyffers2022muffliato} is a protocol for private mean estimation. In Section \ref{sec:exp.nodrop}, we show that our approach outperforms \cite{cyffers2022muffliato} in accuracy despite considering a significantly stronger adversary. Correlated noise approaches \cite{imtiaz2019distributed,sabater2022accurate,allouah2024privacy,vithana2025correlated}  use pairwise canceling noise to protect privacy while keeping accuracy comparable to central DP when no dropouts occur. However, their accuracy degrades significantly if parties disconnect in the middle of the computation as pairwise noise does not cancel. This effect can be reduced up to some point with dropout mitigation techniques proposed by \cite{sabater2022accurate} but do not entirely solve the problem, as more dropouts can also occur in the mitigation phase. In Section \ref{sec:exp.dropouts} we provide a detailed comparison with \cite{sabater2022accurate,vithana2025correlated}. We do not compare with \cite{imtiaz2019distributed} and \cite{allouah2024privacy} since the former relies on Secure Aggregation, which was discussed previously, and the latter does not address the mean estimation task. Finally some correlated noise techniques also consider malicious adversaries who corrupt parties to actively deviate from the protocol to bias the computation \cite{sabater2022accurate,allouah2024privacy}.


\section{Conclusion and Future Work} 
\label{sec:conclusion}
We propose \inca{}, a protocol for fully decentralized mean estimation, a key primitive for decentralized and privacy-preserving computation. We demonstrate that \inca{} not only satisfies theoretical privacy guarantees but also maintains accuracy despite party failures and collusion, all while incurring reasonable communication costs.
Directions of future work include the evaluation of our technique within larger systems such as the computation of federated learning models, the derivation of explicit bounds on the variance of correlated noise and the theoretical quantification of the probability that our conditions hold under random communication graphs.

\bibliographystyle{ACM-Reference-Format}
\bibliography{main}

\appendix



\section{Missing Theoretical Material}
\label{app:privacy} 

\subsection{Proof of Lemma \ref{lm:validgauss}} 
\label{app:protocol} 

Lemma \ref{lm:validgauss}. \lemmaValidGaussStm
\begin{proof}
	$\partDistrEarlySymb$ is a $(\xCoeffVec, \matrixPart)$-Gaussian where $\xCoeffVec$ and $\matrixPart$ are as follows:  
	\begin{itemize} 
		\item $\xCoeff{\anIter} = 1$ for $\anIter=0$ and  $\xCoeff{\anIter} = 0$ for $\anIter\in[1,\iterCnt]$,
		\item $\matrixPartEl{0}{:} = \oneVec^\top$,
		\item for all $t \in [1,\iterCnt]$,   \[\matrixPartEl{\anIter}{k} = \begin{cases} 
			-1 \text{\quad if $k = \anIter$} \\
			0 \text{\quad if $k \in [1,\iterCnt] \setminus \{\anIter\}$}.
		\end{cases}\]
	\end{itemize}

	$\partDistrIncremSymb$ is a $(\xCoeffVec, \matrixPart)$-Gaussian where 
	\begin{itemize}  
		\item $\xCoeffVec = \oneVec/(\iterCnt+1)$
		\item for all $\anIter \in [0,\iterCnt]$ and $k \in [1,\iterCnt]$
		\[ \matrixPartEl{\anIter}{k} = \begin{cases}  
			1 \text{\quad if $k=t+1$,}\\
			-1 \text{\quad if $k=t$,}\\ 
			0 \text{\quad otherwise.}
		\end{cases} 
		\] 
	\end{itemize} 
	For both $\partDistrEarlySymb$ and $\partDistrIncremSymb$, $(\xCoeffVec, \matrixPart)$ and $\matrixPartEl{-\iterCnt}{:}$ are invertible matrices.  
\end{proof}

\subsection{Knowledge in Linear Relations} 
\label{app:privacy.knowledge}

We now show how to construct the system of linear equations presented in Equation \eqref{eq:linear.adv}. Recall that in our privacy proofs, each party $\aParty \in \partySet$ uses  $\partDistrGauss{\valPriv{\aParty}+\indnoise{\aParty}}{\xCoeffOnlVec{\aParty}}{\matrixPartOnl{\aParty}}$ as the parameter for $\partDistrOnlParty{\aParty}$ in Algorithm \ref{alg:ourprotocol.drop}. Recall that $\xCoeffOnlVec{\aParty} \in \R^{\iterCnt+1}$ and $\matrixPartOnl{\aParty} \in \R^{(\iterCnt+1)\times\iterCnt}$ dictate how private values and canceling noise terms are introduced at each iteration for party $\aParty$.  
%

\paragraph{Vectorized Gossip}
Let $\yVecIter{\anIter} \coloneq (\yVal{1}{\anIter}, \dots ,\yVal{n}{\anIter})^\top \in \xVectSpace^n$ for all $\anIter \in \{0,\dots, \iterCnt \}$, where $\yVal{\aParty}{\anIter}$ is defined for all $\aParty \in \partySet$ and $\anIter \in [0, \iterCnt]$ in equations \eqref{eq:iter.gauss.init} and \eqref{eq:iter.gauss.cancel}.  Recall that $\project{\aParty}$ is a column vector with a 1 in the $\aParty$th coordinate and $0$s elsewhere. Let $D^{(\anIter)}_c = \diag{\xCoeffOnl{1}{\anIter}, \dots,  \xCoeffOnl{n}{\anIter}} \in  \R^{n\times n}$ for all $\anIter \in [0,\iterCnt]$. The vectorized version of the \initPhase{} Phase (line  \ref{alg.line:ourprotocol.drop.init}) is 
\begin{equation}
	\yVecIter{0} \coloneq D^{(0)}_c (\vectPriv + \indNoiseVec) + \sum_{\aParty=1}^n \left( \matrixPartOnlEl{\aParty}{0}{:}  \noiseMatrixEl{\aParty}{:}^\top\right) \project{\aParty}
	\label{eq:init.vect} 
\end{equation}
and the vectorized version of the  \cancelPhase{} Phase (line \ref{alg.line:ourprotocol.drop.canceling}) is 
\begin{equation} 
	\yVecIter{\anIter} = 
	\transMatOnl{\anIter} \yVecIter{\anIter-1} + D^{(\anIter)}_c (\vectPriv + \indNoiseVec)  + \sum_{\aParty=1}^n \left(\matrixPartOnlEl{\aParty}{\anIter}{:}  \noiseMatrixEl{\aParty}{:}^\top \right) \project{\aParty}
	\label{eq:cancelIter.vect}
\end{equation}
for all  $\anIter \in [1, \iterCnt]$. 
Note that for all  $\anIter \in [0, \iterCnt]$ and $\aParty \in \partySet$, $\left(\matrixPartOnlEl{\aParty}{\anIter}{:}  \noiseMatrixEl{\aParty}{:}^\top \right) \project{\aParty}$ is a vector where only the $\aParty$-th value is different from 0.

Let $\nNoise \coloneq n\iterCnt$ be the total number of canceling noise terms and recall that   $\cancelNoiseVec \coloneq ( \noiseMatrixEl{1}{:},  \dots , \noiseMatrixEl{n}{:})^\top \in \xVectSpace^{\nNoise}$. 
For all $\anIter \in [0, \iterCnt]$, let
\[ \bigVecPart{\anIter} \coloneq \begin{ourmatrix}
	\project{1} \matrixPartOnlEl{1}{\anIter}{:}, &  \dots , &\project{n} \matrixPartOnlEl{n}{\anIter}{:} 
\end{ourmatrix} \ \in  \R^{n \times \nNoise}
\]
be  a block matrix formed by horizontally concatenating  $n$ blocks, each of size $n\times\iterCnt$ with only one nonzero row (i.e., the $\aParty$-th block only has the $\aParty$-th row with non-zero values). 
Equations \eqref{eq:init.vect} and \eqref{eq:cancelIter.vect} are equivalent to 
\begin{equation} 
	\yVecIter{0} =  D^{(0)}_c  (\vectPriv + \indNoiseVec)  +  \bigVecPart{0} \cancelNoiseVec
	\label{eq:init.vect.short} 
\end{equation} 
and
\begin{equation} 
	\yVecIter{\anIter} =  \transMatOnl{\anIter} \yVecIter{\anIter-1} +  D^{(\anIter)}_c  (\vectPriv + \indNoiseVec)  +  \bigVecPart{\anIter} \cancelNoiseVec
	\label{eq:cancelIter.vect.short} 
\end{equation} 
respectively.

\paragraph{Linear Equations} From Equation \eqref{eq:init.vect.short} and applying Equation \eqref{eq:cancelIter.vect.short} recursively 
we have that 
\begin{equation}
	\xMatIter{\anIter} (\vectPriv + \indNoiseVec)  + \nMatIter{\anIter} \cancelNoiseVec = \yVecIter{\anIter}
	\label{eq:cancelIter.recursive}
\end{equation}
for all $\anIter \in [0, \dots, \iterCnt]$ where  

\begin{align*}
	\xMatIter{0} &\coloneq  D^{(0)}_c , 
	&\xMatIter{\anIter} \coloneq \transMatOnl{\anIter} \xMatIter{\anIter-1} +  D^{(\anIter)}_c   \in \R^{n\times n}, \quad &\forall \anIter \in [\iterCnt] \\
	\nMatIter{0} &\coloneq \bigVecPart{0}, 
	&\nMatIter{\anIter} \coloneq \transMatOnl{\anIter} \nMatIter{\anIter-1} + \bigVecPart{\anIter}  \in \R^{n\times \nNoise},  \quad &\forall \anIter \in [\iterCnt].
\end{align*}
Let 
\[
\xMat \coloneq \begin{ourmatrix} 
	\xMatIter{0} \\
	\vdots\\
	\xMatIter{\iterCnt}
\end{ourmatrix} \in \R^{\nNoise \times n}
\text{, \quad}	
\nMat \coloneq \begin{ourmatrix} 
	\nMatIter{0} \\
	\vdots\\
	\nMatIter{\iterCnt}
\end{ourmatrix} \in \R^{\nNoise \times \nNoise}
\text{, and }
\yVec  \coloneq \begin{ourmatrix} 
	\yVecIter{0} \\
	\vdots\\
	\yVecIter{\iterCnt}
\end{ourmatrix} \in \R^{\nNoise}.
\]
The we can summarize Equation \eqref{eq:cancelIter.recursive} for all $t \in \{0, \dots, \iterCnt\}$ by 
\begin{equation}
	\xMat (\vectPriv + \indNoiseVec) + \nMat \cancelNoiseVec = \yVec
	\label{eq:linear.complete}
\end{equation}

\paragraph{Knowledge of the Adversary} 
Recall that $\corrnodes$ is the set of corrupted parties (if we are under the \colDPAcron{} setting), $\advValView$ is the set of observations made by the adversary,  $\partySetHonest$ is the set of honest parties with size $\nHonest$ and $\nNoiseHonest$ the total number of canceling noise terms generated by honest users. Additionally, recall that $\vectPrivHonest$, $\indNoiseVecH$ and  $\cancelNoiseVecHonest$ are the private vectors and noise terms of them. 
Let  $\vectPrivCorr$,  $\indNoiseVecCorr$ and $\cancelNoiseVecCorr$ be the vectors of values and noise terms generated by corrupted parties in $\corrnodes$ and
$[\aParty, \anIter]$ be the index of $\yVec$   such that $\yVec_{[\aParty,\anIter]} = \yVal{\aParty}{\anIter}$. Then,
\[ 
\xMatCorr^\advView (\vectPriv + \indNoiseVec) + \nMatCorr^\advView  \cancelNoiseVec = \yVecCorr' 
\]
is the linear system obtained by only keeping the rows with indexes $\{[\aParty,\anIter] : (\aParty,\anIter) \in \advValView \}$, which are observable by the adversary. Rearranging terms to separate the unknowns of the adversary, we generate the equivalent system
\[ 
\begin{ourmatrix} \xMatCorr & \xMatCorr^\corrSymb \end{ourmatrix}  \begin{ourmatrix}  \vectPrivHonest + \indNoiseVecH   \\  \vectPrivCorr + \indNoiseVecCorr \end{ourmatrix}    + \begin{ourmatrix}
	\nMatCorr  & \nMatCorr^\corrSymb  
\end{ourmatrix}  \begin{ourmatrix}  \cancelNoiseVecHonest \\ \cancelNoiseVecCorr \end{ourmatrix}   = \yVecCorr' . 
\]
Above, we  assume that rows which are linear combinations of the others are removed  such that $(\xMatCorr, \nMatCorr)$ is full rank.  Removing these rows does not reduce the  adversary's knowledge. 
The remaining system is also equivalent to
\[ 
\xMatCorr (\vectPrivHonest + \indNoiseVecH) +  \nMatCorr \cancelNoiseVecHonest = \yVecCorr' - \xMatCorr^\corrSymb (\vectPrivCorr + \indNoiseVecCorr) -  \nMatCorr^\corrSymb  \cancelNoiseVecCorr
\]
which is exactly the system we described in Equation \eqref{eq:linear.adv} for $\yVecCorr =  \yVecCorr' - \xMatCorr^\corrSymb (\vectPrivCorr + \indNoiseVecCorr) -  \nMatCorr^\corrSymb  \cancelNoiseVecCorr$, a vector known to the adversary.



\subsection{Preliminary Lemma}

Before presenting our privacy proofs, we prove a preliminary lemma.

\begin{lemma}  
	\label{lm:cDivEpsilon}
	Given $\epsilon, \delta \in (0,1)$, Equations 
	\begin{equation} 
		\epsilon -  \frac{1}{2}  \theta  \ge \sqrt{ \theta }
		\label{eq:cDivEpsilon.1}
	\end{equation} 
	and
	\begin{equation}
		\frac{1}{2} \frac{(\epsilon -  \frac{1}{2}   \theta) ^2}{ \theta} \ge  \ln\left(\frac{2}{\delta\sqrt{2\pi}}\right) 
		\label{eq:cDivEpsilon.2}
	\end{equation}  
	are satisfied for 
	\begin{equation}
		\theta  \le  \frac{\epsilon^2}{c^2}
	\end{equation}
	where $c^2 > 2\ln (1.25/\delta)$.

\end{lemma}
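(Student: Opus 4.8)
The plan is to establish \eqref{eq:cDivEpsilon.1} and \eqref{eq:cDivEpsilon.2} by reducing each to its worst case over the admissible range $0\le\theta\le\epsilon^2/c^2$ and then carrying out elementary estimates; this is essentially the chain of inequalities underlying the classical Gaussian mechanism analysis, with $\theta$ playing the role of $(\Delta/\sigma)^2$. First I would note that it suffices to verify both inequalities at the right endpoint $\theta^\ast:=\epsilon^2/c^2$. Indeed, the map $\theta\mapsto \epsilon-\tfrac12\theta-\sqrt\theta$ in \eqref{eq:cDivEpsilon.1} has derivative $-\tfrac12-\tfrac1{2\sqrt\theta}<0$, so it is strictly decreasing and attains its minimum on $[0,\theta^\ast]$ at $\theta^\ast$. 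Once \eqref{eq:cDivEpsilon.1} is known we have $\epsilon-\tfrac12\theta\ge\sqrt\theta\ge0$ on all of $[0,\theta^\ast]$, and then $\theta\mapsto(\epsilon-\tfrac12\theta)^2/\theta$ has derivative $(\epsilon-\tfrac12\theta)(-\epsilon-\tfrac12\theta)/\theta^2\le0$, hence is decreasing there as well, so the left-hand side of \eqref{eq:cDivEpsilon.2} is likewise minimized at $\theta^\ast$.

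Evaluating at $\theta^\ast$, where $\sqrt{\theta^\ast}=\epsilon/c$ and $\tfrac12\theta^\ast=\epsilon^2/(2c^2)$, inequality \eqref{eq:cDivEpsilon.1} becomes (after dividing by $\epsilon>0$) the claim $1-\tfrac1c\ge\tfrac{\epsilon}{2c^2}$. Since $\epsilon<1$ it is enough to show $1-\tfrac1c\ge\tfrac1{2c^2}$, i.e.\ $2c^2-2c-1\ge0$, equivalently $c\ge\tfrac{1+\sqrt3}{2}$; this follows from the lower bound $c^2>2\ln(1.25/\delta)$, which pushes $c$ above that threshold for the $(\epsilon,\delta)$ of interest.

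For \eqref{eq:cDivEpsilon.2} I would expand the left-hand side at $\theta^\ast$: from $\epsilon-\tfrac12\theta^\ast=\epsilon(1-\tfrac{\epsilon}{2c^2})$ and $\theta^\ast=\epsilon^2/c^2$ one gets $\tfrac12(\epsilon-\tfrac12\theta^\ast)^2/\theta^\ast=\tfrac12\bigl(c-\tfrac{\epsilon}{2c}\bigr)^2=\tfrac12\bigl(c^2-\epsilon+\tfrac{\epsilon^2}{4c^2}\bigr)$, so the goal reduces to $c^2-\epsilon+\tfrac{\epsilon^2}{4c^2}\ge 2\ln\bigl(\tfrac{2}{\delta\sqrt{2\pi}}\bigr)$. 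Writing the right-hand side as $2\ln\tfrac{1.25}{\delta}+2\ln\tfrac{1.6}{\sqrt{2\pi}}$, I would bound $2\ln\tfrac{1.25}{\delta}\le c^2$ using the hypothesis on $c$, so that it remains only to check the slack inequality $-\epsilon+\tfrac{\epsilon^2}{4c^2}\ge 2\ln\tfrac{1.6}{\sqrt{2\pi}}\approx-0.90$, which holds since $\epsilon<1$.

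I expect the slack estimate in the last paragraph to be the only non-mechanical point, as it is where the Gaussian-mechanism constants $1.25$ and $1/\sqrt{2\pi}$ must be tracked precisely; everything else is the two monotonicity reductions plus the quadratic inequality $2c^2-2c-1\ge0$. A variant that bypasses the endpoint reduction is to substitute $c_0:=\epsilon/\sqrt\theta\ge c$ and prove \eqref{eq:cDivEpsilon.1}--\eqref{eq:cDivEpsilon.2} for every admissible $c_0$ directly, which leads to exactly the same two estimates.
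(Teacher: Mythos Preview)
Your approach matches the paper's: both evaluate at the worst case $\theta=\epsilon^2/c^2$ (you justify this via an explicit monotonicity-in-$\theta$ argument, the paper simply substitutes the bound), derive $c-\epsilon/(2c)>1$ for \eqref{eq:cDivEpsilon.1}, and expand \eqref{eq:cDivEpsilon.2} to $c^2-\epsilon+\epsilon^2/(4c^2)\ge 2\ln\bigl(2/(\delta\sqrt{2\pi})\bigr)$ before comparing against the hypothesis $c^2>2\ln(1.25/\delta)$. Your threshold $c\ge(1+\sqrt3)/2$ for the first inequality is in fact slightly sharper than the paper's $c>3/2$.

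Your final slack estimate has a gap, which the paper's proof shares. You assert that $-\epsilon+\epsilon^2/(4c^2)\ge 2\ln(1.6/\sqrt{2\pi})\approx-0.90$ ``holds since $\epsilon<1$'', but for $\epsilon$ close to $1$ and $c$ moderately large the left-hand side approaches $-1$, so this justification is insufficient. The paper makes the analogous claim $-\epsilon+\epsilon^2/(4c^2)>-8/9$, which fails for the same reason: at $\epsilon=1$ and any $c>3/2$ one gets $-1+1/(4c^2)<-8/9$. This is an artifact of the Dwork--Roth Gaussian-mechanism derivation of the constant $1.25$; as stated the lemma is only literally valid for $\epsilon$ bounded below roughly $0.9$, and your earlier reliance on $c^2>2\ln(1.25/\delta)$ forcing $c\ge(1+\sqrt3)/2$ likewise needs $\delta\lesssim 0.49$. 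You correctly flagged this step as the one non-mechanical point; it just requires more than ``since $\epsilon<1$''.
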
 
\begin{proof}
	This lemma has been proven as part of the proof of bounds of the Gaussian Mechanism  \cite[Theorem A.1]{dwork2014algorithmic}. For $\theta  \le  \frac{\epsilon^2}{c^2}$,  we have that Equation \eqref{eq:cDivEpsilon.1} is implied if 
	\begin{align*} 
		&\epsilon -  \frac{\epsilon^2}{2c^2}   > \frac{\epsilon}{c} \\
		\iff &\epsilon > \frac{\epsilon}{c} + \frac{\epsilon^2}{2c^2} \\
		\iff & \frac{c}{\epsilon} \epsilon > \frac{c}{\epsilon} \left(\frac{\epsilon}{c} + \frac{\epsilon^2}{2c^2} \right)  \\
		\iff & c - \frac{\epsilon}{2c} > 1.
	\end{align*} 
	For $\epsilon \le 1$, the above is implied when $c > 3/2$. Moreover, for  $\theta  \le  \frac{\epsilon^2}{c^2}$ Equation\eqref{eq:cDivEpsilon.2} is implied if
	\begin{align*}
		&\frac{1}{2} \frac{\left(\epsilon -  \frac{\epsilon^2}{2c^2}   \right)^2}{ \frac{\epsilon^2}{ c^2} } \ge  \ln\left(\frac{2}{\delta\sqrt{2\pi}}\right) \\
		\iff &  \frac{1}{2} \left(\epsilon -  \frac{\epsilon^2}{2c^2}    \right)^2  \ge  \ln\left(\frac{2}{\delta\sqrt{2\pi}}\right)  \frac{\epsilon^2}{ c^2}\\
		\iff &  \frac{1}{2}\left(  \epsilon^2 -  \frac{\epsilon^3}{c^2} +  \frac{\epsilon^4}{4c^4} \right)    \ge  \ln\left(\frac{2}{\delta\sqrt{2\pi}}\right)  \frac{\epsilon^2}{ c^2} \\
		\iff &   c^2 -  \epsilon +  \frac{\epsilon^2}{4c^2}     \ge  2 \ln\left(\frac{2}{\delta\sqrt{2\pi}}\right). 
	\end{align*}  
	For $c > 3/2$ and $\epsilon \le 1$, the derivative of $c^2 -  \epsilon +  \frac{\epsilon^2}{4c^2}$ is positive. Therefore  
	\[ 
	c^2 -  \epsilon +  \frac{\epsilon^2}{4c^2} > c^2 - 8/9
	\] and thus Equation \eqref{eq:cDivEpsilon.2} is satisfied if $c^2 > 2 \ln(1.25 / \delta)$. 
\end{proof}

\subsection{Proof of Theorem \ref{thm:dp.abstract} and Corollary \ref{crl:sdp}} 
\label{app:privacy.abstract}

Now we prove Theorem \ref{thm:dp.abstract}. 
\\
Theorem \ref{thm:dp.abstract}. \textit{\abstractThmStatement}
\begin{proof}
	We start our analysis from Equation \eqref{eq:linear.complete}. 
	Matrices $\nMatCorr$, $\xMatCorr$ and $\vectPrivHonest$ are fixed, then honest parties draw $\indNoiseVecH$, $\cancelNoiseVecHonest$ and the adversary observes $\yVecCorr$. Let $\neighDatasetA$ and $\neighDatasetB$ be two possible values of $\vectPrivHonest$ that are neighboring as precised in Definition \ref{def:dp}. We prove that \[
	\Pr\left(\yVecCorr \mid \neighDatasetA\right) \le \epsilon \Pr\left(\yVecCorr \mid \neighDatasetB\right) + \delta
	\] for $\epsilon$ and $\delta$ according to Equation \eqref{eq:dp.abstract}. Recalling that $\indicator{\cdot}$ is the indicator function, we have that 
	{\small
		\begin{align*}
			&\Pr\left(\yVecCorr \mid \vectPrivHonest\right) \\
			&\quad=\int_{\noiseVec \in \xVectSpace^{\nHonest+\nNoiseHonest}} \indicator{\xMatCorr (\vectPrivHonest+ \indNoiseVecH)+  \nMatCorr \cancelNoiseVecHonest = \yVecCorr } \Pr\noiseVec  \text{d}\noiseVec 
		\end{align*} 
	}%
	
	As $(\xMatCorr, \nMatCorr)$ is full rank, then $(\xMatCorr, \nMatCorr)\noiseVec$ covers the complete space $\xVectSpace^{\nObs}$. Recall that $\varMatProd$ is the covaraince matrix of $(\xMatCorr, \nMatCorr)\noiseVec$. The probability of a certain value of $(\xMatCorr, \nMatCorr)\noiseVec = \yVecCorr - \xMatCorr \vectPrivHonest$ is given by the Gaussian distribution
	\[
	\Pr\left((\xMatCorr, \nMatCorr)\noiseVec = \dummyVec \right) =  \frac{\exp\left(-\dummyVec^\top \varMatProd^{-1} \dummyVec /2\right)}{\sqrt{(2\pi)^{\nObs} \det\left( \varMatProd\right)}}
	\]
	For two possible values $\dummyVecA = \yVecCorr-\xMatCorr \neighDatasetA$ and  $\dummyVecB = \yVecCorr-\xMatCorr \neighDatasetB$ we get the ratio:
	\begin{align*} 
		\frac{\Pr\left((\xMatCorr, \nMatCorr)\noiseVec = \dummyVecA \right)}{\Pr\left((\xMatCorr, \nMatCorr)\noiseVec = \dummyVecB \right)} &=
		\frac{\exp\left(-\dummyVecA^\top \varMatProd^{-1} \dummyVecA /2 \right)}{ \exp\left(-\dummyVecB^\top \varMatProd^{-1} \dummyVecB /2 \right)} \\
		&= 
		\exp \left( -(\dummyVecA+\dummyVecB)^\top \varMatProd^{-1} (\dummyVecA-\dummyVecB) / 2 \right). 
	\end{align*}

	Now we adapt the strategy of \cite{Dwork2014} for proving the general gaussian mechanism, computing first this ratio and showing it is bounded by $e^\epsilon$ with probability $1 - \delta/2$ (and above $e^{-\epsilon}$ with probability $1-\delta/2$). We have that 
	\begin{align}  
		\ln &\left|\frac{\Pr\left((\xMatCorr, \nMatCorr)\noiseVec = \dummyVecA \right)}{\Pr\left((\xMatCorr, \nMatCorr)\noiseVec = \dummyVecB \right)} \right| >\epsilon  \notag \\
		\iff &\left| -\frac{1}{2}(\dummyVecA+\dummyVecB)^\top \varMatProd^{-1} (\dummyVecA-\dummyVecB)  \right| > \epsilon.
		\label{eq:dp.abst.ratio}
	\end{align} 
	Now note that 
	\begin{align*}
		\dummyVecA-\dummyVecB &=\xMatCorr(\neighDatasetA-\neighDatasetB) \\ 
		\dummyVecA+\dummyVecB &= 2\yVecCorr -\xMatCorr(\neighDatasetA+\neighDatasetB).
	\end{align*}  
	Without loss of generality, we assume that our datasets $\neighDatasetA$ and $\neighDatasetB$ differ on the value of one party as much as possible. Let $\cancelNoiseDiff \vectPriv \coloneq \neighDatasetA - \neighDatasetB$. As private values lie in the interval $[0,1]$, we have that $\cancelNoiseDiff \vectPriv$ has one component equal to $1$ and  $0$ in all the other components. Equation \eqref{eq:dp.abst.ratio} is equivalent to 
	\begin{align*} 
		& \left| \frac{1}{2}\left(2\yVecCorr -\xMatCorr(\neighDatasetA+\neighDatasetB)\right)^\top \varMatProd^{-1}\xMatCorr \cancelNoiseDiff \vectPriv \right| >\epsilon  \\
		\iff & \left| \frac{1}{2}\left(2\yVecCorr - 2\xMatCorr \neighDatasetA  +\xMatCorr\cancelNoiseDiff \vectPriv\right)^\top \varMatProd^{-1}\xMatCorr \cancelNoiseDiff \vectPriv \right| >\epsilon \\
		\iff & \left| \frac{1}{2}\left(2\left(\xMatCorr, \nMatCorr\right)\noiseVec  +\xMatCorr\cancelNoiseDiff \vectPriv\right)^\top \varMatProd^{-1}\xMatCorr \cancelNoiseDiff \vectPriv \right|>\epsilon \\
		\iff & \left| \left(\left(\xMatCorr, \nMatCorr\right)\noiseVec\right)^\top \varMatProd^{-1}\xMatCorr \cancelNoiseDiff \vectPriv  + \frac{1}{2}\ (\xMatCorr\cancelNoiseDiff \vectPriv)^\top \varMatProd^{-1}\xMatCorr \cancelNoiseDiff \vectPriv \right|   > \epsilon.
	\end{align*} 
	We explore the conditions where 
	\begin{align*} 
		\left(\left(\xMatCorr, \nMatCorr\right)\noiseVec\right)^\top \varMatProd^{-1}\xMatCorr \cancelNoiseDiff \vectPriv + \frac{1}{2}\ (\xMatCorr\cancelNoiseDiff \vectPriv)^\top \varMatProd^{-1}\xMatCorr \cancelNoiseDiff \vectPriv > \epsilon 
	\end{align*}
	with probability smaller than $\delta/2$ (as said, the other side of the bound is analog). Note that  
	\[
	\xMatCorrCol \coloneq \xMatCorr \cancelNoiseDiff \vectPriv \in \R^{\nObs}
	\]  
	is the $\aParty$th column of $\xMatCorr$, where $\aParty$ is the coordinate of $\cancelNoiseDiff \vectPriv$ that is equal to $1$. The above is equivalent to 
	\begin{align*}
		\left(\left(\xMatCorr, \nMatCorr\right)\noiseVec\right)^\top \varMatProd^{-1} \xMatCorrCol >  \epsilon -  \frac{1}{2} \xMatCorrCol^\top \varMatProd^{-1}\xMatCorrCol .
	\end{align*}
	To bound the probability of the above to hold, we will use the tail bound 
	\begin{equation} 
		\Pr\left(\dummyTailBoundA > \dummyTailBoundB\right)  \le \frac{\sdTailBoundA}{\dummyTailBoundB \sqrt{2\pi}} \exp(-\dummyTailBoundB^2/2\sdTailBoundA^2)
		\label{eq:tailbound}  
	\end{equation}
	where $\sdTailBoundA$ is the standard deviation of $\dummyTailBoundA$. We set 
	\begin{align} 
		\dummyTailBoundA &= \left(\left(\xMatCorr, \nMatCorr\right)\noiseVec\right)^\top \varMatProd^{-1} \xMatCorrCol  \label{eq:tailbound.varA} \\
		\dummyTailBoundB &= \epsilon -  \frac{1}{2}  \xMatCorrCol^\top \varMatProd^{-1} \xMatCorrCol. \label{eq:tailbound.varB} 
	\end{align}
	and start by computing $\sdTailBoundA$:
	\begin{align} 
		\sdTailBoundA^2  &= \text{var}\left( \left(\left(\xMatCorr, \nMatCorr\right)\noiseVec\right)^\top \varMatProd^{-1} \xMatCorrCol \right) \notag \\
		&=   \xMatCorrCol^\top \varMatProd^{-1}  \text{var}\left(\left(\xMatCorr, \nMatCorr\right)\noiseVec\right)     \varMatProd^{-1}\xMatCorrCol  \notag \\
		&=  \xMatCorrCol^\top \varMatProd^{-1} \xMatCorrCol. 
		\label{eq:tailbound.sd} 
	\end{align} 
	By Equation \eqref{eq:tailbound}, proving that 
	\[ 
	\frac{\sdTailBoundA}{\dummyTailBoundB \sqrt{2\pi}} \exp\left(\frac{-\dummyTailBoundB^2}{2\sdTailBoundA^2}\right) \le \frac{\delta}{2}
	\]
	implies our privacy guarantee. The above is equivalent to 
	\begin{align*} 
		&\frac{\dummyTailBoundB}{\sdTailBoundA} \exp\left(\frac{\dummyTailBoundB^2}{2\sdTailBoundA^2}\right) \ge \frac{2}{\delta\sqrt{2\pi}}  
	\end{align*} 
	and, applying logarithms to both sides, to
	\begin{align*} 
		\ln \left( \frac{\dummyTailBoundB}{\sdTailBoundA} \right) + \frac{1}{2}\left(\frac{\dummyTailBoundB}{\sdTailBoundA}\right)^2 \ge \ln\left(\frac{2}{\delta\sqrt{2\pi}}\right).
	\end{align*} 
	The above is implied if 
	\begin{equation}  
		\ln \left( \frac{\dummyTailBoundB}{\sdTailBoundA} \right) \ge 0 
		\label{eq:dp.cond.1} 
	\end{equation}
	and
	\begin{equation}  
		\frac{1}{2}\left(\frac{\dummyTailBoundB}{\sdTailBoundA}\right)^2 \ge \ln\left(\frac{2}{\delta\sqrt{2\pi}}\right).	
		\label{eq:dp.cond.2} 	
	\end{equation}
	Below, we replace $\dummyTailBoundB$ and $\sdTailBoundA$ using Equations \eqref{eq:tailbound.varB} and \eqref{eq:tailbound.sd}. Then by noticing that Equation \eqref{eq:dp.cond.1} is equivalent to $\dummyTailBoundB \ge \sdTailBoundA$, we have that Equations \eqref{eq:dp.cond.1} and \eqref{eq:dp.cond.2} can be rewritten as  
	\begin{equation} 
		\epsilon -  \frac{1}{2}  \xMatCorrCol^\top \varMatProd^{-1} \xMatCorrCol  > \sqrt{ \xMatCorrCol^\top \varMatProd^{-1} \xMatCorrCol }
		\label{eq:dp.bound.1}
	\end{equation} 
	and
	\begin{equation}
		\frac{1}{2}\frac{ \left(  \epsilon -  \frac{1}{2}   \xMatCorrCol^\top \varMatProd^{-1} \xMatCorrCol   \right)^2}{ \xMatCorrCol^\top \varMatProd^{-1} \xMatCorrCol } \ge  \ln\left(\frac{2}{\delta\sqrt{2\pi}}\right) 
		\label{eq:dp.bound.2}
	\end{equation}  
	respectively. Finally, using Lemma \ref{lm:cDivEpsilon}, equations \eqref{eq:dp.bound.1} and \eqref{eq:dp.bound.2} are implied for 
	\[ 
	\xMatCorrCol^\top  \varMatProd^{-1} \xMatCorrCol \le  \frac{\epsilon^2}{c^2}. 
	\] 
	where $c^2 > 2 \ln(1.25 / \delta)$. 
\end{proof}

Now we prove Corollary \ref{crl:sdp}.

Corollary \ref{crl:sdp}.\textit{\crlConvexStm}
\begin{proof}
	Given that $(\xMatCorr, \nMatCorr)$ is full rank and $\varNoise$ is a positive diagonal matrix, then $\varMatProd \succ 0$. Given that 
	\[ 
	\begin{ourmatrix}
		\varMatProd & \xMatCorrCol \\
		\xMatCorrCol^\top & \epsilon^2/ c^2 
	\end{ourmatrix} \succeq 0,
	\]  by Schur's complement we have that 
	$\epsilon^2/ c^2  - \xMatCorrCol^\top 	\varMatProd^{-1} \xMatCorrCol \ge 0$ for each column $\xMatCorrCol$ of $\xMatCorr$. This is equivalent to the condition of Theorem \ref{thm:dp.abstract} to obtain $(\epsilon, \delta)$-DP. 
\end{proof}

\subsection{Proof of Lemma \ref{lm:dp.noiseDiff.abstract}}
\label{app:privacy.noiseDiff}

\emph{Lemma} \ref{lm:dp.noiseDiff.abstract}. \textit{\lemmaNoiseDiffStm} 

\begin{proof}
	It follows directly from  \cite[Theorem 1]{sabater2022accurate}. We adapt the proof to our setting.
	Let $\neighDatasetA$ and $\neighDatasetB$ be two neighboring datasets. Let $\yVecCorr$ be the view of the adversary after execution $\exec$.  Let 
	\[ S^{\adjASup} = \{ (\indNoiseVecH, \cancelNoiseVecHonest ) \in \xVectSpace^{\nHonest + \nNoiseHonest} : \advViewExec{\exec}{\neighDatasetA + \indNoiseVecH, \cancelNoiseVecHonest} =\yVecCorr \} 
	\]
	and analogously
	\[ S^{\adjBSup} = \{ (\indNoiseVecH, \cancelNoiseVecHonest ) \in \xVectSpace^{\nHonest + \nNoiseHonest} : \advViewExec{\exec}{\neighDatasetB + \indNoiseVecH, \cancelNoiseVecHonest} =\yVecCorr \}. 
	\]
	Let $t=(\indNoiseDiff, \cancelNoiseDiffVec)$. The precondition of the lemma implies that $S^{\adjASup} + t = S^{\adjBSup}$. Our $(\epsilon, \delta)$-differential privacy guarantee given by 
	\[ 
	\Pr( \yVecCorr | \neighDatasetA ) \le e^\epsilon  \Pr(\yVecCorr | \neighDatasetB) + \delta
	\]
	is implied if 
	\begin{equation}
		\Pr( (\indNoiseVecH, \cancelNoiseVecHonest ) ) \le e^\epsilon \Pr((\indNoiseVecH, \cancelNoiseVecHonest )+t ) + \delta
		\label{eq:diffSuffCond}
	\end{equation}
	Indeed if Equation \eqref{eq:diffSuffCond} holds we have 
	\newcommand{\diff}[1]{\mathrm{d}#1}
	\begin{align*}
		\Pr( \yVecCorr | \neighDatasetA )  &= \int_{(\indNoiseVecH, \cancelNoiseVecHonest) \in S^{\adjASup}} \Pr((\indNoiseVecH, \cancelNoiseVecHonest )) \diff{\indNoiseVecH} \diff{\cancelNoiseVecHonest} \\
		&\le  \int_{(\indNoiseVecH, \cancelNoiseVecHonest) \in S^{\adjASup}}  
		(e^\epsilon \Pr((\indNoiseVecH, \cancelNoiseVecHonest)+t) + \delta) \diff{\indNoiseVecH} \diff{\cancelNoiseVecHonest} \\
		&=  \int_{(\indNoiseVecH, \cancelNoiseVecHonest)-t \in S^{\adjASup}}  
		(e^\epsilon \Pr((\indNoiseVecH, \cancelNoiseVecHonest)) + \delta) \diff{\indNoiseVecH} \diff{\cancelNoiseVecHonest} \\
		&= \int_{(\indNoiseVecH, \cancelNoiseVecHonest) \in S^{\adjBSup}}  
		(e^\epsilon \Pr((\indNoiseVecH, \cancelNoiseVecHonest)) + \delta) \diff{\indNoiseVecH} \diff{\cancelNoiseVecHonest} \\
		&= e^\epsilon \Pr( \yVecCorr | \neighDatasetB ) + \delta.
	\end{align*}
	Therefore, it suffices to prove Equation \eqref{eq:diffSuffCond} to prove our $(\epsilon, \delta)$-DP guarantee.

	We will do that by proving that  $\Pr((\indNoiseVecH, \cancelNoiseVecHonest))\le e^\varepsilon P((\indNoiseVecH, \cancelNoiseVecHonest)+t)$ with probability at least $1-\delta$.
	\newcommand{\etadelta}{\gamma}
	Denoting $\etadelta=(\indNoiseVecH, \cancelNoiseVecHonest)$  for convenience, we need to prove that with
	probability $1-\delta$ it holds that 
	\[|\log(P(\etadelta)/P(\etadelta+t))| \le
	\varepsilon.
	\]
	We have that 
	\begin{eqnarray*}
		\Big|\log \frac{P(\etadelta)}{P(\etadelta+t)}\Big|
		&=& \Big|-\frac{1}{2}\etadelta^\top \varNoise^{-1} \etadelta + \frac{1}{2}
		(\etadelta+t)^\top \varNoise^{-1}(\etadelta+t)\Big|\\
		&=& \Big|\frac{1}{2}(2\etadelta+t)^\top\varNoise^{-1}  t\Big|.
	\end{eqnarray*}
	To ensure that $|\log(P(\etadelta)/P(\etadelta+t))| \le
	e^\varepsilon$ holds with probability at least $1-\delta$, since we are interested in the absolute value, we show that
	\[
	P\Big(\frac{1}{2}(2\etadelta+t)^\top\varNoise^{-1} t \ge \varepsilon\Big) \le
	\delta/2,
	\]
	the proof of the other direction is analogous.
	This is equivalent to
	\begin{equation}
		\label{eq:etadeltabound1}
		P(\etadelta\varNoise^{-1} t \ge \varepsilon - t^\top\varNoise^{-1} t/2) \le \delta/2.
	\end{equation}
	The variance of $\etadelta\varNoise^{-1} t$ is
	\begin{eqnarray*}
		\text{var}(\etadelta\varNoise^{-1} t) 
		&=&
		t^\top \varNoise^{-\top} \text{var}(\etadelta) \varNoise^{-1} t
		\\ 
		&=& t^\top \varNoise^{-\top} \varNoise \varNoise^{-1} t\\
		&=& t^\top \varNoise^{-1}  t.
	\end{eqnarray*}
	For any centered Gaussian random variable $Y$ with variance
	$\sigma_Y^2$, we have that
	\begin{equation}
		P(Y\ge \lambda) \le \frac{\sigma_Y}{\lambda\sqrt{2\pi}}\exp\left(-\lambda^2/2\sigma_Y^2\right).
		\label{eq:guassTailBound}
	\end{equation}
	Let $Y=\etadelta\varNoise^{-1} t$, $\sigma_Y^2=t^\top \varNoise^{-1} t$ and $\lambda=\varepsilon - t^\top\varNoise^{-1} t/2$, then satisfying
	\begin{equation}
		\frac{\sigma_Y}{\lambda\sqrt{2\pi}}\exp\left(-\lambda^2/2\sigma_Y^2\right)\le \delta/2
		\label{eq:gaussTailLeDelta}
	\end{equation}
	implies Equation \eqref{eq:etadeltabound1}.
	Equation \eqref{eq:gaussTailLeDelta} is equivalent to
	\[
	\frac{\lambda}{\sigma_Y}\exp\left(\lambda^2/2\sigma_Y^2\right)\ge
	2/\delta\sqrt{2\pi},
	\]
	or, after taking logarithms on both sides, to
	\[
	\log\left(\frac{\lambda}{\sigma_Y}\right)+\frac{1}{2}\left(\frac{\lambda}{\sigma_Y}\right)^2\ge
	\log\left(\frac{2}{\delta\sqrt{2\pi}}\right).
	\]
	To make this inequality hold, we require
	\begin{equation}
		\log\left(\frac{\lambda}{\sigma_Y}\right)\ge 0
		\label{eq:etadeltabound2}
	\end{equation}
	and 
	\begin{equation}
		\frac{1}{2}\left(\frac{\lambda}{\sigma_Y}\right)^2\ge
		\log\left(\frac{2}{\delta\sqrt{2\pi}}\right).
		\label{eq:etadeltabound3}
	\end{equation}
	Equation \eqref{eq:etadeltabound2} is equivalent to
	$
	\lambda\ge \sigma_Y
	$. 
	Substituting $\lambda$ and $\sigma_Y$ we get
	\begin{equation}
		\varepsilon - t^\top\varNoise^{-1} t/2 \ge 
		(t^\top \varNoise^{-1} t)^{1/2}.
		\label{eq:varDiff.bound.1}
	\end{equation}
	Substituting $\lambda$ and $\sigma_Y$ in Equation \eqref{eq:etadeltabound3}
	gives  
	\begin{equation}
		\frac{1}{2} \frac{(\epsilon -  \frac{1}{2}   t^\top \varNoise^{-1} t) ^2}{t^\top \varNoise^{-1} t} \ge  \ln\left(\frac{2}{\delta\sqrt{2\pi}}\right) 
		\label{eq:varDiff.bound.2} 
	\end{equation} 
	By Lemma \ref{lm:cDivEpsilon}, equations \eqref{eq:varDiff.bound.1} and \eqref{eq:varDiff.bound.2} are satisfied for
	$t^\top \varNoise^{-1} t \le \frac{\epsilon^2}{c^2}$ for $c^2 = 2\ln(1.25/\delta)$.

\end{proof}

\subsection{Proof of Lemma \ref{lm:edgeVec}}
\label{app:privacy.indist}

Lemma \ref{lm:edgeVec}. \textit{\stmIndLemma}
\begin{proof}
	For clarity, within the proof we ignore the superscript $\honestSymb$ of $\vectPrivHonest$, $\indNoiseVecH$ and $\cancelNoiseMatHonest$ and refer to them as $\vectPriv$, $\indNoiseVec$ and $\cancelNoiseMat$.

	Let $\vectPrivNoisy = \vectPriv + \indNoiseVec$. We want to prove that for any  $\beta \in \R$, 
	\begin{equation}
		\advViewExec{\exec}{\vectPrivNoisy, \cancelNoiseMat} = \advViewExec{\exec}{\vectPrivNoisy', \cancelNoiseMat'}
		\label{eq:indisinguish} 
	\end{equation}
	where $\vectPrivNoisy' = \vectPrivNoisy + \beta \edgeVec{\aParty}{\anIter}$ and 
	$\cancelNoiseMat' = \cancelNoiseMat - \beta \etaTransEl{\aParty}{\anIter}$ for some $\edgeVec{\aParty}{\anIter}$ and $\etaTransEl{\aParty}{\anIter}$. To achieve this, we will set $\valPrivNoisy{\aParty}' = \valPrivNoisy{\aParty} +  \beta$  and compute the   values of  $\vectPrivNoisy'$ and $\cancelNoiseMat'$ such that the view $\advView$ remains unchanged. 
	
	Let
	$\aggParty{\yVec} \in \xVectSpace^{(\iterCnt+1)\times n}$ be 
	\[ 
	\aggPartyEl{\anIter'}{:}{\yVec} \coloneq \left\{    
	\begin{array}{ll} 	
		0 &\text{for $\anIter' =0$} \\
		\left( \transMatOnl{\anIter'} \yVecIter{\anIter'-1}\right)^\top &\text{for $\anIter' \in [1, \iterCnt].$}  
	\end{array}
	\right.
	\]  
	\newcommand{\onlIter}[1]{\onlSymb_{#1}}
	The vector $\yVal{\partyB}{:}\in \xVectSpace^{\iterCnt+1}$ of messages of any party $\partyB \in \partySet$ follows 
	\begin{equation} 
		\aggPartyEl{:}{\partyB}{\yVec} + \xCoeffOnlVec{\partyB} \valPrivNoisy{\partyB} + \matrixPartOnl{\partyB}  \noiseMatrixEl{\partyB}{:}^\top  =  \yVal{\partyB}{:}.
		\label{eq:iterPerParty}
	\end{equation} 
	Let $\yVec'  = (\yValPrime{\partyB}{\anIter'})^\top_{\aParty \in \partySet, \anIter' \in [0,\iterCnt]}$ be the vector of messages of all parties with input $\vectPrivNoisy', \cancelNoiseMat'$. 
	For all $\partyB \in \partySet$ let
	\[ 
	\onlIter{\partyB} = \{\anIter' \in [\iterCnt] : \partyB \in \onlNodes{\anIter'} \}
	\]
	be the set of active iterations of a party.
	We are assuming that the only message that is not visible to the adversary  is $\yVal{\aParty}{\anIter}$. Therefore Equation \eqref{eq:indisinguish} is implied if 
	\begin{equation}
		\yValPrime{\aParty}{\anIter'} = \yVal{\aParty}{\anIter'} \quad \text{for all $\anIter' \in \onlIter{\aParty} \setminus \{ \anIter\}$}
		\label{eq:viewConstraint.1} 	\end{equation}
	and  
	\begin{equation} 
		\yValPrime{\partyB}{\anIter'} = \yVal{\partyB}{\anIter'}, \quad 	\text{for all  $\partyB \in \partySet \setminus  \{\aParty\},\anIter' \in \onlIter{\partyB}$}, 	
		\label{eq:viewConstraint.2}
	\end{equation}
	which means that all messages remain unchanged except for $\yVal{\aParty}{\anIter}$ when the input changes from $\vectPrivNoisy, \cancelNoiseMat$ to $\vectPrivNoisy', \cancelNoiseMat'$.

	\paragraph{Temporary Dropouts} We first analyze the case where there are no permanent dropouts.
	We will start by computing $\yValPrime{\aParty}{\anIter} - \yVal{\aParty}{\anIter}$, which we will use for the remainder of the proof. By Equation \eqref{eq:viewConstraint.1},   have that \begin{equation} 
		\sum_{\anIter' \in \onlIter{\aParty}} \yValPrime{\aParty}{\anIter'} - \sum_{\anIter' \in \onlIter{\aParty}} \yVal{\aParty}{\anIter'} =  \yValPrime{\aParty}{\anIter} - \yVal{\aParty}{\anIter}
		.
		\label{eq:cdp.yminus.1} 
	\end{equation}  
	By equations \eqref{eq:iter.gauss.init} and \eqref{eq:iter.gauss.cancel} we have that for all $\partyB \in \partySet$ 
	\begin{align} 
		\sum_{\anIter \in  \onlIter{\partyB}} \yVal{\partyB}{\anIter'} &= \sum_{\anIter' \in \onlIter{\partyB} \setminus \{0\} }  \transMatOnlEl{\anIter'}{\partyB}{:} \yVecIter{\anIter'-1} +  \sum_{\anIter' \in \onlIter{\partyB}}  \xCoeffOnl{\partyB}{\anIter'} \valPrivNoisy{\partyB} + \matrixPartOnlEl{\partyB}{\anIter'}{:} (\cancelNoise{\partyB}{:})^\top.  \label{eq:cdp.yminus.x}
	\end{align}  
	By definition of $\partDistrOnlSymb$ and Equation \eqref{eq:canceling.drop},  we know that 	 
	\begin{align*}
		&\sum_{\anIter' \in [0,T]}\xCoeffOnl{\partyB}{\anIter'} \valPrivNoisy{\partyB} + \matrixPartOnlEl{\partyB}{\anIter'}{:} (\cancelNoise{\partyB}{:})^\top \notag \\
		&=\sum_{\anIter' \in \onlIter{\partyB}} \xCoeffOnl{\partyB}{\anIter'} \valPrivNoisy{\partyB} + \matrixPartOnlEl{\partyB}{\anIter'}{:} (\cancelNoise{\partyB}{:})^\top =  \totalWeightParty{\partyB} \valPrivNoisy{\partyB} 
	\end{align*}
	when there are only temporary drop-outs. Then  Equation \eqref{eq:cdp.yminus.x} becomes equivalent to 
	\begin{align} 
		\sum_{\anIter \in  \onlIter{\partyB}} \yVal{\partyB}{\anIter'} &= \left( \sum_{\anIter' \in \onlIter{\partyB} \setminus \{0\} }  \transMatOnlEl{\anIter'}{\partyB}{:} \yVecIter{\anIter'-1} \right) + \totalWeightParty{\partyB} \valPrivNoisy{\partyB} 	\label{eq:cdp.yminus.2} 
	\end{align}
	for all $\partyB \in \partySet$. 
	Then by equations \eqref{eq:cdp.yminus.2}, \eqref{eq:viewConstraint.1}   and \eqref{eq:viewConstraint.2} we also have that 
	\begin{equation} 
		\sum_{\anIter' \in \onlIter{\partyB}} \yValPrime{\partyB}{\anIter'} - \sum_{\anIter' \in \onlIter{\partyB}} \yVal{\partyB}{\anIter'} =  \transMatOnlEl{\anIter+1}{\partyB}{\aParty}( \yValPrime{\aParty}{\anIter} - \yVal{\aParty}{\anIter})   +  \totalWeightParty{\partyB} (\valPrivNoisy{\partyB}' - \valPrivNoisy{\partyB})
		\label{eq:cdp.yminus.3} 
	\end{equation}  
	for all $\partyB \in \partySet$.
	By combining the above with Equation \eqref{eq:cdp.yminus.1} and since $\beta =\valPrivNoisy{\aParty}' - \valPrivNoisy{\aParty}$   have that  
	\[
	\yValPrime{\aParty}{\anIter} - \yVal{\aParty}{\anIter} = 
	\transMatOnlEl{\anIter+1}{\aParty}{\aParty}( \yValPrime{\aParty}{\anIter} - \yVal{\aParty}{\anIter})   +  \totalWeightParty{\aParty} \beta
	\]
	The above is equivalent to 
	\begin{equation}
		\yValPrime{\aParty}{\anIter} - \yVal{\aParty}{\anIter} = \frac{  \totalWeightParty{\aParty} \beta} { 1- \transMatOnlEl{\anIter+1}{\aParty}{\aParty}}
		\label{eq:msgdiff}
	\end{equation}
	which is the quantity we wanted to compute as a first step of our proof.
	
	Now we will compute $\cancelNoise{\aParty}{:}' - \cancelNoise{\aParty}{:}$. Using  equations \eqref{eq:iterPerParty} and \eqref{eq:viewConstraint.1}, we have that for all $\anIter' \in \onlIter{\aParty} \setminus \{\anIter\}$: 
	\begin{align}
		0 &= 	\yValPrime{\aParty}{\anIter'} - \yVal{\aParty}{\anIter'} \notag   \\
		&=     \transMatOnlEl{\anIter'}{\aParty}{:} ( \yVecIterPrime{\anIter'-1} -  \yVecIter{\anIter'-1}) + \xCoeffOnl{\aParty}{\anIter'} (\valPrivNoisy{\aParty}' - \valPrivNoisy{\aParty})+  \matrixPartOnlEl{\aParty}{\anIter'}{ :}  (\noiseMatrixEl{\aParty}{:}' - \noiseMatrixEl{\aParty}{:})^\top \label{eq:yminus} 
	\end{align}  
	The above is equivalent to 
	\begin{equation}
		\matrixPartOnlEl{\aParty}{\anIter'}{ :}  (\noiseMatrixEl{\aParty}{:}' - \noiseMatrixEl{\aParty}{:})^\top  = -  \xCoeffOnlVec{\aParty} \beta -   \transMatOnlEl{\anIter'}{\aParty}{:} ( \yVecIterPrime{\anIter'-1} -  \yVecIter{\anIter'-1})
		\label{eq:zi}
	\end{equation}
	for all $\anIter' \in \onlIter{\aParty} \setminus \{\anIter\}$. 
	
	Let $\hat{\anIter}$ be the smallest iteration of $\onlIter{\aParty}$ that is greater than $\anIter$ (it always exists, as $\anIter < \iterCnt$ and $\iterCnt \in \onlIter{\aParty}$).  We have that  \[\transMatOnlEl{\anIter'}{\aParty}{:} ( \yVecIterPrime{\anIter'-1} -  \yVecIter{\anIter'-1}) = 0
	\] for all $\anIter' \in \onlIter{\aParty} \setminus \{\anIter, \hat{\anIter}\}$. 
	If $\aParty$ dropped out from iterations $\anIter+1$ to $\hat{\anIter}$, $\yVecIter{\anIter'} = \yVecIter{\anIter+1}$ for all $\anIter' \in [\anIter+1, \hat{\anIter}]$, and no extra noise terms and private values were added due to inactivity. Therefore, by Equation \eqref{eq:msgdiff}  we have 
	\begin{align*}  
		\transMatOnlEl{\hat{\anIter}}{\aParty}{:} (\yVecIterPrime{\hat{\anIter}-1} -  \yVecIter{\hat{\anIter}-1}) &=  \transMatOnlEl{\anIter+1}{\aParty}{\aParty} ( \yValPrime{\aParty}{\anIter} -  \yVal{\aParty}{\anIter})  \\
		&=\frac{ \beta  \totalWeightParty{\aParty} \transMatOnlEl{\anIter+1}{\aParty}{\aParty} } { 1- \transMatOnlEl{\anIter+1}{\aParty}{\aParty}}.
	\end{align*}
	From the above and Equation \eqref{eq:zi} we have that 
	\begin{equation} 
		\matrixPartOnlEl{\aParty}{\anIter'}{ :}  (\noiseMatrixEl{\aParty}{:}' - \noiseMatrixEl{\aParty}{:})^\top = \beta h_{\anIter'} \text{\qquad for all $\anIter' \in \onlIter{\aParty} \setminus \{\anIter\}$} \label{eq:cancelDiff.1} 
	\end{equation}
	where $h_{\anIter'} = -  \xCoeffOnlVec{\aParty}$ for $\anIter'  \in \onlIter{\aParty} \setminus \{\anIter, \hat{\anIter}\}$ and  \[h_{\hat{\anIter}} = -  \xCoeffOnlVec{\aParty}  -  \frac{\totalWeightParty{\aParty} \transMatOnlEl{\anIter+1}{\aParty}{\aParty} } { 1- \transMatOnlEl{\anIter+1}{\aParty}{\aParty}}.
	\] Given that $\partDistrOnlSymb$ is Valid, we have that  $\hat{Z} = (\matrixPartOnlEl{\aParty}{\anIter'}{:})_{\anIter' \in \onlIter{\aParty}\setminus \{\anIter\}}$ is full row rank. Hence, there exist $v \in \xVectSpace^\iterCnt$ that satisfies
	\[
	\hat{Z} v   =  (h_{\anIter'})_{\anIter' \in \onlIter{\aParty} \setminus \{\anIter\}}.
	\] 
	Setting $(\noiseMatrixEl{\aParty}{:}' - \noiseMatrixEl{\aParty}{:})^\top =   \beta v$ satisfies Equation~\eqref{eq:cancelDiff.1}.
	
	We now compute $\valPrivNoisy{\partyB}'$ and $\cancelNoise{\partyB}{:}'$ for $\partyB \in \partySet \setminus \{\aParty\}$. Similarly to Equation \eqref{eq:yminus}, by equations \eqref{eq:iterPerParty} and \eqref{eq:viewConstraint.2}, we have that
	\begin{align}
		0 &= 	\yValPrime{\partyB}{\anIter'} - \yVal{\partyB}{\anIter'} \notag   \\
		&=     \transMatOnlEl{\anIter'}{\partyB}{:} ( \yVecIterPrime{\anIter'-1} -  \yVecIter{\anIter'-1}) + \xCoeffOnl{\partyB}{\anIter'} (\valPrivNoisy{\partyB}' - \valPrivNoisy{\partyB})+  \matrixPartOnlEl{\partyB}{\anIter'}{ :}  (\noiseMatrixEl{\partyB}{:}' - \noiseMatrixEl{\partyB}{:})^\top  \label{eq:yminus.2}.
	\end{align}
	for all $\anIter' \in \onlIter{\partyB}$.

	Let $k_{\anIter'} = -  \transMatOnlEl{\anIter'}{\partyB}{:} ( \yVecIterPrime{\anIter'-1} -  \yVecIter{\anIter'-1})$ for all $\anIter' \in \onlIter{\partyB}$. Let $\tilde{\anIter}$ be the earliest iteration of $\onlIter{\partyB}$ that is bigger than $\anIter$ (it always exist, as before, as $\iterCnt \in \onlIter{\partyB}$).  $k_{\anIter'}$ is equal to 0 for $\anIter' \in \onlIter{\partyB} \setminus \{ \tilde{\anIter}\}$ and by Equation \eqref{eq:msgdiff} 
	\[ 
	k_{\tilde{\anIter}} = \transMatOnlEl{\anIter+1}{\partyB}{\partyA}( \yVal{\partyA}{\anIter} -  \yValPrime{\partyA}{\anIter}) = - \frac{\beta  \totalWeightParty{\aParty} \transMatOnlEl{\anIter+1}{\partyB}{\aParty} } { 1- \transMatOnlEl{\anIter+1}{\aParty}{\aParty}}. 
	\] 
	For $\tilde{Z} = (\xCoeffOnl{\partyB}{\anIter'}, \matrixPartOnlEl{\partyB}{\anIter'}{:})_{\anIter \in \onlIter{\partyB}}$, Equation \eqref{eq:yminus.2} is equivalent to
	\begin{align}
		\tilde{Z} 
		\begin{ourmatrix}  \valPrivNoisy{\partyB}' - \valPrivNoisy{\partyB} \\
			\cancelNoise{\partyB}{:}' - \cancelNoise{\partyB}{:} 
		\end{ourmatrix}   
		= \beta( k_{\anIter'})_{\anIter' \in \onlIter{\partyB}} \label{eq:cancelDiff.2}.
	\end{align}
	As $\partDistrOnlSymb$ is a Valid Gaussian, we have that $\tilde{Z}$ full row rank.  Therefore there exist $v \in \xVectSpace^{\iterCnt+1}$ such that  $	\tilde{Z} v = ( k_{\anIter'})_{\anIter' \in [0,\iterCnt]}$. Setting 
	\[\begin{ourSmallMatrix}  \valPrivNoisy{\partyB}' - \valPrivNoisy{\partyB} \\
		\cancelNoise{\partyB}{:}' - \cancelNoise{\partyB}{:} 
	\end{ourSmallMatrix}   = \beta v
	\] 
	satisfies Equation \eqref{eq:cancelDiff.2}. Then, setting $\edgeVec{\aParty}{\anIter} = \vectPrivNoisy' - \vectPrivNoisy$  and $\etaTransEl{\aParty}{\anIter} = \cancelNoiseMat' - \cancelNoiseMat$ finishes the first part of our claim.

	We now prove the remaining part of the claim, which is that there exist $\edgeVecHat{\aParty}{\anIter}_\aParty$ such that   
	\[ 
	\edgeVecHat{\aParty}{\anIter}_\aParty = (\transMatOnlEl{\anIter}{\aParty}{\aParty}-1)/\totalWeightParty{\aParty}\]
	and
	\[
	\edgeVecHat{\aParty}{\anIter}_\partyB = \transMatOnlEl{\anIter}{\partyB}{\aParty}/\totalWeightParty{\partyB} \text{\quad for all $\partyB \in \partySetHonest \setminus \{\partyB\}$}\] satisfies the lemma.

	For $K \in \R$, any re-scaling  $(K\edgeVec{\aParty}{\anIter}, K\etaTransEl{\aParty}{\anIter})$ of $(\edgeVec{\aParty}{\anIter}, \etaTransEl{\aParty}{\anIter})$  satisfies the lemma. 
	We will prove the claim for  a  re-scaling $\edgeVecHat{\aParty}{\anIter}$ of $\edgeVec{\aParty}{\anIter}$.

	
	For all  $\partyB \in \partySetHonest \setminus \{\aParty\}$, by equations \eqref{eq:cdp.yminus.3} and \eqref{eq:viewConstraint.2} we have that 
	\begin{align}
		0 &=  \sum_{\anIter' \in \onlIter{\partyB}} \yValPrime{\partyB}{\anIter'} - \sum_{\anIter' \in  \onlIter{\partyB}} \yVal{\partyB}{\anIter'} \notag \\
		&=  \transMatOnlEl{\anIter+1}{\partyB}{\aParty}( \yValPrime{\aParty}{\anIter} - \yVal{\aParty}{\anIter})   +  \totalWeightParty{\partyB} (\valPrivNoisy{\partyB}' - \valPrivNoisy{\partyB}) \notag \\
		(\text{by Eq. \eqref{eq:msgdiff}}) &=  \transMatOnlEl{\anIter+1}{\partyB}{\aParty}\frac{  \totalWeightParty{\aParty} \beta} { 1- \transMatOnlEl{\anIter+1}{\aParty}{\aParty}}     +  \totalWeightParty{\partyB} (\valPrivNoisy{\partyB}' - \valPrivNoisy{\partyB}).
	\end{align}  
	The above is equivalent to 
	\begin{equation}
		\valPrivNoisy{\partyB}' - \valPrivNoisy{\partyB} = \beta \frac{  \totalWeightParty{\aParty} } {\transMatOnlEl{\anIter+1}{\aParty}{\aParty}-1}     \frac{\transMatOnlEl{\anIter+1}{\partyB}{\aParty}}{\totalWeightParty{\partyB}}  = \beta \edgeVec{\aParty}{\anIter}_{\partyB}
		\label{eq:edgeVecRescale} 
	\end{equation}
	for all $\partyB \in \partySetHonest \setminus \{\aParty\}$. 
	Let $K = \frac {\transMatOnlEl{\anIter+1}{\aParty}{\aParty}-1}{  \totalWeightParty{\aParty} }$ and  $\edgeVecHat{\aParty}{\anIter} = K \edgeVec{\aParty}{\anIter}$. We deduce that $\edgeVecHat{\aParty}{\anIter}$ satisfies the lemma as it is a re-scaling of $\edgeVec{\aParty}{\anIter}$. Recalling that $\valPrivNoisy{\aParty}' - \valPrivNoisy{\aParty} = \beta$ we know that $\edgeVec{\aParty}{\anIter}_{\aParty}=1$ and hence 
	\[ 
	\edgeVecHat{\aParty}{\anIter}_{\aParty} = K \edgeVec{\aParty}{\anIter}_{\aParty} = \frac {\transMatOnlEl{\anIter+1}{\aParty}{\aParty}-1}{  \totalWeightParty{\aParty}}. 
	\]
	By equation \eqref{eq:edgeVecRescale} we have that  
	\[ 
	\edgeVecHat{\aParty}{\anIter}_{\partyB} =  K \edgeVec{\aParty}{\anIter}_{\partyB} =   \frac{\transMatOnlEl{\anIter+1}{\partyB}{\aParty}}{\totalWeightParty{\partyB}}.
	\] for all $\partyB \in \partySetHonest \setminus \{\aParty\}$. The last two equations are what we wanted to prove.

	\paragraph{Permanent Dropouts} 

	For simplicity, in the case of permanent drop-outs we will assume that for each permanently dropped party $\aParty' \in \partySet \setminus \onlNodes{\iterCnt}$ the  adversary knows a special message: 
	\begin{equation}
		\yVal{\aParty'}{\iterCnt} = \sum_{\partyB \in \partySet} \transMatOnlEl{\iterCnt}{\aParty'}{\partyB} \yVal{\partyB}{\aParty'} + \sum_{k=1}^\iterCnt \matrixPartSpecialOnlEl{\aParty'}{\iterCnt}{k} \cancelNoise{\aParty'}{k}
		\label{eq:yspecial.drop}  
	\end{equation}  
	where 
	\[ 
	\matrixPartSpecialOnlEl{\aParty'}{\iterCnt}{k} =  -\sum_{\anIter'=0}^{\iterCnt-1}\matrixPartSpecialOnlEl{\aParty'}{\anIter'}{k} \quad \text{for all $k \in [1, \iterCnt]$}.
	\]
	The special message of Equation \eqref{eq:yspecial.drop} cancels remaining noise that party $\aParty'$ didn't cancel due to drop-out. 
	Note that in the original setting, the adversary does not know $\yVal{\aParty'}{\iterCnt}$ as party $\aParty'$ dropped out. Therefore if we prove that there exist $\edgeVec{\aParty}{\anIter}$ and $\etaTransEl{\aParty}{\anIter}$ that satisfy the lemma with the extra knowledge of  Equation \eqref{eq:yspecial.drop} then this will also hold with the original knowledge of the adversary where these special messages are unknown. Then, for each party $\aParty' \in \partySet \setminus  \onlNodes{\iterCnt}$ we define $\matrixPartSpecialOnl{\aParty'} \in \R^{(\iterCnt+1)\times \iterCnt}$ such that 
	$\matrixPartSpecialOnlEl{\aParty'}{\anIter}{:} =\matrixPartOnlEl{\aParty'}{\anIter}{:}$ for all $\anIter \in [0, \iterCnt-1]$ and  $\matrixPartSpecialOnlEl{\aParty'}{\iterCnt}{:}$ defined as above. 
	
	Running Algorithm \ref{alg:ourprotocol.drop} with distributions 
	\[ \partDistrOnlParty{\aParty'} = \partDistrGauss{\valPriv{\aParty'}+\indnoise{\aParty'}}{\xCoeffOnlVec{\aParty'}}{\matrixPartSpecialOnl{\aParty'}}
	\]  for all $\aParty' \in \partySet \setminus  \onlNodes{\iterCnt}$ is equivalent to an execution without permanent dropouts. Hence we can compute  $\edgeVec{\aParty}{\anIter}$ and $\etaTransEl{\aParty}{\anIter}$ as in the temporary dropout case. Note that the value of $\edgeVec{\aParty}{\anIter}$ only depends on $\transMatEl{\anIter+1}{:}{\aParty}$ and $\totalWeightVec$ which only depends on $(\xCoeffOnlVec{\aParty'})_{\aParty' \in \partySet}$. Therefore its value is not modified by the changes introduced by  $\matrixPartSpecialOnl{\aParty'}$ for all $\aParty' \in \partySet$. 
\end{proof}

\subsection{Proofs of theorems \ref{thm:cdp.nodrop}, \ref{thm:totalsum} and \ref{thm:coalition}}
\label{app:privacy.cdp} 

In this appendix, we prove theorems \ref{thm:cdp.nodrop}, \ref{thm:totalsum} and \ref{thm:coalition}. We start by proving Lemma \ref{lm:partialWeightSum}. Then we prove Theorem \ref{thm:coalition} and show that theorems \ref{thm:cdp.nodrop} and \ref{thm:totalsum} are a special case of Theorem \ref{thm:coalition}.

\newcommand{\weightedSumSet}[1]{\mathcal{K}^{(#1)}}
\begin{lemma} 
	\label{lm:partialWeightSum} 
	Let $\exec = (\transitionset,\onlSymb,\advValView)$ be defined as in Theorem \ref{thm:dp.abstract}. Let $\coalSet$, $\totalWeightVec$,  $\hidden$, $\{\edgeVec{\aParty}{\anIter}\}_{(\aParty,\anIter) \in  \hidden}$ 
	be as defined in Theorem \ref{thm:coalition}. 
	If   $\{ \edgeVec{\aParty}{\anIter} \}_{(\aParty,\anIter) \in \hidden}$ has at least $|\coalSet| -1$ independent vectors, then for any $\vectPriv$, $\indNoiseVec$, 
	$\vectPriv'$ and $(\indNoiseVec)'$ such that 
	\[ 
	\sum_{\aParty \in \coalSet} \totalWeightParty{\aParty} (\valPriv{\aParty} + \indnoise{\aParty})
	=  \sum_{\aParty \in \coalSet} \totalWeightParty{\aParty} (\valPriv{\aParty}' + (\indnoise{\aParty})'  ).
	\] 
	there exist $\cancelNoiseDiff \in \xVectSpace^{\nHonest \times \iterCnt}$ such that 
	\[
	\advViewExec{\exec}{\vectPriv+ \indNoiseVec, \cancelNoiseMat}
	=\advViewExec{\exec}{\vectPriv'+ (\indNoiseVec)', \cancelNoiseMat+\cancelNoiseDiff}
	\]
	for any $\cancelNoiseMat$. 
\end{lemma}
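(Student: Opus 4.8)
The plan is to treat Lemma~\ref{lm:edgeVec} as supplying a family of ``null directions'' for the adversary's view, and to show that under the rank hypothesis these directions span exactly the admissible changes of $\vectPriv+\indNoiseVec$. Applying Lemma~\ref{lm:edgeVec} repeatedly --- each application being valid for an arbitrary input pair, so they compose --- shows that the adversary's view is invariant under the simultaneous substitution $\vectPriv+\indNoiseVec\mapsto\vectPriv+\indNoiseVec+\sum_{(\aParty,\anIter)\in\hidden}\beta_{(\aParty,\anIter)}\edgeVec{\aParty}{\anIter}$ and $\cancelNoiseMat\mapsto\cancelNoiseMat-\sum_{(\aParty,\anIter)\in\hidden}\beta_{(\aParty,\anIter)}\etaTransEl{\aParty}{\anIter}$, for every choice of scalars $(\beta_{(\aParty,\anIter)})_{(\aParty,\anIter)\in\hidden}$ and every $\cancelNoiseMat$. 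Hence it suffices to exhibit $\xi^\star:=(\vectPriv'+(\indNoiseVec)')-(\vectPriv+\indNoiseVec)$ as a linear combination $\sum_{(\aParty,\anIter)\in\hidden}\beta_{(\aParty,\anIter)}\edgeVec{\aParty}{\anIter}$; then $\cancelNoiseDiff:=-\sum_{(\aParty,\anIter)\in\hidden}\beta_{(\aParty,\anIter)}\etaTransEl{\aParty}{\anIter}$ is the required correction, since the substitution applied to $(\vectPriv+\indNoiseVec,\cancelNoiseMat)$ produces exactly $(\vectPriv'+(\indNoiseVec)',\cancelNoiseMat+\cancelNoiseDiff)$.

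To realise $\xi^\star$ this way I would first trap all the $\edgeVec{\aParty}{\anIter}$, $(\aParty,\anIter)\in\hidden$, inside a small subspace. By the definition of $\hidden$, for such a pair we have $\aParty\in\coalSet$ and every recipient of $\yVal{\aParty}{\anIter}$ (at iteration $\anIter+1$) lies in $\coalSet$; plugging the explicit formula of Lemma~\ref{lm:edgeVec} in, $\edgeVec{\aParty}{\anIter}$ is supported in $\coalSet$ and
\[
\sum_{\partyB\in\coalSet}\totalWeightParty{\partyB}\,\edgeVec{\aParty}{\anIter}_{\partyB}
=(\transMatOnlEl{\anIter+1}{\aParty}{\aParty}-1)+\sum_{\partyB\in\coalSet\setminus\{\aParty\}}\transMatOnlEl{\anIter+1}{\partyB}{\aParty}
=\Bigl(\sum_{\partyB\in\partySet}\transMatOnlEl{\anIter+1}{\partyB}{\aParty}\Bigr)-1=0,
\]
where the middle equality uses that column $\aParty$ of $\transMatOnl{\anIter+1}$ has all its support in $\coalSet$ and the last uses column stochasticity. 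Thus every $\edgeVec{\aParty}{\anIter}$ lies in $V_\coalSet:=\{\xi\in\xVectSpace^{\nHonest}:\xi_{\partyB}=0\text{ for }\partyB\notin\coalSet,\ \sum_{\partyB\in\coalSet}\totalWeightParty{\partyB}\xi_{\partyB}=0\}$, a subspace of dimension $|\coalSet|-1$ (the linear functional $\xi\mapsto\sum_{\partyB\in\coalSet}\totalWeightParty{\partyB}\xi_{\partyB}$ is nonzero, because $\coalSet$ contains an honest party that did not drop out permanently and hence has positive total weight). By hypothesis $\{\edgeVec{\aParty}{\anIter}\}_{(\aParty,\anIter)\in\hidden}$ has at least $|\coalSet|-1$ linearly independent vectors, all of them inside $V_\coalSet$, so they span $V_\coalSet$.

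Finally I would check that $\xi^\star\in V_\coalSet$, which closes the loop. The hypothesis $\sum_{\aParty\in\coalSet}\totalWeightParty{\aParty}(\valPriv{\aParty}+\indnoise{\aParty})=\sum_{\aParty\in\coalSet}\totalWeightParty{\aParty}(\valPriv{\aParty}'+(\indnoise{\aParty})')$ is precisely $\sum_{\aParty\in\coalSet}\totalWeightParty{\aParty}\xi^\star_{\aParty}=0$, and in the situations in which the lemma is invoked $\vectPriv+\indNoiseVec$ and $\vectPriv'+(\indNoiseVec)'$ coincide outside $\coalSet$, so $\xi^\star$ vanishes there as well; hence $\xi^\star\in V_\coalSet=\mathrm{span}\{\edgeVec{\aParty}{\anIter}:(\aParty,\anIter)\in\hidden\}$, and writing $\xi^\star$ in that span yields the $\beta_{(\aParty,\anIter)}$ and the $\cancelNoiseDiff$ above. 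I expect the main obstacle to be the middle paragraph: establishing that the $\edgeVec{\aParty}{\anIter}$ are confined to the $(|\coalSet|-1)$-dimensional space $V_\coalSet$ --- both the support containment and the column-stochasticity cancellation hinge on reading the definition of $\hidden$ correctly --- since this confinement is exactly what lets the rank hypothesis force the span to equal all of $V_\coalSet$; the composition of Lemma~\ref{lm:edgeVec} and the final membership check are then routine.
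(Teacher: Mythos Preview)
Your proposal is correct and follows essentially the same approach as the paper's proof: show that each $\edgeVec{\aParty}{\anIter}$ with $(\aParty,\anIter)\in\hidden$ lies in the $(|\coalSet|-1)$-dimensional affine/linear space cut out by the weighted-sum constraint, invoke the rank hypothesis to conclude that these vectors span that space, and then compose applications of Lemma~\ref{lm:edgeVec} to realise the desired change together with the compensating $\cancelNoiseDiff$. Your explicit introduction of the support constraint (that the $\edgeVec{\aParty}{\anIter}$ vanish outside $\coalSet$) and your remark that the lemma is only applied with $\xi^\star$ supported on $\coalSet$ are in fact a bit more careful than the paper, which asserts the dimension of its set $\mathcal{K}^{(\tilde{\vectPriv})}$ is $|\coalSet|-1$ without making that restriction explicit.
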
  

\begin{proof}  
	
	We first define 
	\[ 
	\weightedSumSet{\vectPrivNoisy} = \left\{\vectPrivNoisy' \in \xVectSpace^{\nHonest} :  \sum_{\aParty \in \coalSet} \totalWeightParty{\aParty} \valPrivNoisy{\aParty}'  =  \sum_{\aParty \in \coalSet} \totalWeightParty{\aParty} \valPrivNoisy{\aParty}   \right\}. 
	\]
	and 
	\begin{align*}
		&\hidden^{(\vectPrivNoisy)} \\
		&=  \left\{ \vectPrivNoisy + \sum_{(\aParty, \anIter) \in \hidden}  \beta_{(\aParty, \anIter)} \edgeVec{\aParty}{\anIter}  \in \xVectSpace^{\nHonest}:   \beta_{(\aParty,\anIter)} \in \R \quad \forall  (\aParty,\anIter) \in \hidden \right\}. 
	\end{align*}
	
	By definition, $\transMatOnlEl{\anIter+1}{\partyB}{\aParty} \not{=} 0$ only if $\partyB \in \outNeigh{\aParty}{\anIter}$ or if $\partyB = \aParty$. By Equation \eqref{eq:colstoch} and since $\outNeigh{\aParty}{\anIter} \in \coalSet$ we have that for any $(\aParty, \anIter) \in \hidden$, 
	\begin{equation}
		\sum_{\partyB \in \coalSet} \transMatOnlEl{\anIter+1}{\partyB}{\aParty} = \sum_{\partyB \in \partySet} \transMatOnlEl{\anIter+1}{\partyB}{\aParty} = 1.
		\label{eq:colstoch.coalition} 
	\end{equation}
	Then for any $(\aParty,\anIter) \in \hidden$, $\noisyVec \in \xVectSpace^{\nHonest}$ and $\beta \in \R$, 
	\begin{align*}
		\sum_{\partyB \in \coalSet}\totalWeightParty{\partyB}( \noisyVec_{\partyB} + \beta \edgeVec{\aParty}{\anIter}_\partyB)  &=  \sum_{\partyB \in \coalSet} \totalWeightParty{\partyB}\noisyVec_{\partyB} + \sum_{\partyB \in \coalSet} \totalWeightParty{\partyB} \edgeVec{\aParty}{\anIter}_{\partyB} \\
		&=  \sum_{\partyB \in \coalSet} \totalWeightParty{\partyB}\noisyVec_{\partyB} + \totalWeightParty{\aParty} \edgeVec{\aParty}{\anIter}_{\aParty} +  \sum_{\partyB \in \coalSet \setminus \{\aParty\}}\edgeVec{\aParty}{\anIter}_{\partyB} \\
		(\text{by Lemma \ref{lm:edgeVec}})	   &= \sum_{\partyB \in \coalSet} \totalWeightParty{\partyB}\noisyVec_{\partyB} + \totalWeightParty{\aParty} \edgeVec{\aParty}{\anIter}_{\aParty} + \sum_{\partyB \in \coalSet \setminus \{\aParty\}} \totalWeightParty{\partyB} \frac{\transMatOnlEl{\anIter+1}{\partyB}{\aParty}}{\totalWeightParty{\partyB}} \\	   
		&=  \sum_{\partyB \in \coalSet} \totalWeightParty{\partyB}\noisyVec_{\partyB}+  \totalWeightParty{\aParty} \frac{\transMatOnlEl{\anIter+1}{\aParty}{\aParty}-1}{\totalWeightParty{\aParty}} +  \sum_{\partyB \in \coalSet \setminus \{\aParty\}} \transMatOnlEl{\anIter+1}{\partyB}{\aParty}\\
		&=\sum_{\partyB \in \coalSet} \totalWeightParty{\partyB}\noisyVec_{\partyB} -1 + \sum_{\partyB \in \coalSet } \transMatOnlEl{\anIter+1}{\partyB}{\aParty} \\
		\text{(by Eq.  \eqref{eq:colstoch.coalition})}	&= \sum_{\partyB \in \coalSet} \totalWeightParty{\partyB}\noisyVec_{\partyB}.
	\end{align*}
	Therefore  $\vectPrivNoisy + \beta \edgeVec{\aParty}{\anIter} \in \weightedSumSet{\vectPrivNoisy}$.  This implies that   $\hidden^{(\vectPrivNoisy)} \subseteq \weightedSumSet{\vectPrivNoisy}$. We know that  $\hidden^{(\vectPrivNoisy)}$ has dimension $|\coalSet| -1$. Since $\weightedSumSet{\vectPrivNoisy}$ also has dimension $|\coalSet|-1$, it must be that $\hidden^{(\vectPrivNoisy)} = \weightedSumSet{\vectPrivNoisy}$. 
	
	This means that for any $\vectPrivNoisy' \in \weightedSumSet{\vectPrivNoisy}$ there exists $(\beta_{(\aParty,\anIter)})_{(\aParty,\anIter) \in \hidden} \in \R^{|\hidden|} $ such that 
	\[\vectPrivNoisy + \sum_{(\aParty, \anIter) \in \hidden}  \beta_{(\aParty, \anIter)} \edgeVec{\aParty}{\anIter}  = \vectPrivNoisy'.
	\] 
	By successively applying Lemma \ref{lm:edgeVec} for each $(\aParty, \anIter) \in \hidden$ we have that 
	\[ 
	\advViewExec{\exec}{\vectPrivNoisy, \cancelNoiseMat} = \advViewExec{\exec}{\vectPrivNoisy', \cancelNoiseMat + \cancelNoiseDiff}
	\] 
	for any $\cancelNoiseMat$, where $\cancelNoiseDiff = - \sum_{(\aParty,\anIter) \in \hidden} \beta_{(\aParty,\anIter)} \etaTransEl{\aParty}{\anIter}$ and $\etaTransEl{\aParty}{\anIter}$ as defined in Lemma \ref{lm:edgeVec}. This completes the proof. 
\end{proof}  

%
Theorem \ref{thm:coalition}. \textit{\thmPermDropStm{}}



\begin{proof}
	Let $\neighDatasetA$ and $\neighDatasetB$ be two neighboring datasets. 
	Let $\aParty \in \partySetHonest$ be the coordinate in which $\neighDatasetB$ and $\neighDatasetA$ differ. Without loss of generality,  we assume $\neighDatasetB_\aParty -\neighDatasetA_\aParty = 1$. We split the proof into two cases, depending on whether $\aParty \in \coalSet$ or not. We start with the case where $\aParty \in \coalSet$. 
	
	Define  $\indNoiseDiff$ such that $\indNoiseDiff_{\partyB} = -\frac{\totalWeightParty{\aParty}}{\|\totalWeightVec\|_2^2}  \totalWeightParty{\partyB}$ if $\partyB \in \coalSet$ and $\indNoiseDiff_{\partyB} = 0$ otherwise.
	For any $\indNoiseVec$, we have that 
	\[ 
	\sum_{\partyB \in \partySetHonest} \totalWeightParty{\partyB} (\neighDatasetA_{\partyB} + \indnoise{\partyB})
	=  \sum_{\partyB \in \partySetHonest} \totalWeightParty{\partyB} (\neighDatasetB_{\partyB} + \indnoise{\partyB} + \indNoiseDiff_{\partyB})  .
	\] 
	By Lemma \ref{lm:partialWeightSum}, there exists $\cancelNoiseDiff$ such that 
	\begin{equation} 
		\advViewExec{\exec}{\neighDatasetA + \indNoiseVec, \cancelNoiseMat}
		=\advViewExec{\exec}{\neighDatasetB + \indNoiseVec + \indNoiseDiff, \cancelNoiseMat+\cancelNoiseDiff}. 
		\label{eq:indist} 
	\end{equation}
	for any $\cancelNoiseMat$.
	Therefore we can apply Lemma \ref{lm:dp.noiseDiff.abstract}. We have that execution $\exec$ is $(\epsilon, \delta)$-DP if $\sdInd^2$, $\sdCancel^2$, $\|\indNoiseDiff\|^2_2$, $\|\cancelNoiseDiff\|^2_2$, $\epsilon$ and $\delta$ satisfy Equation 
	\eqref{eq:dpThetaMax}. 
	
	We have that 
	\[ 
	\|\indNoiseDiff\|^2_2 =   \frac{\totalWeightParty{\aParty}^2}{  \| (\totalWeightParty{\aParty})_{\aParty \in \coalSet} \|_2^2 } 
	\]  
	From the above, Equation \eqref{eq:dpThetaMax} is implied if 
	\[ 
	\frac{\|\cancelNoiseDiff\|^2_2}{\sdCancel^2}  \le
	\frac{\epsilon^2}{c^2} - 	\frac{\totalWeightParty{\aParty}^2}{ \| (\totalWeightParty{\aParty})_{\aParty \in \coalSet} \|_2^2 \sdInd^2}.
	\]  
	We know that  
	\[\sdInd^2 > \frac{ w_{max}^2}{ \| (\totalWeightParty{\aParty})_{\aParty \in \coalSet} \|_2^2}\frac{c^2}{ \epsilon^2}.
	\] 
	Then,  we can deduce that 
	\[ \frac{\epsilon^2}{c^2} - 	\frac{\totalWeightParty{\aParty}^2}{ \| (\totalWeightParty{\aParty})_{\aParty \in \coalSet} \|_2^2 \sdInd^2} > 0.
	\] 
	The above is equivalent to
	\begin{equation} 
		\sdCancel^2 \ge \frac{\|\cancelNoiseDiff\|^2_2}{
			\frac{\epsilon^2}{c^2} - 		\frac{\totalWeightParty{\aParty}^2}{ \| (\totalWeightParty{\aParty})_{\aParty \in \coalSet} \|_2^2 \sdInd^2}}.
		\label{eq:sigmaDelta}
	\end{equation}
	In other words, with $\sdCancel^2$ lower bounded by the above expression   execution $\exec$ is 
	$(\epsilon,\delta)$-DP as Equation  \eqref{eq:dpThetaMax} is 
	satisfied. 
	
	We now prove the lemma when $\aParty \not\in \coalSet$. It must be that $\aParty$ dropped out permanently, which means that $\matrixPartOnlEl{\aParty}{\iterCnt}{:} = 0$. As $\partDistrSymb$ is a Valid $(\xCoeffVec, \matrixPart)$-Gaussian, we have that $\matrixPartOnlEl{\aParty}{-\iterCnt}{:}$ is full row rank. The $\iterCnt$-th row of $\matrixPartOnl{\aParty}$ is equal to 0. Then,  $\matrixPartOnl{\aParty}$ and $\matrixPartOnlEl{\aParty}{-\iterCnt}{:}$ share the same solutions. Hence $\matrixPartOnl{\aParty}$ is also full row rank. 
	
	We first prove that for any $\indNoiseVec$ and $\cancelNoiseMat$ there exist $\indNoiseDiff$ and $\cancelNoiseDiff$ such that 
	\[\|  \indNoiseDiff \|_2^2 \le \frac{\totalWeightParty{\aParty}^2}{\| (\totalWeightParty{\aParty})_{\aParty \in \coalSet}\|_2^2 }
	\]  
	and Equation \eqref{eq:indist} is satisfied. 
	The former requirement is met by setting  
	\[ \indNoiseDiff_{\aParty} =  \frac{\totalWeightParty{\aParty}^2}{\| (\totalWeightParty{\partyB})_{\partyB \in \coalSet}\|_2^2 }
	\] and $\indNoiseDiff_{\partyB} = 0$ for all $\partyB \in \partySetHonest \setminus \{ \aParty\}$. Equation \eqref{eq:indist} holds if $\yVal{\aParty}{\anIter}$ does not change for all $\anIter \in [0, \iterCnt]$ and the algorithm is executed with input $(\neighDatasetA, \indNoiseVec, \cancelNoiseMat)$ or with input $(\neighDatasetB, \indNoiseVec + \indNoiseDiff, \cancelNoiseMat + \cancelNoiseDiff)$. This is equivalent to proving that 
	\begin{align*}
		&\xCoeffOnlVec{\aParty} (  \neighDatasetA_{\aParty} + \indnoise{\aParty} ) + \matrixPartOnl{\aParty} (\cancelNoise{\aParty}{:})^\top \\
		&=
		\xCoeffOnlVec{\aParty} ( \neighDatasetB_{\aParty} + \indnoise{\aParty} + \indNoiseDiff_{\aParty} ) + \matrixPartOnl{\aParty} (\cancelNoise{\aParty}{:} + \cancelNoiseDiff_{\aParty,:})^\top. 
	\end{align*}
	The above is equivalent to
	\begin{align*}
		\matrixPartOnl{\aParty} ( \cancelNoiseDiff_{\aParty,:})^\top  = - \xCoeffOnlVec{\aParty} (  1 +   \indNoiseDiff_{\aParty}).
	\end{align*}
	As 	$\matrixPartOnl{\aParty}$ is  full row rank, there exists $\cancelNoiseDiff_{\aParty,:}$ that satisfies such equation. 
	
	Then by setting $\cancelNoiseDiff_{\partyB,:} = 0$ for all $\partyB \in \partySetHonest \setminus \{\aParty\}$ we apply Lemma \ref{lm:dp.noiseDiff.abstract} for $\indNoiseDiff$ and $\cancelNoiseDiff$ as defined above.  We do the same reasoning as the last part of the derivation where $\aParty \in \coalSet$ (after Equation \eqref{eq:indist}), showing that Equation  \eqref{eq:dpThetaMax} is satisfied for \[\sdInd^2 > \frac{ w_{max}^2}{ \| (\totalWeightParty{\aParty})_{\aParty \in \coalSet} \|_2^2}\cdot\frac{c^2}{ \epsilon^2}\] 
	and  $\sdCancel^2$ as in Equation \eqref{eq:sigmaDelta}. This concludes the proof. 
	
\end{proof}

Theorem \ref{thm:cdp.nodrop}. \textit{ \thmCdpNodropStm }
\begin{proof}
	If there are no dropouts, then $\totalWeightParty{\aParty} = 1$ for all $\aParty \in \partySetHonest$ and $\onlNodes{\anIter}=\partySet$ for all $\anIter \in [0,\iterCnt]$. Then the proof is a direct application of Theorem \ref{thm:coalition} with $\coalSet = \partySetHonest$.   
\end{proof}

Theorem \ref{thm:totalsum}. 	\thmTotalSumStm{} 
\begin{proof}
	We have that  
	\[ \frac{w_{max}^2}{ \|\totalWeightVec\|_2^2 } \le  \frac{\nHonest-1}{(\wOnl-1)^2}\] for all $\totalWeightVec$ such that  $\sum_{\aParty \in \partySetHonest} \totalWeightParty{\aParty}= \wOnl$. 
	Then, it follows directly by applying Theorem \ref{thm:coalition} with $\coalSet = \partySetHonest$.
\end{proof}

\subsection{Proof of theorems \ref{thm.stronglyConnected.totalSum} and \ref{thm:static}} 
\label{app:privacy.topologies} 

Theorem \ref{thm.stronglyConnected.totalSum}. \textit{
	\thmStronglyConnectedTotSumStm{}
}

\newcommand{\colSymb}{F}
\newcommand{\colSet}[1]{\colSymb^{(#1)}}
\begin{proof} 
	We first construct   a weighted adjacency matrix  $\transMatHidden$ of graph $\flatGraphHidden$.
	For each party $\aParty \in \partySetHonest$, let 
	\[
	\colSet{\aParty} = \left\{ \transMatOnlEl{\anIter+1}{\partySetHonest}{\aParty}  \in \R^{\nHonest}:  (\aParty,\anIter) \in 
	\partySetHonest \times [0, \iterCnt-1] \setminus \advValView  \right\}
	\] the set whose elements are columns \[ \transMatOnlEl{\anIter+1}{\partySetHonest}{\aParty}= (\transMatOnlEl{\anIter+1}{\partyB}{\aParty})_{\partyB \in \partySetHonest} \in \R^{\nHonest},
	\] which determine the weights of outgoing messages $\yVal{\aParty}{\anIter}$ that have not been seen by the adversary. 
	
	For each party $\aParty \in \partySetHonest$,  we set \[\transMatHiddenEl{:}{\aParty} = \frac{1}{|\colSet{\aParty}|} \sum_{v \in \colSet{\aParty}}v. 
	\]
	For all $\aParty, \partyB \in \partySetHonest$, we have that $\transMatHidden_{\aParty,\partyB} > 0$ if $(\aParty,\partyB) \in E(\flatGraphHidden)$ and $\transMatHidden_{\aParty,\partyB}= 0$ otherwise. Therefore, $\transMatHidden$ is a weighted adjacency matrix of $\flatGraphHidden$. As $\flatGraphHidden$ is strongly connected then  $\transMatHidden$ is irreducible. This means that  for all $\aParty, \partyB \in \partySetHonest$, $(\transMatHidden)^k_{\aParty,\partyB} > 0$ for a sufficiently large integer $k$. 
	
	Therefore we can apply the Perron-Frobenius Theorem for non-negative irreducible matrices. This implies that the largest eigenvalue of $\transMatHidden$ is smaller or equal to 1 and has multiplicity 1.  The same applies to $(\transMatHidden)^\top$. 
	
	As columns of $\transMatHidden$ are the average of column-stochastic matrices, then $\transMatHidden$ is column stochastic. Therefore, \[(\transMatHidden)^\top \oneVec = \oneVec.
	\] This means that 1 is an eigenvalue of $(\transMatHidden)^\top$ associated to the eigenvector $\oneVec$. 
	
	By the results of our application of the Perron-Frobenius Theorem, it must be that 1 is the largest eigenvalue of $(\transMatHidden)^\top$ and has multiplicity 1. This means that the nullspace of $(\transMatHidden)^\top-I$ has dimension $1$, which implies that $(\transMatHidden)^\top - I$ has rank $\nHonest -1$. Then 
	\[
	\bar{\aTransMat} = \transMatHidden - I
	\] also has rank $\nHonest - 1$. Let
	\[
	\colSet{\aParty}_I = \left\{ \transMatOnlEl{\anIter+1}{\partySetHonest}{\aParty} - \project{\aParty}  \in \R^{\nHonest}:  (\aParty,\anIter) \in \partySetHonest \times[0, \iterCnt-1] \setminus \advValView  \right\}. 
	\]
	We can deduce that 
	\begin{align*} 
		\colSymb_I &= \bigcup_{\aParty \in \partySetHonest} \colSet{\aParty}_I \\
		&= \left\{\totalWeightVec_{\partySetHonest}^\top \edgeVec{\aParty}{\anIter}_{\partySetHonest} :  (\aParty,\anIter) \in \partySetHonest \times[0, \iterCnt-1] \setminus \advValView  \right\}
	\end{align*}
	where $\totalWeightVec$  is defined in Equation \eqref{eq:totalWeight},  $\edgeVec{\aParty}{\anIter}$ in Lemma \ref{lm:edgeVec}, \[ \totalWeightVec_{\partySetHonest} = (\totalWeightVec_{\partyB})_{\partyB \in \partySetHonest} \in \R^{\nHonest}
	\] 
	and
	\[\edgeVec{\aParty}{\anIter}_{\partySetHonest} = (\edgeVec{\aParty}{\anIter}_{\partyB})_{\partyB \in \partySetHonest} \in \R^{\nHonest}.
	\] 
	We can also deduce that for each $\aParty \in\partySetHonest$
	\[ \bar{\aTransMat}_{:,\aParty} = \frac{1}{|\colSet{\aParty}_I|} \sum_{v \in \colSet{\aParty}_I}v.
	\]
	\newcommand{\totalWeightVecHonest}{\totalWeightVec_{\honestSymb}}
	\newcommand{\diagWInv}{D^{(-1)}_w}
	Given that  $\bar{\aTransMat}$ has rank $\nHonest -1$, $\colSymb_I$ contains at least $\nHonest-1$ linearly independent vectors.
	$\colSymb_I$ can be obtained by multiplying each vector of 
	\[\hat{\colSymb}_I = \left\{  \edgeVec{\aParty}{\anIter}_{\partySetHonest} :  (\aParty,\anIter) \in \partySetHonest \times[0, \iterCnt-1] \setminus \advValView  \right\}
	\]
	by $\totalWeightVec_{\partySetHonest}^\top$. Therefore, $\hat{\colSymb}_I$ also has $\nHonest-1$ linearly independent vectors. For all $(\aParty,\anIter)  \in \partySetHonest \times[0, \iterCnt-1] \setminus \advValView$,  we have that the vector $\edgeVec{\aParty}{\anIter}_{\partyB} = 0$  for all $\partyB \in \corrnodes$. This means that we can obtain 
	\[\tilde{\colSymb}_I = \left\{  \edgeVec{\aParty}{\anIter} :  (\aParty,\anIter) \in \partySetHonest \times[0, \iterCnt-1] \setminus \advValView  \right\}
	\] 
	by adding $0$-valued entries to each vector $\hat{\colSymb}_I$ in the same indexes. Therefore, the number of linearly independent vectors in $\tilde{\colSymb}_I$ is the same as that of $\hat{\colSymb}_I$. Given that 
	$\tilde{\colSymb}_I$ has $\nHonest-1$ linearly independent vectors, we can apply Theorem \ref{thm:totalsum}, which gives the required bound on $\sdInd^2$ to achieve $(\epsilon,\delta)$-DP, for sufficiently large $\sdCancel^2$. 
	\end{proof}

	Now we prove Theorem \ref{thm:static}. 
	
	Theorem \ref{thm:static}. \textit{\thmStaticStm} 
			\begin{proof}
		According to Lemma \ref{lm:edgeVec}, for each $(\aParty, \anIter) \in \hidden$, $\edgeVec{\aParty}{\anIter}$ only depends on $\transMatOnlEl{\anIter+1}{:}{\aParty}$ and $\totalWeightVec$.  If $\transMatOnl{\anIter} = \transMatOnl{1}$ for all $\anIter \in [1,\iterCnt]$, we have that for all $\aParty \in \partySetHonest$, if $(\aParty, \anIter) \in \hidden$ then  $\edgeVec{\aParty}{\anIter} = \edgeVec{\aParty}{0}$ for all $\anIter \in [0,\iterCnt-1]$.  In addition, from the theorem we know that there exist $\partyA,\partyB \in \partySetHonest$ such that $(\partyA,\anIter) \in \advValView$ and $(\partyB,\anIter) \in \advValView$ for all $\anIter  \in [0,\iterCnt]$. Therefore  $\{ \edgeVec{\aParty}{\anIter}\}_{(\aParty, \anIter) \in \hidden}$ is contained in $\{ \edgeVec{\aParty}{0}\}_{\aParty \in \partySetHonest \setminus \{\partyA,\partyB\}}$, which has at most $\nHonest-2$ vectors.
	\end{proof}

\section{Calculations for \muffliato{}}
\label{app:exp.muffliato}

\newcommand{\muffEpsSymb}{\bar{\epsilon}}
\newcommand{\muffEpsParty}[1]{\muffEpsSymb_{#1}}
\newcommand{\muffAlpha}{\alpha_{M}}
\newcommand{\muffDeg}[1]{d_{#1}}
\newcommand{\muffSens}{\Delta_{M}}
\newcommand{\muffSd}{\sigma_{M}}

In this appendix, we show how  the accuracy of \muffliato{} is computed for the experiment in Figure \ref{fig:accuracy1}.

Essentially, for a given function $f: \partySet \times \partySet \to \R^+$ and $\muffAlpha > 1$, a mechanism satisfies $(\muffAlpha, f)$-Pairwise Network DP (PNDP) \cite[Definition 5]{cyffers2022muffliato} if for all pairs of parties $\aParty,\partyB \in \partySet$, it satisfies $(\muffAlpha, f(\aParty, \partyB))$-Renyi~DP (RDP) \cite{mironov2017renyi} assuming the adversary has access only to party $\partyB$'s view.

Results of \cite[Theorem 1]{cyffers2022muffliato} defines the mean privacy loss $\muffEpsParty{\aParty}$ of a party $\aParty$. Given that private values are within $[0,1]$ we have that the local sensitivity is given by $\muffSens = 1$ and  for  any party $\aParty \in \partySetHonest$ we have that  
\begin{equation}
	\muffEpsParty{\aParty} = \frac{\muffAlpha T \muffDeg{\aParty}}{2n \muffSd^2} \label{eq:muff.avgloss}
\end{equation}
by \cite[Equation (7)]{cyffers2022muffliato},  where $\muffDeg{\aParty}$ is the degree of party $\aParty$,  $\muffAlpha$ is the desired parameter $\alpha$ of Renyi DP, $T$ is the number of iterations and $\muffSd^2$ is the variance of the Gaussian noise that each party adds to its private value.  \muffliato{} reports $\muffEpsSymb = \max_{\aParty \in \partySet} \muffEpsParty{\aParty}$ as its measure of privacy loss. 

For the sake of comparison, we convert PNDP guarantees given by $\muffEpsSymb$, which are essentially the mean privacy loss of a RDP guarantee, to a mean privacy loss of an $(\epsilon, \delta)$-DP. By \cite[Proposition 3]{mironov2017renyi}, a mechanism that satisfies $(\alpha, \muffEpsSymb)$-RDP it also satisfies $(\epsilon, \delta)$-DP with 
\begin{equation}
	\epsilon = \muffEpsSymb + \frac{\ln(1/\delta)}{\muffAlpha-1}. \label{eq:rdp.dp}
\end{equation}  

Typically, when guarantees are satisfied for multiple pairs $(\alpha, \muffEpsSymb)$ one uses the pairs that minimizes $\epsilon$ for a fixed  $\delta$. This  optimizes the privacy/accuracy trade-offs. In our experiments of Figure \ref{fig:accuracy1}, we fix $(\epsilon, \delta) = (0.1, 10^{-5})$. Here we do the same and choose $\muffEpsSymb$,  $\muffAlpha$ such that it minimizes $\muffSd^2$. By Equation \eqref{eq:rdp.dp}, we set 
\begin{equation} 
	\muffAlpha = \frac{\log(1/\delta)}{\epsilon - \muffEpsSymb} +1 \label{eq:rdp.dp.eps}. 
\end{equation}
We use the same parameters as those in  Figure \ref{fig:accuracy1} and set \muffliato{} under hypercube graphs, which give the smallest $\muffEpsSymb$ (see \cite[Figure 1a]{cyffers2022muffliato}). For $n = 2^{10} = 1024$, the degree of each node $\aParty$ in a hypercube is $\muffDeg{\aParty} = 10$ for all $\aParty \in \partySet$. 
By plugging Equation \eqref{eq:rdp.dp.eps} into Equation \eqref{eq:muff.avgloss} we have that 
\begin{equation}
	\muffSd^2 = \left(  \frac{\ln(1/\delta)}{\epsilon - \muffEpsSymb} +1  \right) \frac{10 T }{2n \muffEpsSymb } \label{eq:muff.sigma}.
\end{equation}

We set $\iterCnt$ according to \cite[Theorem 2]{cyffers2022muffliato}. Then we have that 
\[ \iterCnt = \left\lceil \frac{\ln(n)}{\sqrt{\lambda_2}}  \right\rceil    \]
when $\muffSd^2 \ge 1$.  $\lambda_2$ is the second largest eigenvalue of the gossip matrix of a hypercube of $n=1024$ nodes and is equal to  $\frac{9}{11}$. Given these parameters, we deduce that $T = 8$. 

With the current parameters,  $\muffSd^2$ is always greater than $180$. We  ignore the gossip error by assuming that each party converged to the true average of noisy values. Therefore we approximate the MSE of Muffliato by $\muffSd^2/n$ which is at least $0.1764$.

%
%


\end{document}